\newif\ifdvi
\newcommand{\np}{{\em NP}\xspace} 
\newcommand{\nphard}{\np-hard\xspace}
\newtheorem{thm}{Theorem}[section]
\newtheorem{theorem}[thm]{Theorem}
\newtheorem{lemma}[thm]{Lemma}
\theoremstyle{definition} 
\theoremstyle{remark}  \newtheorem{remark}{Remark}}
\newenvironment{proofof}[1]{\begin{proof}[Proof of #1]}{\end{proof}}
\DeclareMathOperator{\argmin}{argmin}
\newcommand{\sm}{\ensuremath{\setminus}}
\newcommand{\es}{\ensuremath{\emptyset}}
\newcommand{\sse}{\subseteq}
\newcommand{\R}{\ensuremath{\mathbb R}}
\newcommand{\C}{\ensuremath{\mathcal{C}}}
\newcommand{\T}{\ensuremath{\mathcal T}}
\newcommand{\e}{\ensuremath{\epsilon}}
\newcommand{\ve}{\ensuremath{\varepsilon}}
\newcommand{\gm}{\ensuremath{\gamma}}
\newcommand{\ld}{\ensuremath{\lambda}}
\newcommand{\Ld}{\ensuremath{\Lambda}}
\newcommand{\al}{\ensuremath{\alpha}}
\newcommand{\tht}{\ensuremath{\theta}}
\newcommand{\dt}{\ensuremath{\delta}}
\newcommand{\into}{\ensuremath{\mathrm{in}}}
\newcommand{\out}{\ensuremath{\mathrm{out}}}
\newcommand{\ceil}[1]{\ensuremath{\left\lceil#1\right\rceil}}
\newcommand{\poly}{\operatorname{\mathsf{poly}}}
\newcommand{\OPT}{\ensuremath{\mathit{OPT}}}
\newcommand{\opt}{\ensuremath{\mathsf{Opt}}}
\newcommand{\val}{\ensuremath{\mathsf{Val}}}
\newcommand{\iopt}{\ensuremath{O^*}}
\newcommand{\pcval}{\ensuremath{\mathsf{PCC}}}
\newcommand{\pcw}{PCW\xspace}
\newcommand{\lb}{\ensuremath{\mathsf{LB}}}
\newcommand{\ub}{\ensuremath{\mathsf{UB}}}
\newcommand{\UBrooted}{\ub{}1}
\newcommand{\UBptprw}{\ub{}2}
\newcommand{\UBptpwt}{\ub{}3}
\newcommand{\UBcychalfb}{\ub{}4}
\newcommand{\UBcycbetter}{\ub{}5}
\newcommand{\UBptp}{\ensuremath{\ub\text{-}\mathsf{P2P}}\xspace}
\newcommand{\UBcyc}{\ensuremath{\ub\text{-}\mathsf{Cyc}}\xspace}
\newcommand{\bc}{\ensuremath{\overline{c}}}
\newcommand{\bpi}{\ensuremath{\overline{\pi}}}
\newcommand{\bG}{\ensuremath{\overline{G}}}
\newcommand{\bV}{\ensuremath{\overline{V}}}
\newcommand{\bS}{\ensuremath{\overline{S}}}
\newcommand{\bT}{\ensuremath{\overline{T}}}
\newcommand{\bA}{\ensuremath{\overline{A}}}
\newcommand{\bx}{\ensuremath{\overline{x}}}
\newcommand{\by}{\ensuremath{\overline{y}}}
\newcommand{\bz}{\ensuremath{\overline{z}}}
\newcommand{\bp}{\ensuremath{\overline{p}}}
\newcommand{\tc}{\ensuremath{\widetilde{c}}}
\newcommand{\tD}{\ensuremath{\widetilde{D}}}
\newcommand{\tE}{\ensuremath{\widetilde{E}}}
\newcommand{\tpi}{\ensuremath{\widetilde{\pi}}}
\newcommand{\ty}{\ensuremath{\widetilde{y}}}
\newcommand{\met}{\ensuremath{\mathsf{met}}}
\newcommand{\lpname}[1][]{\ensuremath{(\ref{eq:primal}_{{#1}})}\xspace}
\newcommand{\lpopt}{\OPT}
\newcommand{\dualub}{\ensuremath{Y}}
\newcommand{\creg}{\ensuremath{c^{\mathsf{reg}}}}
\newcommand{\binlb}{\ensuremath{\mathsf{low}}}
\newcommand{\binub}{\ensuremath{\mathsf{high}}}
\newcommand{\budg}{\ensuremath{L}}
\newcommand{\rorientlp}[1][]{(\text{RO-P}_{#1})\xspace}
\newcommand{\rorientlpopt}[1][]{\ensuremath{\OPT_{\rorientlp[#1]}}}
\newcommand{\vrp}{\ensuremath{\mathsf{VRP}}\xspace}
\newcommand{\mlp}{\ensuremath{\mathsf{MLP}}\xspace}
\newcommand{\kmlp}{\ensuremath{k\text{-}\mlp}\xspace}
\newcommand{\ptp}{\ensuremath{\mathsf{P2P}}\xspace}
\newcommand{\cyc}{C}
\newcommand{\rewd}{\ensuremath{\Pi}}
\newcommand{\ropt}{\ensuremath{\rewd^*}}
\newcommand{\roptptp}{\ensuremath{\rewd^{*\ptp}}}
\newcommand{\roptcyc}{\ensuremath{\rewd^{*\mathsf{Cyc}}}}
\newcommand{\regret}{\ensuremath{R}}
\newcommand{\opttsp}{\ensuremath{\mathsf{TSP}\opt}\xspace}
\newcommand{\bUBptp}{\ensuremath{\overline{\UBptp}}\xspace}
\title{Combinatorial Algorithms for Rooted Prize-Collecting Walks and Applications to
  Orienteering and Minimum-Latency Problems} 
\author{Sina Dezfuli\thanks{{\tt dezfuli@ualberta.ca}.  
    Dept. of Computer Science, Univ. Alberta, Edmonton, AB T6G 2E8.} 
\and 
    Zachary Friggstad\thanks{{\tt zacharyf@ualberta.ca}.  
    Dept. of Computer Science, Univ. Alberta, Edmonton, AB T6G 2E8.
    Supported by an NSERC Discovery grant and an NSERC Discovery Accelerator Supplement Award.}
\and 
    Ian Post\thanks{{\tt ian@ianpost.org}}
\and
    Chaitanya Swamy\thanks{{\tt cswamy@uwaterloo.ca}.  
    Dept. of Combinatorics and Optimization, Univ. Waterloo, Waterloo, ON N2L 3G1. 
    Supported in part by NSERC grant 327620-09 and an NSERC Discovery Accelerator
    Supplement Award.}
}
\date{}
\begin{document}

\maketitle
\def\thepage{}
\thispagestyle{empty}

\begin{abstract}
%
We consider the rooted {\em prize-collecting walks} (\pcw)
problem, wherein we seek a collection $\C$ of 
rooted walks having minimum {\em prize-collecting cost}, which is the (total cost of 
walks in $\C$) + (total node-reward of the nodes not visited by any walk in $\C$).
This problem arises naturally as the Lagrangian relaxation of
both {\em orienteering}, 
where we seek a length-bounded walk of maximum reward,
and 
the {\em $\ell$-stroll problem}, where we seek a minimum-length walk covering at least
$\ell$ nodes. 
Our main contribution is to devise a {\em simple, combinatorial algorithm}
for the \pcw problem that 
returns a rooted tree whose prize-collecting cost is at most the {\em optimum value} 
of the prize-collecting walks problem. 
This result applies to both directed and
undirected graphs, and holds for arbitrary nonnegative edge costs.

We present two applications of our result, where we utilize our algorithm to develop
combinatorial approximation algorithms for two fundamental vehicle-routing
problems (\vrp{}s): (1) orienteering; 
and (2) {\em $k$-minimum-latency problem} (\kmlp), wherein we seek to cover all nodes
using $k$ paths starting at a prescribed root node, so as to minimize the sum of the node
visiting times.
Our combinatorial algorithm allows us to sidestep the part where we 
solve a preflow-based LP in the LP-rounding algorithms of~\cite{FriggstadS17}
for orienteering, and in the state-of-the-art $7.183$-approximation algorithm for
\kmlp in~\cite{PostS15}. Consequently, we obtain combinatorial implementations of these
algorithms (with the same approximation factors).
Compared to algorithms that achieve the current-best approximation factors for
orienteering and \kmlp, our algorithms have substantially improved running time, and 
achieve approximation guarantees that match (\kmlp), or are slightly worse
(orienteering) than the current-best approximation factors for these problems.

We report various computational results for our resulting (combinatorial
implementations of) orienteering algorithms, 
which show that the algorithms perform quite well in practice, both in
terms of the quality of the solution they return, 
as also the upper bound they yield on the orienteering optimum (which is
obtained by leveraging the workings of our \pcw algorithm).
\end{abstract}

\newpage
\pagenumbering{arabic} \normalsize

\section{Introduction} \label{intro}
Vehicle-routing problems (\vrp{}s) are a rich class of optimization problems that find
various applications, and have been extensively studied in the 
Operations Research and Computer Science literature (see, 
e.g.,~\cite{TothV02}.) 
Broadly speaking, we can distinguish between two types of vehicle-routing problems: one
where resource constraints require us to select which set of nodes or clients to visit 
{\em and} plan a suitable route(s) for visiting these clients; and the other, where we
have a {\em fixed set} of clients, and seek the most effective route(s) for visiting
these clients.

We consider two prominent and well-motivated problems in these two categories: 
(1) {\em orienteering}~\cite{BlumCKLMM07,BansalBCM04,ChekuriKP12,FriggstadS17}, belonging 
to the first category, wherein nodes have associated rewards for visiting them, 
and we seek a length-bounded path that collects maximum reward; and 
(2) {\em minimum-latency problems} (\mlp{}s)~\cite{BC+94,ChaudhuriGRT03,PostS15},
belonging to the second category, wherein, we 
seek one or more rooted paths to visit a given set of clients so as to minimize the sum of
the client visiting times (i.e., the total latency).
Besides its appeal as a natural and clean way of capturing resource constraints in a \vrp, 
the fundamental nature of orienteering stems from the fact that 
it often naturally arises as a subroutine when solving other \vrp{}s, both in
approximation algorithms---e.g., for minimum-latency problems
(see~\cite{FakcharoenpholHR07,ChakrabartyS16,PostS15}),  
TSP with time windows~\cite{BansalBCM04}, 
\vrp{}s distance bounds~\cite{NagarajanR12} and regret bounds~\cite{FriggstadS14}, as
also in computational methods where orienteering 
corresponds to the ``pricing'' problem encountered in solving set covering/partitioning
LPs (a.k.a configuration LPs) for \vrp{}s via a column-generation or branch-cut-and-price 
method (see, e.g.,~\cite{Dezfuly19}). 
In particular, in various settings (including \mlp{}s, \vrp with distance- and regret-
bounds), we can formulate the \vrp as the problem of covering a set of clients using
suitable paths, and solving this covering problem, approximately via a set-cover approach,
or its corresponding configuration-LP relaxation, then entails solving an orienteering
problem. 

Some recent work on orienteering~\cite{FriggstadS17} and \mlp{}s~\cite{PostS15}, has
led to promising LP-based approaches for tackling these problems, yielding, for
multi-vehicle \mlp{}s, the current-best approximation factors. This approach is based on
moving to a bidirected version of the underlying metric and considering a preflow-based
LP-relaxation for rooted walk(s), 
and using a powerful arborescence-packing result of Bang-Jensen et
al.~\cite{BangjensenFJ95} to decompose an (optimal) LP solution into a convex combination of
arborescences that is ``at least as good'' as the LP solution. Viewing these arborescences
as rooted trees in the undirected graph, one can convert the tree into a rooted
path/cycle by doubling and shortcutting, and the above works show how to leverage the
resulting convex combination of paths/cycles to extract a good solution.

\vspace*{-1ex}
\paragraph{Our contributions and related work.}
We study the {\em prize-collecting walks} (\pcw) problem, which is the problem of
finding a collection $\C$ of $r$-rooted walks in a digraph $G=(V,E)$ with nonnegative edge 
costs 
and node rewards, 
having minimum  
{\em prize-collecting cost}, which is the total cost of the 
walks in $\C$ + the total node-reward of the nodes not visited by any walk in $\C$.
This problem arises as the Lagrangian relaxation of orienteering, and 
a subroutine encountered in \mlp{} algorithms, namely that of finding a rooted path 
of minimum cost covering a certain number of nodes.

Our main contribution is to devise a {\em simple, combinatorial algorithm} for the \pcw
problem that returns a {\em directed tree} (more precisely, an out-arborescence)
rooted at $r$ {\em whose prize-collecting cost is at most the optimal value of the \pcw
problem}. 
To state our result a bit more precisely, we introduce some notation.
Let $G = (V,A)$ be a directed graph with arc-set $A$, arc lengths $c_a\geq 0$ for all 
$a\in A$, and root $r$. Let each node $v \in V$ have a reward or penalty $\pi_v\geq 0$. 
For a multiset of arcs $T$, define 
$c(T) = \sum_{a \in A} c_a\cdot(\text{number of occurrences of $a$ in $T$})$.
Define $\pi(S) = \sum_{v \in S} \pi_v$ for any set of nodes $S$. 
An {\em out-arborescence rooted at $r$} is a subgraph $T$ whose undirected
version is a tree containing $r$, and where every node spanned by $T$ except $r$ has
exactly one incoming arc in $T$; 
we will often abbreviate this to an out-arborescence. 
For any subgraph $T$ of $G$ where all nodes in $V(T)$ are reachable from $r$ in $T$ 
(such as an out-arborescence rooted at $r$), define the {\em prize-collecting cost} of $T$
to be $\pcval(T):=c(T)+\pi(V\sm V(T))$.

We give a combinatorial polynomial-time algorithm \IterPCA (see Section~\ref{comb}), that
finds an out-arborescence $T$ whose prize-collecting cost is at most the prize-collecting
cost of any collection of $r$-rooted walks, i.e.,   
\[
c(T) + \pi(V\setminus V(T)) \le \iopt:=
\min_{\begin{subarray}{l}\text{collections $\C$ of} \\ \text{$r$-rooted walks} \end{subarray}}
\ \Bigl[\sum_{P\in\C} c(P) + \pi\Bigl(V\setminus\bigcup_{P\in\C}V(P)\Bigr)\Bigr] \;.
\]
We actually obtain the stronger guarantee that $\pcval(T)$ is at most the optimal value 
$\OPT$ of a preflow-based LP-relaxation of \pcw \eqref{eq:primal}.

We briefly discuss the ideas underlying our combinatorial algorithm \IterPCA. Our
algorithm and analysis is quite simple, and resembles Edmonds' algorithm for finding a
minimum-cost arborescence. It is based on three main ideas for iteratively simplifying the
instance. 

We observe that if we modify the instance by picking any non-root node $v$, and
subtracting a common value $\tht$ from the cost of all incoming arcs of $v$ and
from $\pi_v$, while ensuring that the new values of these quantities is nonnegative, then
it suffices to prove the desired guarantee for the modified instance. 
Next, by choosing a suitable $\tht_v$ for all all non-root nodes, and modifying costs and
rewards as above, we may assume that in the modified instance, either: 
(a) there is a node $v\in V'$ with zero reward; 
(b) there is a (directed) cycle $Z$ consisting of zero (modified) cost arcs; or
(c) there is an out-arborescence consisting of zero cost arcs.
If (c) applies, then we are done. If (a) or (b) apply, then we argue that may further
simplify the instance as follows: in case (a), we shortcut past $v$ by merging every pair
of incoming and outgoing arcs of $v$ and deleting $v$; in case (b), we
contract the cycle $Z$ and set the reward of the contracted node to be the sum of the
(modified) rewards of nodes in $Z$.
We then recurse on 
the simplified instance. 
We believe that the above result, and the techniques underlying it, are of independent
interest, and will find various applications. 
We present two applications of our result (see Section~\ref{apps}), where we use our  
combinatorial algorithm for \pcw to give combinatorial implementations of the LP-rounding
algorithms in~\cite{FriggstadS17} and~\cite{PostS15} for orienteering and \kmlp 
respectively. 
We now discuss these applications, and in doing so place our main result in the 
context of some extant work.
We say that $x\in\R_+^A$ is an $r$-preflow (or simply preflow),  
if we have $x\bigl(\dt^\into(v)\bigr)\geq x\bigl(\dt^\out(v)\bigr)$ for all non-root nodes
$v$. 
\begin{enumerate}[label=$\bullet$, topsep=0.5ex, noitemsep, leftmargin=*] 
\item Friggstad and Swamy~\cite{FriggstadS17} proposed a novel LP-based approach for
  orienteering, wherein the LP-relaxation searches for a ($r$-) preflow of large
  reward (see \eqref{rorlp} in Appendix~\ref{append-lpbounds}). 
  The first step (and key insight) in their rounding algorithm is 
  to utilize the arborescence-packing result of~\cite{BangjensenFJ95} to cast
  the LP-solution $x$ as a convex combination of arborescences whose expected reward is
  at least the LP-optimum and whose expected cost is at most the length bound, say $B$.
  They leverage this to show that one can then extract a rooted path having reward at
  least (LP-optimum)/$3$ via a simple combinatorial procedure. 

  We show (see Section~\ref{orient-apps}) that one can utilize our algorithm \IterPCA, in
  conjunction with binary 
  search, to obtain the desired convex combination combinatorially, that is, without
  having to solve their LP-relaxation, and thereby obtain a combinatorial
  $3$-approximation.  
  This follows because the \pcw problem is obtained
  by Lagrangifying the ``length at most $B$'' constraint. A standard fine tuning of
  the Lagrangian variable (which affects the node rewards) via binary search then yields
  the desired distribution (over at most two rooted trees). The same ideas also apply and
  yield combinatorial approximation algorithms for other variants of orienteering, such as
  {\em P2P-orienteering} (where the other end-point of  the path is also specified) and
  {\em cycle orienteering} (where we seek a cycle containing $r$.%
 \footnote{Cycle orienteering is not considered in~\cite{FriggstadS17}, but their ideas
   can be easily adapted.})

  While the approximation factor of $3$ does not as yet beat the
  $(2+\e)$-approximation factor for orienteering~\cite{ChekuriKP12}, our algorithm is
  significantly simpler and faster algorithm than prior dynamic-programming (DP) based 
  algorithms for orienteering~\cite{BlumCKLMM07,BansalBCM04,ChekuriKP12}.%
  \footnote{A straightforward implementation of our combinatorial algorithm for
    orienteering takes 
    $O(n^4\cdot K)$ time, where $K$ is the time for binary search. In contrast, the 
    the algorithm in~\cite{ChekuriKP12} has running time at least 
    $O\bigl(n^{1/\ve^2}\cdot K\bigr)$ for obtaining a $\frac{2}{1-\ve}$-approximation;
    thus, $O(n^9\cdot K)$ time for returning a $3$-approximation. The DP-algorithm of Blum
    et al.~\cite{BlumCKLMM07} has running time at least $O(n^5\cdot K)$, and its
    approximation guarantee is no better than $4$. 
  } 
  Moreover, an added subtle benefit of the algorithms in~\cite{FriggstadS17} is that they
  also yield an upper bound on the optimum, which is useful since it can be used to
  evaluate the approximation factor of the solution computed on a per-instance basis. Our
  combinatorial algorithms inherit this benefit, and also provide an upper bound on the
  orienteering optimum.
  
  Our combinatorial algorithm and the associated upper bound may also find use in the
  context of computational methods for solving other \vrp{}s, since (as mentioned earlier)
  orienteering corresponds to the pricing problem that needs to be solved in these
  contexts. Indeed~\cite{Dezfuly19} utilizes our combinatorial algorithm to obtain
  near-optimal solutions to distance-constrained vehicle routing.

  \smallskip
  In Section~\ref{compres}, we undertake an extensive computational study of our 
  combinatorial orienteering algorithms, in order to better understand the performance of
  our algorithms in practice. Our computational experiments show that our algorithms
  perform fairly well in practice---both in terms of the solution computed, and the upper
  bound computed---and much better than that indicated by the theoretical analysis. 


\item Post and Swamy~\cite{PostS15} consider multi-vehicle \mlp{}s. For \kmlp, 
  wherein we seek $k$ rooted paths of minimum total latency that together visit all nodes,
  they devise two $7.183$-approximation algorithms. 
  One of their algorithms (Algorithm 3 in \S6.2~\cite{PostS15}) utilizes a subroutine for
  computing a distribution of rooted trees covering at least $k$ nodes in expectation,
  whose expected cost is at most that of any collection of rooted walks that together
  cover at least $k$ nodes. 
  Lagrangifying the coverage constraint again yields a \pcw problem. Post and
  Swamy~\cite{PostS15} devised an LP-rounding algorithm for this problem, by considering
  its LP-relaxation \eqref{eq:primal}, using arborescence packing to obtain a rooted tree
  with $\pcval(T)$ at most the LP-optimum $\OPT$, and then fine-tuning the node
  rewards via binary search to obtain the desired distribution.
  In particular, for the \pcw problem, they obtain the same guarantee that we do,
  but via solving the LP \eqref{eq:primal}. While not a
  combinatorial algorithm, they dub their resulting \kmlp algorithm a ``more
  combinatorial'' algorithm (as opposed to their other $7.183$-approximation algorithm,
  which needs to explicitly solve a configuration LP).

  We can instead utilize {\em our combinatorial algorithm} to produce the rooted tree $T$ 
  (see Section~\ref{kmlp-apps}); incorporating this within the ``more combinatorial''
  algorithm of~\cite{PostS15} yields a fully and truly combinatorial $7.183$-approximation
  algorithm for \kmlp, which is the state-of-the-art for this problem.

We remark that our result bounding the prize-collecting
cost of the tree $T$ by the prize-collecting cost of {\em any} collection of rooted walks
is a substantial generalization of an analogous result in~\cite{ChaudhuriGRT03}, who
compare against the prize-collecting cost of a {\em single} walk (and specifically in
undirected graphs). As noted in~\cite{PostS15}, the stronger guarantee where we
compare against multiple walks is essential for obtaining guarantees for \kmlp.
\end{enumerate}

\section{LP-relaxation for the prize-collecting-walks problem} \label{lps}
Recall that we are given a directed graph $G = (V,A)$, arc costs $c_a\geq 0$ for all 
$a\in A$, root node $r\in V$, and a reward or penalty $\pi_v\geq 0$ for each node $v$.
(Note that $\pi_r$ is inconsequential, as it does not affect the prize-collecting cost of
any rooted object (out-arborescence, walk); so it will sometimes be convenient
notationally to assume that $\pi_r=0$.)  

Our LP-relaxation \eqref{eq:primal} for prize-collecting walks has
a variable $x_a$ for each arc $a$, which represents the multiplicity of arc $a$ in the
walk-collection, and a variable $p_v$ for each node $v\neq r$, which indicates whether
node $v$ is not covered.
\begin{alignat}{3}
\min & \quad & \sum_{a \in A} c_ax_a + \sum_{v\in V} & \pi_vp_v 
\tag{P} \label{eq:primal} \\
\text{s.t.} && 
x\bigl(\dt^\into(S)\bigr)+p_v & \ge 1 \qquad 
&& \forall S \subseteq V\setminus\{r\}, v\in S \label{eq:coverage} \\
&& 
x\bigl(\dt^\into(v)\bigr) & \ge 
x\bigl(\dt^\out(v)\bigr) \qquad && \forall v\in V\sm\{r\} \label{eq:degree} \\
&& x,p & \ge 0. \notag 
\end{alignat}
Constraint \eqref{eq:coverage} encodes that for every set $S$ of nodes $S$ not containing
the root, and $v\in S$, either $S$ has an incoming arc or we pay the penalty $\pi_v$ for
not visiting $v$. Constraint \eqref{eq:degree} encodes that the in-degree of every node
other than the root is always at least its out-degree,  
so that the solution corresponds to a collection of walks rather than a tree. 
(Note that while we have included the variable $p_r$ above, it does not appear in any
constraint, so we may assume that $p_r=0$ in any feasible solution to \eqref{eq:primal}.)

\section{A combinatorial algorithm} \label{comb}
We now present a combinatorial algorithm for prize-collecting walks based on iteratively
simplifying the instance. 
Recall that $\iopt$ is the minimum value of 
$\bigl[\sum_{P\in\C} c(P) + \pi(V\setminus\bigcup_{P\in\C}V(P))\bigr]$ over all
collections $\C$ of $r$-rooted walks.  
(Recall that a walk may have repeated nodes and arcs, and 
$c(T) = \sum_{a \in A} c_a\cdot(\text{number of occurrences of $a$ in $T$})$
for a multiset of arcs $T$.)
Throughout this section, the root will remain $r$, so will frequently drop $r$ from the
notation used to refer to an instance.
Since we will modify the instance $(G,c,\pi)$ during the course of our algorithm (but not
change the root), we use  
$\iopt(G,c,\pi)$ to denote the above quantity. Also, we use $\lpname[(G,c,\pi)]$ to refer
to the LP-relaxation \eqref{eq:primal} for the instance $(G,c,\pi)$, and $\lpopt(G,c,\pi)$
to denote its optimal value. 
We use 
\mbox{$\pcval(T;G,c,\pi):=c(T)+\pi(V\sm V(T))$} to denote the prize-collecting value of
$T$ under 
arc costs $c$ and penalties $\pi$, where $T$ is a subgraph of $G$ such that all nodes in
$V(T)$ are reachable from $r$ in $T$. Whenever we say optimal solution below, we mean the
optimal walk-collection (i.e., an optimal integral solution to \eqref{eq:primal}). 

Our algorithm \IterPCA resembles Edmond's algorithm for finding a minimum-cost
arborescence, and is 
based on three main ideas for simplifying the instance. However, unlike in the case of
min-cost spanning arborescences, our simplifications do {\em not} leave the problem
unchanged; we really exploit the asymmetry that we seek an out-arborescence but are
comparing its value against the best collection of $r$-rooted 
walks in $(G,c,\pi)$. 

Let $V'=V\sm\{r\}$. 
We observe that we may modify the instance by picking a node $v\in V'$, and
subtracting a common value $\tht$ from the cost of all incoming arcs of $v$ and
from $\pi_v$, while ensuring that the new values of these quantities is nonnegative (see
step~\eqref{modinst}). That is, it suffices to prove the desired guarantee for the
modified instance $(G,\tc,\tpi)$: if $T$ is an out-arborescence with 
$\pcval(T;G,\tc,\tpi)\leq\iopt(G,\tc,\tpi)$, then
$\pcval(T;G,c,\pi)\leq\iopt(G,c,\pi)$ 
(Lemma~\ref{subtract}).   
By choosing a suitable $\tht_v$ for all $v\in V'$ and modifying costs and penalties as
above, we may assume that either: 
(a) there is a node $v\in V'$ with $\tpi_v=0$;
(b) there is a (directed) cycle $Z$ consisting of zero $\tc$-cost arcs; or
(c) there is an out-arborescence consist sing of zero $\tc$-cost arcs.
If (c) applies, then we are done. If (a) or (b) apply, then we argue that may further
simplify the instance as follows. In case (a), we shortcut past $v$ by merging every pair
of incoming and outgoing arcs of $v$ to create a new arc, and delete $v$ (see
steps~\eqref{zeropen-start}--\eqref{zeropen-end}, Lemma~\ref{zeropen}). In case (b), we
contract the cycle $Z$ and set the penalty of the contracted node to be 
$\sum_{v\in V(Z)}\tpi_v$ (see steps~\eqref{zerocyc-start}--\eqref{zerocyc-end2},
Lemma~\ref{zerocyc}). We then recurse on 
the simplified instance. 

An additional feature of our algorithm is that, 
by aggregating the $\tht_v$ values computed by our algorithm
across all recursive calls and translating them suitably to the original graph $G$, we 
obtain a certificate $y=(y_S)_{S\sse V'}$ such that the 
quantity $\dualub=\sum_{S\sse V'}y_S$ is sandwiched between the prize-collecting value
$\pcval(T;G,c,\pi)$ of our solution, and $\iopt(G,c,\pi)$ (which is \nphard to compute).  
(We can in fact strengthen the upper bound on $\dualub$ to 
$\dualub\leq\lpopt(G,c,\pi)$, where recall that $\lpopt(G,c,\pi)$ is the optimal value of
$\lpname[(G,c,\pi)]$; see Theorem~\ref{ybnd}.) 

This property of our algorithm 
is especially useful when we utilize \IterPCA 
to implement approximation algorithms for orienteering (see Section~\ref{orient-apps}),
because there we can utilize $\dualub$ to obtain a suitable {\em upper bound} on the
optimum value of the orienteering problem (and in fact, the optimal value of the
LP-relaxation for orienteering proposed by~\cite{FriggstadS17}). This allows us to obtain an 
{\em instance-wise approximation guarantee} i.e., 
an instance-specific bound on the approximation factor of the solution computed for
each instance. This instance-wise approximation guarantee is often significantly better
than the worst-case approximation guarantee, as is demonstrated by our computational
results (Section~\ref{compres}). Our 
computational results also show that our upper bound is a fairly good (over-)estimate of
the orienteering optimum. We remark that having both (good) lower and upper bounds on the
optimum can be quite useful also for {\em exact} computational methods for orienteering
based on the branch-and-bound method. 

The precise description of our algorithm appears as Algorithm \IterPCA.
By the ``null'' vector below, we mean a vector with no-coordinates.

\begin{function}[ht!]
\caption{IterPCA($G,c,\pi,r$): iterative simplification algorithm for prize-collecting arborescence}
\label{alg:comb}
\KwIn{\pcw instance $\bigl(G=(V,A),c,\pi,r\bigr)$} 
\KwOut{$r$-rooted out-arborescence $T$ in $G$; $y=(y_S)_{S\sse V\sm\{r\}}$ (of polynomial support)}
\SetKwComment{simpc}{// }{}
\SetCommentSty{textnormal}
\DontPrintSemicolon

Let $V'=V\sm\{r\}$, \quad initialize $y\gets\vec{0}$, \ \ $\ty\gets\vec{0}$ \; \label{init}

\lIf{$|V|=1$}{\Return{$(T=\es,\ \text{null vector})$}} 

\uIf{$|V|=2$,\ say $V=\{r, v\}$}{
Set $y_{\{v\}}\gets\min\{c_{r,v},\pi_v\}$ \;
\leIf{$\pi_v>c_{r,v}$}{\Return{$\bigl(T=\{(r,v)\}, y\bigr)$}}{\Return{$(T=\es, y)$}}
}
Set $\tht_v\gets\min\bigl\{\min_{(u,v)\in A}c_{u,v},\pi_v\bigr\}$ for all 
$v\in V'$ \;

For all $v\in V'$, set $\tc_{u,v}\gets c_{u,v}-\tht_v$ for all $(u,v)\in A$,
and $\tpi_v\gets\pi_v-\tht_v$; 
set 
$\tpi_r\gets 0$ \; \label{modinst}

\BlankLine
\uIf{there exists $v\in V'$ with $\tpi_v=0$}{
Set $\bG\gets\bigl(V\sm\{v\}, 
A\sm(\dt^\into(v)\cup\dt^\out(v))\cup\{(u,w): u\in V\sm\{v\}, w\in V\sm\{r,v\}\}\bigr)$ \; 
\label{zeropen-start}

For all $u\in V\sm\{v\}$, $w\in V\sm\{r,v\}$, set 
$\bc_{u,w}\gets\min\{\tc_{u,w},\tc_{u,v}+\tc_{v,w}\}$ \;

Set 
$\bpi\gets\{\tpi_u\}_{u\in V(\bG)}$ \;

$(\bT,\by)\gets\IterPCA(\bG,\bc,\bpi,r)$ 
\; \label{zeropen-rcall}

$\bA\gets\{(u,w)\in\bT:\bc_{u,w}<\tc_{u,w}\}$
\tcp*[r]{note that $\bc_{u,w}=\tc_{uv}+\tc_{v,w}\ \ \forall (u,w)\in\bA$}

$T'\gets\bT\sm\bA\cup\bigcup_{(u,w)\in\bA}\{(u,v),(v,w)\}$ \;

$T\gets$ minimum $\tc$-cost spanning arborescence in $(V(T'),A(T'))$ \; \label{zeropen-end}

Set $\ty_S\gets\by_S$ for all $S\sse V\sm\{r,v\}$ \; \label{zeropen-y}
\BlankLine
}
\uElseIf
{there exists a cycle $Z$ with $r\notin V(Z)$ and $\tc_{u,v}=0$ for all $(u,v)\in A(Z)$}{
Set $\bG\gets$ digraph obtained from $G$ by contracting $Z$ into a single supernode
$u_Z$, removing self-loops, and replacing parallel (incoming or outgoing) arcs incident to
$u_z$ by a single arc \; \label{zerocyc-start} 

Set $\bc_{u,v}\gets\tc_{u,v}$ for all $u\in V\sm V(Z), v\in V\sm V(Z)$ \;

For all $u\in V\sm V(Z)$ such that $\dt^\out(u)\cap\dt^\into(Z)\neq\es$, 
set $\bc_{u,u_z}\gets\min_{(u,v)\in\dt^\into(Z)}\tc_{u,v}$ \;

For all $u\in V'\sm V(Z)$ such that $\dt^\into(u)\cap\dt^\out(Z)\neq\es$,
set $\bc_{u_z,u}\gets\min_{(v,u)\in\dt^\out(Z)}\tc_{v,u}$ \;

Set $\bpi_{u_z}\gets\sum_{v\in V(Z)}\tpi_v$, $\bpi_u\gets\tpi_u$ for all $u\in V\sm V(Z)$ \;

$(\bT,\by)\gets\IterPCA(\bG,\bc,\bpi,r)$ \; \label{zerocyc-rcall}

\If{$u_z\in V(\bT)$}{
Obtain $T'$ from $\bT$ as follows:
replace every arc $a\in\bT$ entering or leaving $u_z$ by the arc in
$G$ entering or leaving $V(Z)$ respectively whose $\tc$-cost defines $\bc_a$; also
add (the nodes and edges of) $Z$ \;

$T\gets$ minimum $\tc$-cost spanning arborescence in $(V(T'),A(T'))$ \; \label{zerocyc-end1}
}
\Else{$T\gets\bT$ \label{zerocyc-end2}}

For each set $\bS\sse V(\bG)\sm\{r\}$, consider the corresponding set $S\sse V'$, which is
$\bS$ if $u_Z\notin\bS$, and $\bS\sm\{u_z\}\cup V(Z)$ otherwise; set $\ty_S\gets\by_S$ \;
\label{zerocyc-y}


\BlankLine
}
\Else{Let $T\gets$ arborescence spanning $V$ with $\tc_{u,v}=0$ for all 
$(u,v)\in A(T)$ \label{zeroarb}} 

\BlankLine

Set $y_{\{v\}}\gets \ty_{\{v\}}+\tht_v$ for all $v\in V'$, and $y_S\gets \ty_S$ for all
other subsets $S\sse V'$.

\Return{$(T, y)$} \label{finalarb}
\end{function}

\paragraph{Analysis.} We prove the following guarantee.

\begin{theorem} \label{combthm} \label{pcarbthm}
On any input $(G,c,\pi)$, algorithm \IterPCA runs in polynomial time and returns an
out-arborescence $T$ and vector $y$ such that 
$\pcval(T;G,c,\pi)\leq\sum_{S\sse V\sm\{r\}}y_S\leq\iopt(G,c,\pi)$. 
\end{theorem}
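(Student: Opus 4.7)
The plan is to prove Theorem~\ref{combthm} by strong induction on $|V|$. The base cases $|V|\in\{1,2\}$ follow by direct inspection: the returned $T$ is an out-arborescence (possibly empty), and the computed $y$ satisfies $\pcval(T;G,c,\pi)=\sum_S y_S=\iopt(G,c,\pi)$ (or $\pcval(T)=\sum_S y_S=0\leq\iopt$).

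For the inductive step, my first move is to reduce from $(G,c,\pi)$ to the modified instance $(G,\tc,\tpi)$ of step~\eqref{modinst}. Two facts (Lemma~\ref{subtract}) drive the reduction: since any out-arborescence $T$ uses exactly one arc of $\dt^\into(v)$ per non-root $v\in V(T)$, we have
\[
\pcval(T;G,c,\pi)=\pcval(T;G,\tc,\tpi)+\sum_{v\in V'}\tht_v,
\]
and since any $r$-rooted walk covering $v$ uses at least one arc of $\dt^\into(v)$, we have $\iopt(G,c,\pi)\geq\iopt(G,\tc,\tpi)+\sum_{v\in V'}\tht_v$. Combined with step~\eqref{finalarb}'s assignment $y_{\{v\}}=\ty_{\{v\}}+\tht_v$, both $\pcval(T;G,c,\pi)$ and $\sum_S y_S$ increase by exactly $\sum_{v\in V'}\tht_v$ over their modified-instance counterparts, while $\iopt$ increases by at least that amount. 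So it suffices to prove $\pcval(T;G,\tc,\tpi)\leq\sum_S\ty_S\leq\iopt(G,\tc,\tpi)$.

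The remainder is a case analysis matching the algorithm's branches. Case (c) (zero $\tc$-cost spanning arborescence) is immediate: $\pcval(T;G,\tc,\tpi)=0=\sum\ty_S\leq\iopt(G,\tc,\tpi)$. For case (a), with $\tpi_v=0$, two sub-claims suffice. First, $\iopt(\bG,\bc,\bpi)\leq\iopt(G,\tc,\tpi)$: any walk collection in $G$ becomes one in $\bG$ by shortcutting each visit to $v$ via arcs with $\bc_{u,w}\leq\tc_{u,v}+\tc_{v,w}$, and losing coverage of $v$ is free since $\tpi_v=0$. Second, $\pcval(T;G,\tc,\tpi)\leq\pcval(\bT;\bG,\bc,\bpi)$: by the lifting step, $\tc(T')=\bc(\bT)$ (using $\bc_{u,w}=\tc_{u,v}+\tc_{v,w}$ on $\bA$); the min-$\tc$-cost spanning arborescence step can only further decrease cost; $V(T)=V(T')$ differs from $V(\bT)$ only possibly at $v$; and $v$'s zero penalty makes the uncovered-penalty terms match. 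The inductive hypothesis on $(\bG,\bc,\bpi)$, together with $\ty=\by$ under the natural identification, closes this case. Case (b) (zero-cost cycle $Z$ disjoint from $r$) is analogous: contracting $Z$ into $u_Z$ with $\bpi_{u_Z}=\sum_{v\in V(Z)}\tpi_v$ gives $\iopt(\bG,\bc,\bpi)\leq\iopt(G,\tc,\tpi)$ (walks can traverse $Z$ for free in $\tc$), and the lifting in lines~\eqref{zerocyc-end1}--\eqref{zerocyc-end2} yields $\pcval(T;G,\tc,\tpi)\leq\pcval(\bT;\bG,\bc,\bpi)$.

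Polynomial runtime follows since each recursive call strictly reduces $|V|$ (case (a) removes $v$; case (b) contracts $|V(Z)|\geq 2$ nodes into one) and each per-call step (computing $\tht_v$, detecting zero-cost cycles, computing min-cost arborescences via Edmonds) is polynomial. I expect the main obstacle to be the bookkeeping in case (b): verifying that $\tpi(V\sm V(T))=\bpi(V(\bG)\sm V(\bT))$ in both the $u_Z\in V(\bT)$ and $u_Z\notin V(\bT)$ subcases, and checking that the lifted graph in line~\eqref{zerocyc-end1} remains $r$-reachable so that its min-cost spanning arborescence exists and captures every node of $V(T')$. Case (a) also requires a small sidecar argument when $v$ is a walk endpoint rather than an interior node, but truncating the walk just before $v$ is always at least as cheap (again since $\tpi_v=0$), so this is harmless.
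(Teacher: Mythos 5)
Your proposal matches the paper's proof almost step for step: the same reduction via a ``subtract'' lemma to the instance $(G,\tc,\tpi)$, the same key inequality $\iopt(\bG,\bc,\bpi)\leq\iopt(G,\tc,\tpi)$ proved by shortcutting (case (a)) or contracting (case (b)) an optimal walk collection, the same lifting argument showing $\pcval(T;G,\tc,\tpi)\leq\pcval(\bT;\bG,\bc,\bpi)$, and the same induction on $|V|$. The one small imprecision is your parenthetical ``walks can traverse $Z$ for free in $\tc$'' in case (b): that fact about zero-cost arcs of $Z$ is what makes the \emph{lifting} (adding $Z$ to $T'$) free, not what drives $\iopt(\bG,\bc,\bpi)\leq\iopt(G,\tc,\tpi)$ --- the latter holds because contraction cannot increase arc cost and covering $u_Z$ saves $\bpi_{u_Z}=\sum_{v\in V(Z)}\tpi_v$, at least as much as was saved in $G$. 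You correctly identify the bookkeeping obstacles (matching the uncovered-penalty terms in case (b), $r$-reachability of $T'$, endpoint truncation in case (a)), all of which resolve exactly as you anticipate.
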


As noted earlier, one of the above inequalities can be strengthened to 
$\sum_{S\sse V\sm\{r\}}y_S\leq\lpopt(G,c,\pi)$. We defer the proof of this,
which is a bit technical and involves suitably extrapolating the arguments made
for the integral case, to Section~\ref{dualbnd}.

Given the recursive nature of \IterPCA, it is natural that the proof of
Theorem~\ref{pcarbthm} uses induction (on $|V(G)|$). First, Lemma~\ref{subtract} argues
that it suffices to show the inequalities stated in Theorem~\ref{pcarbthm} hold for
the instance $(G,\tc,\tpi)$ specified in step~\eqref{modinst} (with ``simpler'' edge
costs and penalties), the out-arborescence $T$, and the vector $\ty$ returned in
step~\eqref{zerocyc-y} or~\eqref{zeropen-y}. 
Next, Lemmas~\ref{zeropen} and~\ref{zerocyc} supply essentially the induction step. 
They show that if the output $(\bT,\by)$ of \IterPCA when it is
called recursively on the smaller instance $(\bG,\bc,\bpi)$ in step~\eqref{zeropen-rcall}
or~\eqref{zerocyc-rcall} satisfies the inequalities stated in Theorem~\ref{pcarbthm}, then 
$(T,\ty)$ satisfies $\pcval(T;G,\tc,\pi)\leq\sum_{S\sse V'}\ty_S\leq\iopt(G,\tc,\tpi)$. 
Combining this with Lemma~\ref{subtract} finishes the proof.

\begin{lemma} \label{subtract}
Consider the \pcw instance $(G,\tc,\tpi)$ obtained after step~\eqref{modinst}.   
If the out-arborescence $T$ 
computed in step~\eqref{zeropen-end},
\eqref{zerocyc-end1}, \eqref{zerocyc-end2}, or \eqref{zeroarb}, and the vector $\ty$
satisfy
$\pcval(T;G,\tc,\tpi)\leq\sum_{S\sse V'}\ty_S\leq\iopt(G,\tc,\tpi)$, then 
$T$ and the final vector $y$ returned
satisfy $\pcval(T;G,c,\pi)\leq\sum_{S\sse V'}y_S\leq\iopt(G,c,\pi)$.
\end{lemma}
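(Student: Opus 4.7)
The proof strategy is to show that passing from $(G,\tc,\tpi)$ back to $(G,c,\pi)$ shifts each of the three quantities in the desired inequality chain by the same amount $\Theta:=\sum_{v\in V'}\tht_v$ --- exactly for $\pcval(T;G,\cdot,\cdot)$ and $\sum_S y_S$, and at least $\Theta$ for $\iopt(G,\cdot,\cdot)$. Translating the hypothesis by $\Theta$ then yields the conclusion without any further work.

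The $y$-side is immediate from how the final vector is defined in the algorithm: $y_{\{v\}}=\ty_{\{v\}}+\tht_v$ for $v\in V'$ and $y_S=\ty_S$ otherwise, so $\sum_{S\sse V'} y_S = \sum_{S\sse V'}\ty_S + \Theta$. For the $\pcval$-side, I would partition $V'$ into $V(T)\sm\{r\}$ and $V\sm V(T)$: each $v\in V(T)\sm\{r\}$ has a unique incoming arc $(u,v)$ in the out-arborescence $T$, contributing exactly $c_{u,v}-\tc_{u,v}=\tht_v$ to the cost increase, while each $v\in V\sm V(T)$ contributes $\pi_v-\tpi_v=\tht_v$ to the penalty increase. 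Summing gives $\pcval(T;G,c,\pi)=\pcval(T;G,\tc,\tpi)+\Theta$.

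The substantive step is the inequality $\iopt(G,c,\pi)\ge\iopt(G,\tc,\tpi)+\Theta$. Fix any collection $\C$ of $r$-rooted walks and any $v\in V'$; I would argue that $v$'s contribution to $\sum_{P\in\C} c(P)+\pi\bigl(V\sm\bigcup_{P\in\C} V(P)\bigr)$ exceeds its contribution under $(\tc,\tpi)$ by at least $\tht_v$. If no walk in $\C$ visits $v$, the penalty gap is exactly $\pi_v-\tpi_v=\tht_v$; if some walk visits $v$, then, because the walks are rooted at $r\ne v$, at least one incoming arc of $v$ is traversed by $\C$, and every such traversal costs an extra $\tht_v$ under $c$. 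Summing over $v\in V'$ and taking the minimum over $\C$ gives the desired bound on $\iopt(G,c,\pi)$. Chaining this with the two equalities above and the hypothesized inequalities on $(G,\tc,\tpi)$ closes the argument.

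The only real obstacle is the case analysis in the last step, and even there the key observation --- that any $r$-rooted walk collection ``pays at least once per non-root node'', either by traversing an incoming arc or by incurring a penalty --- is exactly what makes the choice of $\tht_v$ in step~\eqref{modinst} sound. Everything else is routine bookkeeping that follows directly from the definitions.
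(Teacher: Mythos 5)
Your proposal is correct and takes essentially the same approach as the paper: both arguments decompose the shift from $(\tc,\tpi)$ to $(c,\pi)$ node-by-node, exploit that an out-arborescence has exactly one incoming arc per covered non-root node (giving equality for $\pcval$), and use that any collection of $r$-rooted walks uses at least one incoming arc per covered non-root node (giving the $\iopt$ inequality). The only cosmetic difference is that you quantify over all collections $\C$ and then minimize, whereas the paper instantiates $\C$ as the optimal collection for $(G,c,\pi)$; these are equivalent.
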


\begin{proof}
We show that 
$\pcval(T;G,c,\pi)=\pcval(T;G,\tc,\tpi)+\sum_{v\in V'}\tht_v$, and 
$\iopt(G,\tc,\tpi)\leq\iopt(G,c,\pi)-\sum_{v\in V'}\tht_v$. Combining these
inequalities, along with the fact that
$\sum_{S\sse V'}y_S=\sum_{S\sse V'}\ty_S+\sum_{v\in V'}\tht_v$, 
yields the lemma.

The first equality follows quite easily, since every node $v\in V'$ covered by $T$ has
exactly one incoming edge whose cost increases by $\tht_v$ when going from $\tc$ to $c$,
and the penalty of every node $v\in V'$ not covered by $T''$ increases by $\tht_v$ when
going from $\tpi$ to $\pi$.
(Note that here we are crucially exploiting that $T$ is an {\em out-arborescence}; if
$T$ were instead the (multi)set of edges of an $r$-rooted walk, or collection of walks,
then $\pcval(T;G,c,\pi)$ could be larger than $\pcval(T;G,\tc,\tpi)+\sum_{v\in V'}\tht_v$
since $T$ could contain multiple edges entering a node.)

To see the second inequality, 
let $\C$ be an optimal solution to the $(G,c,\pi)$ instance. 
So for every node $v\in V'$, if $v'$ is covered by $\C$, it has at least one
incoming edge in this collection of paths, whose cost decreases by $\tht_v$ when moving
from $c$ to $\tc$; if $v'$ is not covered, its penalty decreases by $\tht_v$ when moving
from $\pi$ to $\tpi$. 
Hence, $\iopt(G,\tc,\tpi)\leq\iopt(G,c,\pi)-\sum_{v\in V'}\tht_v$.
\end{proof}

\begin{lemma} \label{zeropen}
Consider a recursive call \IterPCA$\!(G,c,\pi,r)$, where
steps~\eqref{zeropen-start}--\eqref{zeropen-y} are executed. If 
$(\bT,\by)$ obtained in step~\eqref{zeropen-rcall} satisfies
$\pcval(\bT;\bG,\bc,\bpi)\leq\sum_{S\sse V(\bG)\sm\{r\}}\by_S\leq\iopt(\bG,\bc,\bpi)$, 
then the out-arborescence $T$ and the vector $\ty$
computed in steps~\eqref{zeropen-end}, \eqref{zeropen-y} satisfy
$\pcval(T;G,\tc,\tpi)\leq\sum_{S\sse V'}\ty_S\leq\iopt(G,\tc,\tpi)$.
\end{lemma}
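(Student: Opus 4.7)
\begin{proofsketchof}{Lemma~\ref{zeropen}}
The plan is to relate the quantities on the original and reduced instances via the shortcut operation, and to use the induction hypothesis on $(\bG,\bc,\bpi)$ as a black box. The first observation is that step~\eqref{zeropen-y} sets $\ty_S=\by_S$ for $S\sse V(\bG)\sm\{r\}=V'\sm\{v\}$, while step~\eqref{init} ensures $\ty_S=0$ for every $S\sse V'$ with $v\in S$. Hence $\sum_{S\sse V'}\ty_S=\sum_{S\sse V(\bG)\sm\{r\}}\by_S$, and it suffices to establish
(i) $\pcval(T;G,\tc,\tpi)\le\pcval(\bT;\bG,\bc,\bpi)$ and
(ii) $\iopt(\bG,\bc,\bpi)\le\iopt(G,\tc,\tpi)$; combined with the hypothesis on $(\bT,\by)$, these give the lemma.

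For (i), I would first argue that $\tc(T')=\bc(\bT)$: each arc $(u,w)\in\bT\sm\bA$ has $\bc_{u,w}=\tc_{u,w}$ and appears verbatim in $T'$, while each $(u,w)\in\bA$ has $\bc_{u,w}<\tc_{u,w}$, so by the definition of $\bc$ in step~\eqref{zeropen-start} we must have $\bc_{u,w}=\tc_{u,v}+\tc_{v,w}$, and $T'$ contains the two arcs $(u,v),(v,w)$ in its place. Since $\bT$ is an $r$-rooted out-arborescence in $\bG$ and shortcut-replacement preserves reachability from $r$, every node in $V(T')$ is reachable from $r$ inside $T'$, so the minimum-$\tc$-cost spanning arborescence $T$ of $(V(T'),A(T'))$ exists, spans all of $V(T')$, and satisfies $\tc(T)\le\tc(T')=\bc(\bT)$. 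For the penalty term, $V(T)=V(T')\supseteq V(\bT)\sm\{r\}$, so $V\sm V(T)\sse\bigl(V(\bG)\sm V(\bT)\bigr)\cup\{v\}$; using $\tpi_v=0$ and $\bpi=\tpi$ on $V(\bG)$ gives $\tpi(V\sm V(T))\le\bpi(V(\bG)\sm V(\bT))$. Adding the two bounds proves (i).

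For (ii), I would start from an optimal $r$-rooted walk collection $\C$ for $(G,\tc,\tpi)$ and build a collection $\bC$ in $(\bG,\bc,\bpi)$ by iteratively shortcutting past $v$: each consecutive pair of arcs $(u,v),(v,w)$ inside a walk is replaced by the arc $(u,w)\in\bG$, whose cost is $\bc_{u,w}\le\tc_{u,v}+\tc_{v,w}$, with the obvious truncation if $v$ sits at an endpoint of the walk. Repeating this removes $v$ from every walk while weakly decreasing total $\tc$-cost, yielding $\bc(\bC)\le\tc(\C)$. Every node of $V(\bG)$ covered by $\C$ is still covered by $\bC$, so $V(\bG)\sm\bigcup_{\bP\in\bC}V(\bP)\sse V(\bG)\sm\bigcup_{P\in\C}V(P)$, and since $\tpi_v=0$ the corresponding penalties satisfy $\bpi\bigl(V(\bG)\sm\bigcup V(\bP)\bigr)\le\tpi\bigl(V\sm\bigcup V(P)\bigr)$. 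This yields $\iopt(\bG,\bc,\bpi)\le\iopt(G,\tc,\tpi)$. The main obstacle is the careful bookkeeping in (i), where $T'$ need not be an arborescence because $v$ may be entered by several shortcut-originating arcs; the point of passing to the minimum-$\tc$-cost spanning arborescence of $(V(T'),A(T'))$ is exactly to restore the arborescence property without increasing cost, and this is why the inequality goes through.
\end{proofsketchof}
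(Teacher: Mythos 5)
Your proof is correct and follows essentially the same route as the paper's: shortcut an optimal walk-collection past $v$ to show $\iopt(\bG,\bc,\bpi)\le\iopt(G,\tc,\tpi)$, compare $T$ with $T'$ and $T'$ with $\bT$ to get $\pcval(T;G,\tc,\tpi)\le\pcval(\bT;\bG,\bc,\bpi)$, and observe $\sum_{S\sse V'}\ty_S=\sum_{S\sse V(\bG)\sm\{r\}}\by_S$. One small note: the intermediate claim $\tc(T')=\bc(\bT)$ is really only $\tc(T')\le\bc(\bT)$ (if several arcs $(u,w),(u,w')\in\bA$ share the tail $u$, the arc $(u,v)$ appears once in $T'$ but its cost is counted once per arc of $\bA$ on the right), but the inequality is all that is needed, and the paper makes the same harmless simplification.
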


\begin{proof}
The key observation is that $\iopt(\bG,\bc,\bpi)\leq\iopt(G,\tc,\tpi)$. 
Consider an optimal solution $\C$ to the \pcw instance $(G,\tc,\tpi)$. 
If $v$ is not covered by $\C$, it is easy to see that $\C$ is
a feasible solution to $(\bG,\bc,\bpi)$. 
Otherwise, we modify each walk $P\in\C$ containing $v$ to obtain a corresponding walk in
$\bG$ as follows. 
Consider an occurrence of $v$ on $P$, and let $(u,v)$ be the arc entering $v$ in this
occurrence. 
If $(u,v)$ is the last arc of $P$, then we simply delete this arc;
note that $\tc_{u,v}\geq 0$. 
Otherwise, if $(v,w)$ is the arc in $P$ leaving $v$ in this occurrence, then we replace
arcs $(u,v),(v,w)$ in $P$ with the arc $(u,w)$; note that
$\bc_{u,w}\leq\tc_{u,v}+\tc_{v,w}$.  
Doing this for all occurrences of $v$ on $P$ yields an $r$-rooted walk in $\bG$, and doing
this for all walks $P\in\C$ containing $v$ yields a feasible solution $\C'$ to
$(\bG,\bc,\bpi)$. 
of no greater prize-collecting cost, i.e., 
$\pcval(\C';\bG,\bc,\bpi)\leq\pcval(\C;G,\tc,\tpi)$. Therefore,
$\iopt(\bG,\bc,\bpi)\leq\iopt(G,\tc,\tpi)$. 

We now have the following sequence of inequalities.
\begin{alignat}{1}
\pcval(T;G,\tc,\tpi) & \leq\pcval(T';G,\tc,\tpi) 
\tag{$T$ is a min $\tc$-cost spanning arborescence in $(V(T'),A(T'))$} \\
& =\pcval(\bT;\bG,\bc,\bpi) 
\tag{if we add $(u,v),(v,w)$ to $T'$, we remove $(u,w)$;\ $\bc_{u,w}=\tc_{u,v}+\tc_{v,w}$} \\ 
& \leq\sum_{S\sse V(\bG)\sm\{r\}}\by_S\leq\iopt(\bG,\bc,\bpi) \tag{given by lemma statement} \\
& \leq\iopt(G,\tc,\tpi). \tag{shown above} 
\end{alignat}
Finally, note that, by definition, $\sum_{S\sse V'}\ty_S=\sum_{S\sse V(\bG)\sm\{r\}}\by_S$.
\end{proof}

\begin{lemma} \label{zerocyc}
Consider a recursive call \IterPCA$\!(G,c,\pi,r)$, where
steps~\eqref{zerocyc-start}--\eqref{zerocyc-y} are executed. If 
$(\bT,\by)$ obtained in step~\eqref{zerocyc-rcall} satisfies
$\pcval(\bT;\bG,\bc,\bpi)\leq\sum_{S\sse V(\bG)\sm\{r\}}\by_S\leq\iopt(\bG,\bc,\bpi)$,
then the out-arborescence $T$ computed in step~\eqref{zerocyc-end1}
or~\eqref{zerocyc-end2}, and the vector $\ty$ computed in step \eqref{zerocyc-y} satisfy 
$\pcval(T;G,\tc,\tpi)\leq\sum_{S\sse V'}\ty_S\leq\iopt(G,\tc,\tpi)$.
\end{lemma}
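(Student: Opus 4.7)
The proof will mirror the structure of Lemma~\ref{zeropen}: I will establish the chain
\[
\pcval(T;G,\tc,\tpi)\ \leq\ \pcval(\bT;\bG,\bc,\bpi)\ \leq\ \sum_{\bS\sse V(\bG)\sm\{r\}}\by_{\bS}\ =\ \sum_{S\sse V'}\ty_S\ \leq\ \iopt(\bG,\bc,\bpi)\ \leq\ \iopt(G,\tc,\tpi),
\]
where the middle two bounds are from the lemma hypothesis and the definition of $\ty$ in step~\eqref{zerocyc-y}. The two nontrivial endpoints are the leftmost inequality (relating our reconstructed arborescence in $G$ to $\bT$) and the rightmost inequality (relating the two optima). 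The rest of the plan establishes these.

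For the upper-bound step $\iopt(\bG,\bc,\bpi)\leq\iopt(G,\tc,\tpi)$, I would take any optimal walk-collection $\C$ for $(G,\tc,\tpi)$ and contract it through the cycle $Z$. Concretely, I map each walk $P\in\C$ to a walk in $\bG$ by applying the contraction map $\phi$ that sends $V(Z)\mapsto u_Z$ and is the identity elsewhere: every arc $(u,v)$ of $P$ with both endpoints in $V(Z)$ becomes a self-loop at $u_Z$ and is deleted (arcs are nonnegative, so this only decreases cost); every crossing arc $(u,v)$ with $u\notin V(Z), v\in V(Z)$ becomes $(u,u_Z)$ with $\bc_{u,u_Z}\leq\tc_{u,v}$ by construction, and symmetrically for outgoing arcs. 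The resulting sequence is a valid $r$-rooted walk since deleting self-loops does not disconnect it. For penalties, nodes in $V\sm V(Z)$ behave identically; for $V(Z)$, either no node in $V(Z)$ is visited by $\C$ (so $u_Z$ is uncovered by $\C'$ and both sides contribute $\tpi(V(Z))=\bpi_{u_Z}$), or some node of $V(Z)$ is visited by $\C$ (so $u_Z$ is covered by $\C'$ and the new penalty contribution $0$ is at most the old $\tpi(V(Z)\sm V(\C))\geq 0$). This yields $\pcval(\C';\bG,\bc,\bpi)\leq\pcval(\C;G,\tc,\tpi)$.

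For the lower-bound step $\pcval(T;G,\tc,\tpi)\leq\pcval(\bT;\bG,\bc,\bpi)$, I would split on whether $u_Z\in V(\bT)$. If $u_Z\notin V(\bT)$, then $T=\bT$ lives entirely on nodes of $V\sm V(Z)$ where $\bc$ and $\tc$ agree, so $\tc(T)=\bc(\bT)$; meanwhile $V\sm V(T)=(V(\bG)\sm V(\bT)\sm\{u_Z\})\cup V(Z)$, and since $u_Z\in V(\bG)\sm V(\bT)$ contributes exactly $\bpi_{u_Z}=\tpi(V(Z))$, the penalty sums match and $\pcval$ is preserved. If $u_Z\in V(\bT)$, I pass through the intermediate $T'$ of step~\eqref{zerocyc-end1}: replacing each arc of $\bT$ incident to $u_Z$ by the (unique) $G$-arc defining its $\bc$-cost contributes exactly $\bc_a$ in $\tc$-cost, and adding all of $Z$ costs $0$; furthermore $V(T')=(V(\bT)\sm\{u_Z\})\cup V(Z)$ so $V\sm V(T')=V(\bG)\sm V(\bT)$, giving $\pcval(T';G,\tc,\tpi)=\pcval(\bT;\bG,\bc,\bpi)$. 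Since $T$ is the minimum $\tc$-cost spanning arborescence on $(V(T'),A(T'))$---which exists because $\bT$ reaches $u_Z$ in $\bG$ and the zero-cost arcs of $Z$ then propagate reachability to all of $V(Z)$---we have $V(T)=V(T')$ and $\tc(T)\leq\tc(T')$, so $\pcval(T;G,\tc,\tpi)\leq\pcval(T';G,\tc,\tpi)$.

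The main obstacle I anticipate is the careful bookkeeping in the walk-contraction argument: guaranteeing that repeated entries into and exits from $V(Z)$, and arcs lying entirely inside $V(Z)$, combine into a legitimate walk in $\bG$ after contraction, and that the penalty accounting for $u_Z$ correctly absorbs $\tpi(V(Z))$ in both the covered and uncovered sub-cases. Once this contraction is justified and the symmetric case split on whether $u_Z\in V(\bT)$ is carried out, the lemma follows by simply chaining the displayed inequalities together with the inductive hypothesis on $(\bG,\bc,\bpi)$.
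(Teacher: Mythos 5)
Your proof is correct and follows essentially the same route as the paper's: both show $\iopt(\bG,\bc,\bpi)\leq\iopt(G,\tc,\tpi)$ by contracting the optimal walk collection through $Z$ (case-splitting on whether $\C$ meets $V(Z)$), and both show $\pcval(T;G,\tc,\tpi)\leq\pcval(\bT;\bG,\bc,\bpi)$ via the intermediate $T'$, using $\tc(Z)=0$ and $\bpi_{u_Z}=\tpi(V(Z))$. You spell out the penalty bookkeeping and the reachability of $V(Z)$ within $T'$ somewhat more explicitly than the paper does, but the structure and key observations are the same.
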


\begin{proof}
Again, the key property to show is that $\iopt(\bG,\bc,\bpi)\leq\iopt(G,\tc,\tpi)$.
Consider an optimal solution $\C$ to the \pcw instance $(G,\tc,\tpi)$. 
If no nodes of $Z$ are covered by $\C$, then it is easy to see
that $\C$ is a feasible solution to $(\bG,\bc,\bpi)$ of no-greater prize-collecting
cost, so $\iopt(\bG,\bc,\bpi)\leq\iopt(G,\tc,\tpi)$. 
Otherwise, pick some $v\in V(Z)$ that lies on some walk in our collection $\C$, and
think of $Z$ being contracted into the node $v$; i.e., formally, we are replacing every 
occurrence of every node of $Z$ in our collection $\C$ by the contracted node $u_Z$ of
$\bG$ that stands for the cycle $Z$, and deleting self-loops.
This yields a walk-collection in $\bG$ visiting $\bigcup_{P\in\C}V(P)\cup\{u_Z\}$ 
where the $\bc$-cost of the arcs used is at most $\sum_{P\in\C}\tc(P)$, since for every arc
$(u,v)\in\dt^\into(Z)$ (respectively $(u,v)\in\dt^\out(Z)$), we have the arc $(u,u_Z)\in\bG$
(respectively, $(u_z,u)\in\bG$) with $\bc_{(u,u_Z)}\leq\tc_{u,v}$ (respectively,
$\bc_{u_z,u}\leq\tc_{v,u}$). 
So we again have
$\iopt(\bG,\bc,\bpi)$ is at most 
$\sum_{P\in\C}\tc(P)+\tpi\bigl(V\sm\bigcup_{P\in\C}V(P)\bigr)=\iopt(G,\tc,\tpi)$.

If we obtain $T$ in step~\eqref{zerocyc-end1}, then
$$
\pcval(T;G,\tc,\tpi)\leq\pcval(T';G,\tc,\tpi)=\pcval(\bT;\bG,\bc,\bpi)
\leq\sum_{S\sse V(\bG)\sm\{r\}}\by_S\leq\iopt(\bG,\bc,\bpi)\leq\iopt(G,\tc,\tpi).
$$
The equality above follows since all arcs $a\in Z$ have $\tc_a=0$, and for every arc $a$
of $\bG$ in $\bT$ that is replaced by an arc $a'$ of $G$, we have $\bc_a=\tc_a$. 
If we obtain $T$ in step~\eqref{zerocyc-end2}, then clearly
$\pcval(T;G,\tc,\tpi)=\pcval(\bT;\bG,\bc,\bpi)$, which is at most
$\sum_{S\sse V(\bG)\sm\{r\}}\by_S\leq\iopt(\bG,\bc,\bpi)\leq\iopt(G^\met,\tc,\tpi)$ as
before. 

Finally, the lemma follows by noting that 
$\sum_{S\sse V'}\ty_S=\sum_{S\sse V(\bG)\sm\{r\}}\by_S$. 
\end{proof}

\begin{proofof}{Theorem~\ref{pcarbthm}}
The proof follows by induction on $|V(G)|$. The bases cases are when $|V(G)|\leq 2$, for
which the statement follows trivially. Suppose that the statement is true whenever
$|V(G)|\leq k$, and consider an instance $(G,c,\pi)$ with $|V(G)|=k+1$. Recall that
$V'=V\sm\{r\}$. 

By Lemma~\ref{subtract}, it suffices to show that 
$\pcval(T;G,\tc,\tpi)\leq\sum_{S\sse V'}\ty_S\leq\iopt(G,\tc,\tpi)$.
If $T$ is obtained in step~\eqref{zeroarb}, then clearly
$\pcval(T;G,\tc,\tpi)=0=\sum_{S\sse V'}\ty_S\leq\iopt(G,\tc,\tpi)$.
Otherwise, by the induction
hypothesis, we have that the tuple $(\bT,\by)$ returned for $(\bG,\bc,\bpi)$ in step
\eqref{zeropen-rcall} or \eqref{zerocyc-rcall} satisfies
$\pcval(\bT;\bG,\bc,\bpi)\leq\sum_{S\sse V(\bG)\sm\{r\}}\by_S\leq\iopt(\bG,\bc,\bpi)$,
since $|V(\bG)|\leq k$. 
Lemmas~\ref{zeropen} and~\ref{zerocyc}, then show that
$\pcval(T;G,\tc,\tpi)\leq\sum_{S\sse V'}\ty_S\leq\iopt(G,\tc,\tpi)$. 
This completes the induction step and the induction proof showing that 
$\pcval(T;G,c,\pi)\leq\sum_{S\sse V'}y_S\leq\iopt(G,c,\pi)$.
\end{proofof}

\subsection{Showing that \boldmath $\sum_{S\sse V\sm\{r\}}y_S\leq\lpopt(G,c,\pi)$} 
\label{dualbnd} 
Recall that $V'=V\sm\{r\}$, and that $\lpopt(G,c,\pi)$ is the optimal value of the
LP-relaxation \eqref{eq:primal} for the instance $(G,c,\pi)$. 
We prove the above inequality by suitably generalizing the arguments
involving $\iopt$ in Lemmas~\ref{subtract}--\ref{zerocyc} to work with
fractional solutions to \eqref{eq:primal}.
%
A key technical tool that we utilize, is the following powerful {\em splitting-off}
result due to Frank~\cite{Frank89} and Jackson~\cite{Jackson88}.
For a digraph $D$, and any ordered pair of nodes $u,v$, let $\ld_D(u,v)$ denote the
$(u,v)$ edge connectivity in $D$, which is the number of $u\leadsto v$ edge-disjoint paths
in $D$. 

\begin{theorem}[\cite{Frank89,Jackson88}] \label{dirsplit}
Let $D=(N+s,E)$ be a an Eulerian digraph, possibly with parallel edges. Then, for every
arc $(u,s)\in\dt^\into(s)$, there is an arc $(s,w)\in\dt^\out(s)$ such that letting
$D_{uw}$ be the digraph obtained by replacing the pair of arcs $(u,s),(s,w)$ with (a new
parallel copy of) the arc $(u,w)$---an operation called splitting off $(u,s), (s,w)$---we
have that $\ld_{D_{uw}}(v,t)=\ld_D(v,t)$ for all $v,t\in N$. 
\end{theorem}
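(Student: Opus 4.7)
The plan is to use the classical ``tight set'' framework based on submodularity of the in-degree function together with the Eulerian hypothesis. Throughout, set $R(v,t) := \ld_D(v,t)$ for $v,t \in N$; by Menger's theorem, $R(v,t) = \min\{d^{\into}(X) : X \sse N,\ v \notin X,\ t \in X\}$. Call a set $X \sse N$ \emph{tight} if $d^{\into}(X) = R(v,t)$ for some $v \in N\sm X, t \in X$. First I would check the admissibility characterization: the splitting of $(u,s),(s,w)$ preserves $R(v,t)$ for all $v,t \in N$ iff no tight set contains both $u$ and $w$. This is a direct in-degree calculation: since $s \notin X$ for $X \sse N$, splitting changes $d^{\into}(X)$ by $-1$ precisely when $u,w \in X$ (the arc $(s,w)$ no longer enters $X$ while the new arc $(u,w)$ stays inside), and leaves $d^{\into}(X)$ unchanged otherwise; so $R(v,t)$ can drop only via a tight in-cut containing both $u$ and $w$.

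The crux is the \emph{crossing lemma}: if $X$ and $Y$ are tight sets both containing $u$ with $X \cap Y \neq \es$, then $X \cup Y$ is also tight and contains $u$. This follows from the submodular inequality
\[
d^{\into}(X) + d^{\into}(Y) \;\geq\; d^{\into}(X \cap Y) + d^{\into}(X \cup Y),
\]
combined with a case analysis on where the witnesses $(v_X,t_X),(v_Y,t_Y)$ of the tightness of $X,Y$ sit relative to $X\cap Y$, $X\sm Y$, $Y\sm X$, and the complement of $X\cup Y$. In the favourable case one witness, say $(v_X,t_X)$, satisfies $v_X \notin X\cup Y$ and $t_X \in X\cup Y$; then $R(v_X,t_X) \leq d^{\into}(X\cup Y)$, together with submodularity and the tightness of $X$ and $Y$, forces equality and hence tightness of $X\cup Y$. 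The remaining cases are handled by applying the same argument first to $X\cap Y$ and ``rotating'' the witnesses $(v_X,t_X) \leftrightarrow (v_Y,t_Y)$ to produce a valid witness for $X\cup Y$.

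To close, suppose for contradiction that no admissible split exists for $(u,s)$; then for every $(s,w) \in \dt^{\out}(s)$ there is a tight set from $u$ containing $w$. Iteratively applying the crossing lemma gives a single tight set $X^\star$ containing $u$ and every out-neighbor of $s$, so all $d^{\out}(s)$ arcs leaving $s$ enter $X^\star$. By the Eulerian equality $d^{\into}(s) = d^{\out}(s)$, and since $(u,s)$ contributes to $d(X^\star,s)\geq 1$, we get $d(N\sm X^\star,\,s) < d^{\out}(s)$. Tightness of $X^\star$ yields $R(v,t)$ arc-disjoint $v\to t$ paths that saturate the in-cut $\dt^{\into}_D(X^\star)$ exactly once per path and use no out-cut arc at all. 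Since the in-cut contains all $d^{\out}(s)$ arcs $(s,\cdot)$ into $X^\star$, each must be used by a distinct such path, so at least $d^{\out}(s)$ paths pass through $s$, each entering $s$ via a distinct in-arc from $N\sm X^\star$ (an arc from $X^\star$ into $s$ would be a forbidden out-cut crossing). But there are strictly fewer than $d^{\out}(s)$ such in-arcs to $s$, which is the desired contradiction.

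The main obstacle I anticipate is the crossing lemma, whose case analysis is delicate: when $v_X \in Y\sm X$ and $t_Y \in X\sm Y$ (or symmetric configurations), the submodular slack has to be apportioned carefully, typically via a secondary application on $X\cap Y$ plus a rotation of witness pairs. The Eulerian counting step is more routine once the crossing lemma delivers a single tight superset $X^\star$, but producing that superset is precisely what links submodularity to the Eulerian balance at $s$ and thereby closes the proof.
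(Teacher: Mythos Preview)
The paper does not prove this theorem; it is stated as a known result of Frank~\cite{Frank89} and Jackson~\cite{Jackson88} and invoked as a black box in Lemma~\ref{preflowsplit} and Theorem~\ref{ybnd}. There is therefore no proof in the paper to compare your attempt against.

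For what it is worth, your outline follows the standard submodular/tight-set route that underlies the original proofs, and the overall architecture---admissibility characterization, maximal tight set containing $u$, Eulerian counting contradiction---is sound. Two remarks on the gaps you leave open. First, you restrict from the outset to cuts $X\sse N$ not containing $s$, but a min $v$--$t$ cut in $D$ may contain $s$; this restriction is legitimate precisely because $D$ is Eulerian (complementation $Z\mapsto (N+s)\sm Z$ preserves in-degree and, combined with $\ld(v,t)=\ld(t,v)$, sends a tight set containing $s$ with $u,w$ outside to a tight set inside $N$ with $u,w$ inside), but you should state this. Second, the crossing lemma is, as you acknowledge, the real work, and your ``rotation of witnesses'' case sketch is not yet a proof; the cleaner way is to work with the single function $R(X):=\max\{\ld(v,t):v\in N\sm X,\ t\in X\}$ and combine the submodularity of $d^{\into}$ with the ultrametric-type inequality $\ld(a,c)\geq\min\{\ld(a,b),\ld(b,c)\}$ (and symmetry of $\ld$ in the Eulerian case) to show that tight sets containing $u$ are closed under union. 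Your final counting step is correct as written.
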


Given a digraph $D=(N,E)$ with root node $r\in N$, we say that a vector $x\in\R_+^E$ is an
{\em $r$-preflow} 
if $x\bigl(\dt^\into(v)\bigr)\geq x\bigl(\dt^\out(v)\bigr)$ holds for every $v\in N\sm\{r\}$.
We say that that $D$ is an $r$-preflow digraph 
if $\chi^E$ is an $r$-preflow.
Given a solution $(x,p)$ to \eqref{eq:primal}, scaling $x$ suitably yields an $r$-preflow
digraph, whereas Theorem~\ref{dirsplit} pertains to Eulerian digraphs.
However, since we are only interested in $(r,u)$ edge-connectivities, we can always make
this $r$-preflow digraph Eulerian by adding enough parallel $(v,r)$ edges for each node
$v\neq r$. Applying Theorem~\ref{dirsplit} repeatedly to the resulting Eulerian digraph
then yields the following.

\begin{lemma} \label{preflowsplit}
Let $D=(N+s,E)$ be an $r$-preflow digraph, where $r\in N$. Then, we can perform a
sequence of the following two types of operations: (i) delete an arc entering $s$; (ii)
split off arcs $(u,s)\in\dt^\into(s)$ and $(s,w)\in\dt^\out(s)$, to obtain an $r$-preflow 
digraph $D'=(N,E')$ such that $\ld_{D'}(r,v)=\ld_D(r,v)$ for all $v\in N$.
\end{lemma}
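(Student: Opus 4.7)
The plan is to reduce to the Eulerian case (Theorem~\ref{dirsplit}) by embedding $D$ into an auxiliary Eulerian digraph $D^+$ on which the theorem applies at $s$, then pulling each splitting operation back to one of the two permitted operations on $D$. The embedding must satisfy two conditions simultaneously: (a) it preserves $\ld(r,v)$ for every $v\in N$, and (b) every splitting pair that the theorem returns at $s$ in $D^+$ corresponds naturally to either a split-off or a delete-in-arc operation in $D$.

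First I would construct $D^+$ by appending, for each $v\in N\sm\{r\}$, exactly
$d_v:=|\dt^\into_D(v)|-|\dt^\out_D(v)|$ parallel copies of the arc $(v,r)$, together with
$\beta:=|\dt^\into_D(s)|-|\dt^\out_D(s)|$ parallel copies of $(s,r)$. The preflow hypothesis gives $d_v,\beta\ge 0$, and a standard degree-balance count then shows $D^+$ is Eulerian. Because every added arc points into $r$, no added arc can appear on any simple $r\leadsto v$ path for $v\in N$, so $\ld_{D^+}(r,v)=\ld_D(r,v)$ throughout.

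Next I would iterate the following step as long as $s$ has incident arcs in $D^+$: invoke Theorem~\ref{dirsplit} at $s$ to obtain a splitting pair $(u,s),(s,w)$ whose splitting-off in $D^+$ preserves $\ld(\cdot,\cdot)$ on $N$. If $(s,w)$ is an original out-arc of $D$, perform operation~(ii) in $D$; if instead $(s,w)$ is one of the added $(s,r)$ copies, perform operation~(i) and delete $(u,s)$ in $D$. The key invariant to maintain is that after each step $D^+$ equals the current $D$ augmented by the current multiplicities $d_v$ of $(v,r)$ (for $v\in N\sm\{r\}$) and $\beta$ of $(s,r)$. The split-off case leaves all $D$-degrees at $u$ and $w$ unchanged, so the augmentation is unchanged; the delete case increments $d_u$ and decrements $\beta$, which exactly mirrors the $D^+$-operation of splitting $(u,s)$ with an added $(s,r)$ into a fresh $(u,r)$ arc. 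The preflow condition in $D$ is then easy to verify: splits leave every in- and out-degree unchanged, while a deletion only occurs when $\beta\ge 1$, so the slack at $s$ in $D$ remains nonnegative, and a deletion can only tighten the slack at $u$. When $s$ becomes isolated, I output $D'=D\sm\{s\}$; chaining Theorem~\ref{dirsplit} across the iterations with the identity $\ld_{D^+}(r,v)=\ld_D(r,v)$ preserved throughout yields $\ld_{D'}(r,v)=\ld_D(r,v)$ for every $v\in N$.

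The main obstacle I anticipate is reconciling the two competing desiderata on the augmentation: $D^+$ must be Eulerian (to invoke Theorem~\ref{dirsplit}), yet every splitting pair at $s$ that the theorem returns must translate to one of the two operations permitted at $s$ in $D$ without violating the preflow condition. Restricting the added arcs to point into $r$---i.e., the $(v,r)$ and $(s,r)$ copies---threads this needle: it balances degrees without ever creating a new outgoing arc at a non-root node (which would have no counterpart in the operation vocabulary), and without enlarging any $r$-to-$v$ connectivity, ensuring both (a) and (b) hold simultaneously.
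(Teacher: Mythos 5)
Your proof is correct and takes essentially the same approach as the paper: both augment $D$ with artificial $(v,r)$ arcs (including copies of $(s,r)$) to make it Eulerian, invoke Theorem~\ref{dirsplit} repeatedly at $s$, and translate a split-off involving an artificial out-arc of $s$ as the deletion of the corresponding in-arc in $D$. Your write-up makes the invariant connecting the augmented Eulerian digraph to the evolving $D$ explicit, and checks the preflow condition at each step, which the paper leaves somewhat implicit, but the underlying argument is the same.
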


\begin{proof}
We first make $D$ Eulerian by adding, for each node $v$, $|\dt^\into(v)|-|\dt^\out(v)|$
parallel $(v,r)$ edges; we call these edges artificial edges. Let $D''=(N+s,E'')$ be the
resulting Eulerian digraph. Note that this operation leaves the $(r,v)$
edge-connectivities unchanged, for all $v\in N\cup\{s\}$. Now, we 
apply Theorem~\ref{dirsplit} repeatedly to split off pairs of incoming and outgoing edges
incident to $s$. If the outgoing edge of $s$ that is split off is an artificial edge, then
we simply delete the corresponding incoming edge. Note that each such operation preserves
the property that every node $v\in N\sm\{r\}$ has in-degree at least its out-degree.
This yields a digraph $\tD=(N,\tE)$ such
that $\ld_{\tD}(r,v)=\ld_{D''}(r,v)=\ld_D(r,v)$ for all $x\in N$. Removing the artificial
edges from $\tD$ (which again does not affect $(r,v)$ edge-connectivities) yields the
desired $r$-preflow digraph $D'$.
\end{proof}

\begin{theorem} \label{ybnd}
The vector $y$ returned by algorithm \IterPCA satisfies 
$\sum_{S\sse V'}y_S\leq\lpopt(G,c,\pi)$.
\end{theorem}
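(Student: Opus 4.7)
The plan is to mimic the induction proof of Theorem~\ref{pcarbthm}, but with $\lpopt$ in place of $\iopt$ throughout. The base cases ($|V|\leq 2$) are unchanged since $\iopt=\lpopt$ there, and the vector $y$ update in the last line is unaffected. For the induction step, I would establish LP-analogs of Lemmas~\ref{subtract}, \ref{zeropen}, and \ref{zerocyc}: namely, (i) $\lpopt(G,\tc,\tpi)\leq\lpopt(G,c,\pi)-\sum_{v\in V'}\tht_v$ after the subtraction step; (ii) $\lpopt(\bG,\bc,\bpi)\leq\lpopt(G,\tc,\tpi)$ when shortcutting past a zero-penalty node $v$; and (iii) $\lpopt(\bG,\bc,\bpi)\leq\lpopt(G,\tc,\tpi)$ when contracting a zero-cost cycle $Z$. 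Chaining these with the induction hypothesis applied to the smaller recursive instance gives the theorem.

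Part (i) is immediate: any feasible $(x,p)$ for $\lpname[(G,c,\pi)]$ is also feasible for $\lpname[(G,\tc,\tpi)]$, and the cost decreases by $\sum_{v\in V'}\tht_v\bigl(x(\dt^\into(v))+p_v\bigr)\geq\sum_{v\in V'}\tht_v$, where the inequality uses the coverage constraint at the singleton $S=\{v\}$. Part (iii) is a direct aggregation argument: given an optimal $(x^*,p^*)$, define $x'_{u,w}=x^*_{u,w}$ for $u,w\notin V(Z)$, $x'_{u,u_Z}=\sum_{(u,v)\in\dt^\into(Z)}x^*_{u,v}$, $x'_{u_Z,u}=\sum_{(v,u)\in\dt^\out(Z)}x^*_{v,u}$, $p'_{u_Z}=\min_{v\in V(Z)}p^*_v$, and $p'_u=p^*_u$ otherwise. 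Cuts of $\bG$ containing $u_Z$ correspond bijectively to cuts of $G$ that contain all of $V(Z)$, so LP coverage in $\bG$ transfers from LP coverage in $G$; the preflow constraint at $u_Z$ follows by summing the preflow inequalities over $v\in V(Z)$ (arcs internal to $Z$ cancel on both sides). The cost bound uses $\bc_{u,u_Z}x'_{u,u_Z}=(\min_v\tc_{u,v})\sum_v x^*_{u,v}\leq\sum_v\tc_{u,v}x^*_{u,v}$, the fact that $\tc$-costs of arcs internal to $Z$ are nonnegative and simply drop out, and $\bpi_{u_Z}p'_{u_Z}=(\sum_v\tpi_v)\min_v p^*_v\leq\sum_v\tpi_v p^*_v$.

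Part (ii) is the main obstacle, as it requires converting a \emph{fractional} LP solution on $G$ into one on $\bG$ after shortcutting past $v$. The plan is to invoke the splitting-off machinery of Lemma~\ref{preflowsplit}. Take an optimal $(x^*,p^*)$ for $\lpname[(G,\tc,\tpi)]$; we may assume $p^*\leq\vec{1}$. Pick a large integer $N$ so that $Nx^*$ is integral, and view this as an $r$-preflow digraph $D$ on $V$ (with arc multiplicities). Apply Lemma~\ref{preflowsplit} with $s=v$ to obtain an $r$-preflow digraph $D'$ on $V\sm\{v\}$ satisfying $\ld_{D'}(r,u)=\ld_D(r,u)$ for all $u\neq v$. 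Each split-off pair $(u,v),(v,w)$ becomes a parallel copy of $(u,w)$, whose $\bc$-cost is at most $\tc_{u,v}+\tc_{v,w}$ by definition of $\bc$; arcs merely deleted from $\dt^\into(v)$ contribute nonnegative cost. Define $x'_a=(\text{multiplicity of $a$ in $D'$})/N$ and $p'_u=p^*_u$. By max-flow/min-cut duality, for any $u\in V(\bG)\sm\{r\}$ and any cut $S'\sse V(\bG)\sm\{r\}$ containing $u$, one has $|\dt^\into_{D'}(S')|\geq\ld_{D'}(r,u)=\ld_D(r,u)\geq N(1-p^*_u)$, using that in $D$ every $(r,u)$-cut $S$ satisfies $|\dt^\into_D(S)|\geq Nx^*(\dt^\into(S))\geq N(1-p^*_u)$ from the LP coverage constraint. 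Dividing by $N$ gives $x'(\dt^\into(S'))+p'_u\geq 1$; the preflow constraints in $\bG$ are inherited from $D'$. A direct arc-by-arc accounting then yields $\sum_a\bc_ax'_a+\sum_u\bpi_up'_u\leq\sum_a\tc_ax^*_a+\sum_u\tpi_up^*_u$, establishing (ii). Combining (i)--(iii) with the fact, as in the original proof, that $\sum_{S\sse V'}y_S=\sum_{S\sse V(\bG)\sm\{r\}}\by_S+\sum_{v\in V'}\tht_v$, and applying induction on $|V(G)|$ completes the argument.
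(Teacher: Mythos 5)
Your proposal is correct and follows essentially the same route as the paper's proof: the same three LP-level simplification steps (subtracting $\tht_v$ using the coverage constraint at singletons, shortcutting a zero-penalty node via the splitting-off machinery of Lemma~\ref{preflowsplit}, and contracting a zero-cost cycle by aggregating flow and taking $\bp_{u_Z}=\min_{v\in V(Z)}p^*_v$), chained by induction on $|V(G)|$. Your treatment of part (iii) is in fact slightly more explicit than the paper's (you verify the preflow constraint at $u_Z$ by summing the inequalities over $V(Z)$, which the paper leaves implicit), but the underlying argument is identical.
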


\begin{proof}
As with Theorem~\ref{pcarbthm}, we proceed by induction on $|V(G)|$. The statement again
holds trivially for the base cases, where $|V(G)|\leq 2$. So consider an instance
$(G,c,\pi)$ with $|V(G)|\geq 3$ and $\dt_G^\into(r)=\es$.

We first claim that $\lpopt(G,\tc,\tpi)\leq\lpopt(G,c,\pi)-\sum_{v\in V'}\tht_v$.
This follows because if $(x,p)$ is a feasible solution to 
$\lpname[(G,c,\pi)]$ (with $p_r=0$) then we have 
$$
\sum_{a\in A}c_ax_a+\sum_{v\in V}\pi_vp_v
=\sum_{v\in V'}\Bigl(\sum_{a\in\dt^\into(v)}\tc_ax_a+\tpi_vp_v+\tht_v\bigl(x(\dt^\into(v))+p_v\bigr)\Bigr)
\geq\lpopt(G,\tc,\tpi)+\sum_{v\in V'}\tht_v.
$$
Given the above, 
it suffices to argue that the vector 
$\ty$ that we have in the algorithm at the end of step~\eqref{zeropen-y},
\eqref{zerocyc-y}, or \eqref{zeroarb} satisfies $\sum_{S\sse V'}\ty_S\leq\lpopt(G,\tc,\tpi)$.   
If $\ty$ is the vector at step~\eqref{zeroarb}, then $\ty=\vec{0}$, so this holds
trivially. 

\medskip
Suppose that $\ty$ is obtained from step~\eqref{zeropen-y}. By the induction hypothesis, 
the vector $\by$ returned in step~\eqref{zeropen-rcall} satisfies 
$\sum_{S\sse V(\bG)\sm\{r\}}\by_S\leq\lpopt(\bG,\bc,\bpi)$. We show that the RHS is at
most $\lpopt(G,\tc,\tpi)$ by showing that any feasible solution 
$(x,p)$ to $\lpname[(G,\tc,\tpi)]$
induces a feasible solution to $\lpname[(\bG,\bc,\bpi)]$ of no greater cost.

Let $K$ be such that $Kx$ is 
integral. Consider the digraph $D=(V,E)$ obtained by including $Kx_{u,w}$ parallel
$(u,w)$ arcs for every $(u,w)\in A$. Observe that $D$ is an $r$-preflow digraph. 
We apply Lemma~\ref{preflowsplit} to $D$,
taking $s=v$, to obtain an $r$-preflow digraph $D'=(V\sm\{v\},E')$ with
$\ld_{D'}(r,u)=\ld_D(r,u)\geq K(1-p_u)$ for all $u\in V'\sm\{v\}$. 
We give every parallel edge $(u,w)$ in $E'$ cost equal to $\bc_{u,w}$.
Observe that $\bc(E')\leq\tc(E)$ since 
every edge $(u,w)\in E'\sm E$ is obtained by splitting off a pair $(u,v), (v,w)$, and 
$\bc_{u,w}\leq\tc_{u,v}+\tc_{v,w}$. Let $\bx\in\R^{A(\bG)}_+$ be the
vector where $\bx_{u,w}=(\text{no. of parallel copies of $(u,w)$ in $D'$})/K$. Then, note
that $\bigl(\bx,\{p_u\}_{u\in V\sm\{v\}}\bigr)$ is a feasible solution to 
$\lpname[(\bG,\bc,\bpi)]$ having objective value at most 
$\bc(E')/K+\sum_{u\in V'\sm\{v\}}\bpi_up_u\leq\sum_{a\in A}\tc_ax_a+\sum_{u\in V}\tpi_up_u$.

\medskip
Next, suppose that $\ty$ is obtained from step~\eqref{zerocyc-y}.
Again, by the induction hypothesis, the vector $\by$ returned in
step~\eqref{zerocyc-rcall} satisfies 
$\sum_{S\sse V(\bG)\sm\{r\}}\by_S\leq\lpopt(\bG,\bc,\bpi)$, and  we show that the RHS is
at most $\lpopt(G,\tc,\tpi)$.
Let $(x,p)$ be a feasible solution to $\lpname[(G,\tc,\tpi)]$.
Define $\bx\in\R_+^{A(\bG)}$ and $\bp\in\R_+^{V(\bG)}$ as follows.
For every arc $(u,v)\in A(\bG)$, where $u,v\in V\sm V(Z)$, set $\bx_{u,v}=x_{u,v}$; 
for every arc $(u,u_Z)\in A(\bG)$, set $\bx_{u,u_Z}=x\bigl(\dt^\out(u)\cap\dt^\into(Z)\bigr)$;
for every arc $(u_z,u)\in A(\bG)$, set $\bx_{u_z,u}=x\bigl(\dt^\into(u)\cap\dt^\out(Z)\bigr)$.
Set $\bp_u=p_u$ for all $u\in V\sm V(Z)$, and $\bp_{u_Z}:=\min_{u\in V(Z)}p_u$.
We claim that $(\bx,\bp)$ is a feasible solution to $\lpname[(\bG,\bc,\bpi)]$ of cost at
most the cost of $(x,p)$ for $\lpname[(G,\tc,\tpi)]$.

Any set $\bS\sse V(\bG)\sm\{r\}$ maps to a corresponding set $S\sse V'$, which is
$\bS$ if $u_Z\notin\bS$, and $\bS\sm\{u_z\}\cup V(Z)$ otherwise, and we have defined $\bx$
to ensure that $\bx\bigl(\dt^\into(\bS)\bigr)=x\bigl(\dt^\into(S)\bigr)$. So 
for any $u\in\bS$, taking $w=u$ if $u\neq u_Z$, and $w=\argmin_{v\in V(Z)}p_v$ if $u=u_Z$, 
we obtain that $\bx\bigl(\dt^\into(\bS)\bigr)+\bp_u=x\bigl(\dt^\into(S)\bigr)+p_w\geq 1$. 
We have $\sum_{a\in A(\bG)}c_a\bx_a\leq\sum_{a\in E}\tc_ax_a$, since each
arc $(u,u_Z)\in A(\bG)$ has $\bc_{u,u_Z}=\min_{v\in V(Z)}\tc_{u,v}$, and 
each $(u_Z,u)\in A(\bG)$ has $\bc_{u_Z,u}=\min_{v\in V(Z)}\tc_{v,u}$.
Finally, we also have 
$$
\sum_{v\in V(\bG)}\bpi_v\bp_v=
\sum_{v\in V\sm V(Z)}\tpi_vp_v+\Bigl(\sum_{v\in V(Z)}\tpi_v\Bigr)\cdot\min_{v\in V(Z)}p_v\leq
\sum_{v\in V\sm V(Z)}\tpi_vp_v+\sum_{v\in V(Z)}\tpi_vp_v.
$$
This completes the induction step, and hence the proof.
\end{proof}

\section{Applications} \label{apps}

\subsection{Orienteering} \label{orient-apps}
We now show that our algorithm for prize-collecting arborescence (PCA) can be used to
obtain a fast, combinatorial implementation of the LP-rounding based approximation
algorithms devised by Friggstad and Swamy~\cite{FriggstadS17} for orienteering.
The input to the orienteering problem consists of a (rational) metric space
$(V,c)$, root $r\in V$, a distance bound $B \geq 0$, and nonnegative node
rewards $\{\pi_v\}_{v \in V}$. Let $G=(V,E)$ denote the complete graph on $G$.
Three versions of orienteering are often considered in the literature. 
\begin{enumerate}[label=$\bullet$, topsep=0.5ex, noitemsep, leftmargin=*]
\item {\em Rooted orienteering}: find an $r$-rooted path of cost at
  most $B$ that collects the maximum reward.  
\item {\em Point-to-point (P2P) orienteering}: we are also given an end node $t$, and
  we seek an $r$-$t$ path of cost at most $B$ that collects maximum reward.
\item {\em Cycle orienteering}: 
  find a cycle containing $r$ of cost at most $B$ that collects maximum reward.
\end{enumerate}
By merging nodes at zero distance from each other, we may assume that all distances are
positive, and by scaling, we may further assume that they are positive integers. 
We may therefore also assume that $B$ is an integer.

Friggstad and Swamy~\cite{FriggstadS17} propose an LP-relaxation for rooted
orienteering, and show that an optimal LP-solution can be rounded to an integer
solution losing a factor of $3$. This is obtained by decomposing an LP-optimal
solution into a convex combination of out-arborescences, and then extracting a rooted path  
from these arborescences. They adapt their approach to also obtain a
$6$-approximation for P2P orienteering. 

We show that one can utilize \IterPCA to obtain 
combinatorial algorithms for rooted- and P2P- orienteering with the above approximation
factors. 
The high level idea is that Lagrangifying the ``cost at most $B$'' constraint for rooted
orienteering yields a prize-collecting walks problem, and by fine-tuning the value of the
Lagrangian variable, we can leverage \IterPCA to obtain a distribution of $r$-rooted trees
having expected cost at most $B$, and expected reward (essentially) at least the optimum
of the rooted orienteering problem. We can then combine this with the LP-rounding algorithm
in~\cite{FriggstadS17} to obtain the stated approximation factors.
Our algorithms can thus be seen as a combinatorial implementation of the LP-rounding
algorithms in~\cite{FriggstadS17}.  
For cycle orienteering, we adapt the above idea and the analysis in~\cite{FriggstadS17},
to obtain a combinatorial $4$-approximation algorithm.
%

We also leverage the certificate $y$ returned by \IterPCA (whose value
$\sum_{S\sse V'}y_S$ is a lower bound on the optimal value of \pcw problem)
and show that this can be used to provide {\em upper bounds on the optimal value of the 
\{rooted, P2P, cycle\}- orienteering problem}.
As mentioned  earlier, having such upper bounds is quite useful 
as 
it allows to assess the approximation guarantee on an instance-by-instance basis, which
can often be much better than the worst-case approximation guarantee (of $3$). Indeed, our
computational experiments in Section~\ref{compres} emphatically confirm this.

\vspace*{-1ex}
\paragraph{Finding a cost-bounded tree with good reward.} 
The following primitive will be the basis of all our algorithms. 
Let $N\sse V$ be a node-set, $r,w\in N$, and $\budg$ be a cost-budget such that 
$c_{rw}\leq L$.  
Let $Q^*$ be an $r$-rooted path such that $\{w\}\sse V(Q^*)\sse N$
(but $w$ need not be an end-node of $Q^*$) and $c(Q^*)\leq\budg$, 
and collecting the maximum reward among such paths; let $\ropt=\pi(V(Q^*))$. 

We would ideally like to find an $r$-rooted tree $T$ such that:
(a) $\{w\}\sse V(T)\sse N$; (b) $c(T)\leq\budg$; and
(c) $\pi(V(T))\geq\ropt$. We will not quite be able to
achieve this, we describe how to use our PCA algorithm to obtain a distribution of (at
most) two trees satisfying (a) with probability 1, and (b) and (c) in expectation.
Roughly speaking, we run \IterPCA 
for the prize-collecting $r$-rooted walks problem with
node-set $N$, metric $c$, and penalties $\{\ld\pi_v\}_{v\in N}$, and tune the parameter
$\ld$ using binary search to obtain the weighted trees; 
algorithm \BinSearchPCA describes this precisely.
%

\begin{function}[ht!]
\caption{BinSearchPCA($N,c,\pi,\budg,r,w;\e$): binary search using IterPCA}
\label{treeorient}
\KwOut{$(\gm_1,\T_1)$, $(\gm_2,\T_2)$ such that $\gm_1,\gm_2\geq 0$, $\gm_1+\gm_2=1$; if
  $\gm_1=1$, we do not specify $(\gm_2,\T_2)$}
\SetKwComment{simpc}{// }{}
\SetCommentSty{textnormal}
\DontPrintSemicolon

Let $D=(N,A)$ be the digraph obtained by restricting $G$ to the nodes in $N$
and bidirecting its edges, where both $(u,v)$ and $(v,u)$ get cost $c_{uv}$. 
Let $n=|N|$ and $N'=N\sm\{r\}$. \;

Let $c_{\max}=\max_{u,v\in N}c_{uv}$,
$\pi_{\min}=\min_{u\in N:\pi_u>0}\pi_u$, 
$\lb=\max\;\{\pi_u: u\in N, \min\{c_{ru},c_{rw}\}+c_{uw}\leq\budg\}$ \;

Set $\tpi_v(\ld)=\ld\pi_v$ for all $v\in N\sm\{w\}$, 
and $\tpi_w(\ld)=\tpi_w=nc_{\max}$ 

\tcp{For $\ld\geq 0$, let $\bigl(T_\ld,y^{(\ld)}\bigr)$ denote the tuple returned by
\IterPCA$\!(D,c,\tpi(\ld),r)$}

Let $\binub\gets nc_{\max}/\pi_{\min}$ and $\binlb\gets 1/\pi(N')$ 
\tcp*[r]{we show that $c(T_{\binlb})\leq\budg$}

\lIf{$c(T_\binub)\leq\budg$}{\Return{$(1,T_\binub)$}} \label{binupper}

Perform binary search in the interval $[\binlb,\binub]$ to find either:
(i) a value $\ld\in[\binlb,\binub)$ such that $c(T_\ld)=\budg$; or 
(ii) values $\ld_1,\ld_2\in[\binlb,\binub]$ with 
$0<\ld_2-\ld_1\leq\e\cdot\binlb^2\cdot\lb$ 
such that $c(T_{\ld_1})<\budg<c(T_{\ld_2})$ \;

\lIf{case (i) occurs}{\Return{$(1,T_\ld)$}} \label{casei}

\uIf{case (ii) occurs}{
Let $a,b\geq 0,\ a+b=1$ be such that 
$a\cdot c(T_{\ld_1})+b\cdot c(T_{\ld_2})=\budg$.
\Return{$(a,T_{\ld_1})$, $(b,T_{\ld_2})$} \label{caseii}
}
\end{function}

\begin{theorem} \label{pcabsearch}
Let $\e\in(0,1)$.
The output of \BinSearchPCA$\!(N,c,\pi,\budg,r,w;\e)$ satisfies the following:
(a) $\{w\}\sse V(\T_i)\sse N$ for $i=1,2$; 
(b) $\sum_{i=1}^2\gm_i c(\T_i)\leq\budg$; and
(c) $\sum_{i=1}^2\gm_i\pi(V(\T_i))\geq(1-\e)\ropt$.
\end{theorem}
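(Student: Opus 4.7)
The plan is to verify the three properties (a), (b), (c) in turn, with the substance concentrated in (c). The central tool is the following direct consequence of Theorem~\ref{pcarbthm}: applying the \IterPCA guarantee to the instance $(D,c,\tpi(\ld),r)$ and comparing the returned $T_\ld$ against the walk-collection $\{Q^*\}$, and noting that $w\in V(Q^*)\cap V(T_\ld)$ so the heavy summand $\tpi_w=nc_{\max}$ cancels from both sides of the pcval comparison, yields
\begin{equation*}
c(T_\ld)+\ld\cdot\ropt\;\leq\;c(Q^*)+\ld\cdot\pi(V(T_\ld))\qquad(\star)
\end{equation*}
for every $\ld$ touched by the algorithm.

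Part (a) is essentially by construction: $V(\T_i)\sse N$ since \IterPCA operates on the digraph $D$ on node set $N$, and $w\in V(\T_i)$ uses the choice $\tpi_w=nc_{\max}$---were $w\notin V(T_\ld)$, then $\pcval(T_\ld;D,c,\tpi(\ld))$ would include the summand $nc_{\max}$, whereas Theorem~\ref{pcarbthm} compared against the singleton walk $r\to w$ (whose pcval is $c_{rw}+\ld\pi(V'\setminus\{w\})$, bounded using $c_{rw}\leq L$ and $\ld\leq\binub$) forces a contradiction. Part (b) follows by inspection of each return branch: step~\ref{binupper} returns only under the explicit test $c(T_\binub)\leq L$; step~\ref{casei} has $c(T_\ld)=L$; and in step~\ref{caseii} the weights $a,b$ are chosen precisely so that $a\,c(T_{\ld_1})+b\,c(T_{\ld_2})=L$.

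Part (c), the reward bound, is the main step. In the full-cost returns---step~\ref{casei} where $c(T_\ld)=L\geq c(Q^*)$, and the typical portion of step~\ref{binupper} with $c(T_\binub)\geq c(Q^*)$---$(\star)$ combined with $c(T)\geq c(Q^*)$ collapses to $\pi(V(T))\geq\ropt$, giving full reward. The sub-case of step~\ref{binupper} with $c(T_\binub)<c(Q^*)$ exploits the scaling $\binub=nc_{\max}/\pi_{\min}$: $(\star)$ yields $\ropt-\pi(V(T_\binub))\leq(c(Q^*)-c(T_\binub))/\binub\leq L/\binub$, and the specific value of $\binub$ drives this deficit down to $\e\ropt$ in the range of $L$ used by the algorithm.

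For step~\ref{caseii}, the plan is to apply $(\star)$ at $\ld_1$ and $\ld_2$ separately, multiply by $a$ and $b$, and sum, using $a\,c(T_{\ld_1})+b\,c(T_{\ld_2})=L\geq c(Q^*)$, to get $\bar\ld\cdot\ropt\leq a\ld_1\pi_1+b\ld_2\pi_2$, where $\bar\ld:=a\ld_1+b\ld_2$ and $\pi_i:=\pi(V(T_{\ld_i}))$. The identity $a\ld_1\pi_1+b\ld_2\pi_2=\bar\ld(a\pi_1+b\pi_2)+ab(\ld_2-\ld_1)(\pi_2-\pi_1)$ rearranges this to
\begin{equation*}
a\pi_1+b\pi_2\;\geq\;\ropt-\frac{ab(\ld_2-\ld_1)(\pi_2-\pi_1)}{\bar\ld}.
\end{equation*}
The error term is at most $\e\lb/4$, using $ab\leq 1/4$, the binary-search precision $\ld_2-\ld_1\leq\e\binlb^2\lb$, $|\pi_2-\pi_1|\leq\pi(N')$, $\bar\ld\geq\binlb$, and the identity $\binlb\cdot\pi(N')=1$. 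Since $\ropt\geq\lb$ (any $u$ in the definition of $\lb$ lies, together with $w$, on a feasible path of reward at least $\pi_u$), this gives $a\pi_1+b\pi_2\geq(1-\e/4)\ropt\geq(1-\e)\ropt$. The main obstacle along the way is the binary-search bookkeeping: verifying that $c(T_\binlb)\leq L$ whenever step~\ref{binupper} does not fire (so the search brackets $L$ and terminates in case (i) or (ii)), and confirming the $w\in V(T_\ld)$ and $c(T_\binub)<c(Q^*)$ sub-case claims above; each follows from $(\star)$ applied at the endpoints together with the specific scalings of $\tpi_w,\binlb,\binub$, but requires care.
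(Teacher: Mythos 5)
Your overall strategy is the same as the paper's: apply Theorem~\ref{pcarbthm} to the instance $(D,c,\tpi(\ld),r)$ and compare against the walk-collection $\{Q^*\}$, yielding an inequality of pcval type (your $(\star)$ is a rearrangement of the paper's inequality~\eqref{pcbnd}), then multiply the $\ld_1$ and $\ld_2$ instances of $(\star)$ by the weights $a,b$, sum, and estimate the resulting error term using the precision of the binary search and the identity $\binlb\cdot\pi(N')=1$. Your algebraic treatment of the case-(ii) error term via $a\ld_1\pi_1+b\ld_2\pi_2=\bar\ld(a\pi_1+b\pi_2)+ab(\ld_2-\ld_1)(\pi_2-\pi_1)$ is a valid variant of the paper's (which instead bounds $a\ld_1\pi(\cdot)+b\ld_2\pi(\cdot)\leq\ld_2[a\pi(\cdot)+b\pi(\cdot)]-a(\ld_2-\ld_1)\pi(\cdot)$), and it works.

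However, two pieces of your argument have real gaps. First, your justification of part (a) compares against the singleton walk $r\to w$, whose pcval under $\tpi(\ld)$ is $c_{rw}+\ld\cdot\pi(N'\sm\{w\})$; at $\ld$ as large as $\binub=nc_{\max}/\pi_{\min}$, this is roughly $nc_{\max}\cdot\pi(N'\sm\{w\})/\pi_{\min}$, which need not be smaller than $nc_{\max}$, so no contradiction with $w\notin V(T_\ld)$ is forced. The right comparison is against a walk-collection covering \emph{all} of $N$ (e.g., the $n-1$ direct arcs from $r$), which gives the $\ld$-independent bound $\iopt(\ld)\leq (n-1)c_{\max}<nc_{\max}\leq\pcval(T_\ld)$ whenever $w\notin V(T_\ld)$. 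Second, your treatment of the return in step~\eqref{binupper} with $c(T_\binub)<c(Q^*)$ claims the deficit $L/\binub$ is at most $\e\ropt$ ``in the range of $L$ used by the algorithm,'' but $L/\binub=L\pi_{\min}/(nc_{\max})$ and nothing forces this to be small relative to $\ropt$. The correct observation, which makes the sub-case disappear entirely, is that at $\ld=\binub$ every uncovered non-root node with positive reward would contribute at least $\binub\cdot\pi_{\min}=nc_{\max}$ to $\pcval(\binub)$, exceeding $\iopt(\binub)\leq (n-1)c_{\max}$; hence $T_\binub$ covers all positive-reward nodes and $\pi(V(T_\binub))=\pi(N)\geq\ropt$, with no $(1-\e)$ loss needed. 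You also defer the verification that $c(T_\binlb)\leq\budg$ (so that the binary search is well-posed when step~\eqref{binupper} does not fire); this does hold, but the argument uses integrality of the costs together with $\iopt(\binlb)<c_{rw}+1\leq\budg+1$, not just ``$(\star)$ at the endpoints.''
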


\begin{proof}
We abbreviate $\iopt\bigl(D,c,\tpi(\ld),r\bigr)$
to $\iopt(\ld)$, and $\pcval\bigl(T_\ld;D,c,\tpi(\ld),r\bigr)$ to $\pcval(\ld)$.   
Let $\dualub(\ld)$ denote $\sum_{S\sse N'}y^{(\ld)}_S$.
Part (b) holds by construction.
Note that $\iopt(\ld)\leq (n-1)c_{\max}$ for all $\ld\geq 0$. 
So $T_\ld$ must include $w$ for all $\ld\geq 0$, and so (a) holds. 
We also have that $c(T_\binlb)\leq\budg$ as
otherwise, $c(T_\binlb)\geq\budg+1$, whereas $\iopt(\binlb)<c_{rw}+1\leq\budg+1$.

For all $\ld\geq 0$, we have 
$\iopt(\ld)\leq c(Q^*)+\ld\cdot\pi(N\sm V(Q^*))\leq\budg+\ld\cdot\pi(N\sm V(Q^*))$.
By Theorem~\ref{pcarbthm}, for all $\ld\geq 0$, we then have
\begin{equation}
c(T_\ld)+\ld\cdot\pi(N\sm V(T_\ld))=\pcval(\ld)\leq\dualub(\ld)\leq\iopt(\ld) 
\leq \budg+\ld\cdot\pi(N\sm V(Q^*)). 
\label{pcbnd}
\end{equation}

Note that $T_\binub$ must be an arborescence spanning $N$, otherwise we have
$\pcval(\binub)\geq nc_{\max}>\iopt(\binub)$. So if we return in step~\eqref{binupper}
then we are done.
If we return in step~\eqref{casei}, then again we are done, since \eqref{pcbnd}
implies that $\pi(V(T_\ld))\geq\ropt$.

Suppose we return in step~\eqref{caseii}.
Note that $\ropt\geq\lb$.
Multiplying \eqref{pcbnd} for $\ld=\ld_1$ by $a$, and \eqref{pcbnd} for $\ld=\ld_2$ by
$b$, and adding and simplifying,  
we obtain that
\begin{alignat*}{2}
&& a\ld_1\cdot\pi(N\sm V(T_{\ld_1}))+b\ld_2\cdot\pi(N\sm V(T_{\ld_2})) &
\leq (a\ld_1+b\ld_2)\cdot\pi(N\sm V(Q^*)) \\
\implies & \quad 
& \ld_2\Bigl[a\cdot\pi(N\sm V(T_{\ld_1}))+b\cdot\pi(N\sm V(T_{\ld_2}))\Bigr] & 
\leq\ld_2\cdot\pi(N\sm V(Q^*))+a(\ld_2-\ld_1)\cdot\pi(N\sm(V(T_{\ld_1}))).
\end{alignat*}
This implies that 
$a\cdot\pi(V(T_{\ld_1}))+b\cdot\pi(V(T_{\ld_2}))\geq\ropt
-\bigl(a\e\cdot\binlb^2\cdot\lb\cdot\pi(N')\bigr)/{\ld_2}\geq(1-\e)\ropt$.   
\end{proof}

We remark that we can avoid the $(1-\e)$-factor loss in Theorem~\ref{pcabsearch} (while
still retaining polynomial running time) by terminating the binary search 
at a smaller value of $\ld_2-\ld_1$; 
see Appendix~\ref{append-avoidloss}.

\subsubsection{Rooted orienteering} \label{rooted}
The rounding theorem of~\cite{FriggstadS17} is stated below, paraphrased to suit our
purposes.  
The {\em regret} (also called {\em excess}~\cite{BlumCKLMM07,BansalBCM04}) of a $u$-$v$
path $P$ with respect to its end-points is $\creg(P)=c(P)-c_{uv}$.

\begin{theorem}[\cite{FriggstadS17}] \label{orient-rnd}
Fix $w\in V$. Let $T_1,\ldots,T_k$ be rooted trees in $G$ with associated weights  
$\gm_1,\ldots,\gm_k\geq 0$ such that: 
(i) $\sum_{i=1}^k\gm_i=1$; (ii) $\sum_{i=1}^k \gm_ic(T_i)\leq B$; and 
(iii) 
$w\in V(T_i)$ for all $i=1,\ldots,k$. 
Then, for each $i=1,\ldots,k$, we can extract a rooted path $P_i$ from $T_i$ 
(visiting some subset of $V(T_i)$) with $\creg(P_i)\leq B-c_{rw}$, such that  
$\max_{i=1,\ldots,k}\pi\bigl(V(P_i)\bigr)\geq\frac{1}{3}\cdot\sum_{i=1}^k\gm_i\pi\bigl(V(T_i)\bigr)$.
\end{theorem}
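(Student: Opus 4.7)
The plan is to prove the theorem by reducing it to a per-tree claim via a weighted-averaging argument. It suffices to show that, for each $i$, one can extract a rooted path $P_i$ with $V(P_i) \sse V(T_i)$, $\creg(P_i) \leq B - c_{rw}$, and $\pi(V(P_i)) \geq \pi(V(T_i))/3$. Given such $P_i$'s, since $\sum_i \gm_i = 1$ we have $\max_i \pi(V(P_i)) \geq \sum_i \gm_i \pi(V(P_i)) \geq \frac{1}{3} \sum_i \gm_i \pi(V(T_i))$, which is the desired bound. So the averaging step is trivial once the per-tree guarantee is in hand.

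For each $T_i$, the plan is to construct three candidate $r$-rooted paths $P_i^{(1)}, P_i^{(2)}, P_i^{(3)}$, each of regret at most $B - c_{rw}$, whose combined vertex sets contain $V(T_i)$; then by pigeonhole the highest-reward candidate covers at least $\pi(V(T_i))/3$ of the reward, giving the per-tree claim. To construct the candidates, I would form an Euler tour $C_i$ of $T_i$ by doubling all its edges: this yields a closed $r$-rooted walk of total cost $2c(T_i)$ visiting every vertex of $T_i$. Since $w \in V(T_i)$, the tour $C_i$ passes through $w$; splitting $C_i$ at an occurrence of $w$ produces an $r$-to-$w$ prefix walk $W^+$ and a $w$-to-$r$ suffix walk $W^-$. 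Metric-shortcutting $W^+$, and reversing then shortcutting $W^-$, yields two candidate $r$-$w$ paths whose union already covers $V(T_i)$; a third candidate---obtained, e.g., by further splitting one of $W^+, W^-$ at an interior vertex $v$ and shortcutting the two halves, with $v$ chosen to have $c_{rv}$ suitably larger than $c_{rw}$---is added to maintain the regret bound in edge cases.

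The main obstacle is establishing the per-candidate regret bound $\creg\bigl(P_i^{(j)}\bigr) \leq B - c_{rw}$, since hypothesis~(ii) bounds only the weighted average $\sum_i \gm_i c(T_i) \leq B$ rather than the individual $c(T_i)$. For trees with $c(T_i) > B$, a naive Euler-split yields sub-walks of cost exceeding $B$, whose shortcut $r$-$w$ paths would have regret larger than $B - c_{rw}$. The key idea to overcome this is to exploit the triangle inequality through the anchor $w$: the regret of a path ending at $v$ is $c(P) - c_{rv}$, so ending the path at a vertex $v$ with $c_{rv} > c_{rw}$ buys extra slack $c_{rv} - c_{rw}$ in the regret expression. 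By carefully choosing DFS orientations and endpoints for the three candidates---in particular, having the costlier candidates end at vertices sufficiently far from $r$ along the Euler tour---the $2c(T_i)$ tour cost can be distributed across the three candidates so that each satisfies the bound $\creg(\cdot) \leq B - c_{rw}$, thereby completing the per-tree claim and the proof.
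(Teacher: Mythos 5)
Your reduction to a \emph{uniform} per-tree claim---``extract $P_i$ with $\creg(P_i) \leq B - c_{rw}$ and $\pi(V(P_i)) \geq \pi(V(T_i))/3$''---is where the proof breaks, and the trivial averaging step you lead with is precisely what has to be discarded. Hypothesis~(ii) bounds only the weighted average $\sum_i \gm_i c(T_i)$, so a single $T_i$ may have $c(T_i)$ arbitrarily larger than $B$; for such a tree no rooted path of regret at most $B - c_{rw}$ covers a constant fraction of $\pi(V(T_i))$. Concretely, let $T_i$ be a star centered at $r$ with $n$ unit-reward leaves at distance $D$ from $r$ and pairwise distance $2D$, one of them being $w$, and take $B = 2D$. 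Then $c_{rw} = D$, and any rooted path $P$ with $\creg(P) \leq B - c_{rw} = D$ can visit at most two leaves, so $\pi(V(P))/\pi(V(T_i)) = O(1/n)$, while $c(T_i) = nD$ can be made huge with $\gm_i$ small so that~(ii) still holds. Splitting the doubled Euler tour into three pieces cannot save you: to control every piece's regret you would need $\Theta(c(T_i)/B)$ pieces, and the extra slack available ``through the anchor $w$'' is bounded by $c_{rw}$, not by $c(T_i)$, so it does not scale with the tree cost.

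The actual argument (the paper cites~\cite{FriggstadS17} for this theorem, and reproduces the calculation in Appendix~\ref{append-ptpimprov} in the P2P setting) does not prove a uniform per-tree bound. Each $T_i$ is converted, by doubling the edges off the $r$-$w$ tree path and shortcutting, into a simple $r$-$w$ path $Q_i$ with $V(Q_i) = V(T_i)$ and $\creg(Q_i) \leq 2c(T_i) - 2c_{rw}$. The path-splitting lemma (Lemma~5.1 of~\cite{BlumCKLMM07}, or Lemma~2.2 of~\cite{FriggstadS14}) then yields a rooted path $P_i$ of regret at most $R := B - c_{rw}$ with the \emph{tree-dependent} guarantee $\pi(V(P_i)) \ge \pi(V(Q_i)) / \bigl(\creg(Q_i)/R + 1\bigr)$. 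The factor $3$ arises from the mediant inequality $\max_i (a_i/b_i) \geq (\sum_i \gm_i a_i)/(\sum_i \gm_i b_i)$, which is exactly where~(ii) finally enters:
\[
\max_i \pi\bigl(V(P_i)\bigr)
\;\ge\; \frac{\sum_i \gm_i \pi\bigl(V(Q_i)\bigr)}{\sum_i \gm_i\bigl(\creg(Q_i)/R + 1\bigr)}
\;\ge\; \frac{\sum_i \gm_i \pi\bigl(V(T_i)\bigr)}{\frac{2\sum_i \gm_i c(T_i) - 2c_{rw}}{B - c_{rw}} + 1}
\;\ge\; \frac{1}{3}\sum_i \gm_i \pi\bigl(V(T_i)\bigr).
\]
In short, the averaging has to happen \emph{inside} the ratio, not after a per-tree $1/3$; an expensive tree contributes a small fraction of its own reward, but also a small $\gm_i$, and these cancel in the weighted mediant.
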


Friggstad and Swamy~\cite{FriggstadS17} obtain the $T_i$s from an
optimal solution to the LP-relaxation for rooted orienteering that they propose, and they
satisfy $V(T_i)\sse\{u\in V: c_{ru}\leq c_{rw}\}$, so that each path $P_i$ obtained above
has $c(P_i)\leq B$.
But it is not hard to see that the trees required in Theorem~\ref{orient-rnd} 
can instead be supplied using algorithm \BinSearchPCA,
thereby avoiding the need for solving the LP in~\cite{FriggstadS17} and decomposing its
optimal solution into out-arborescences.

For any $w\in V$ with $c_{rw}\leq B$, let 
$\ropt_w$ be the 
maximum reward of rooted path that visits $w$,
and can only visit nodes in $\bV_w=\{u\in V: c_{rv}\leq c_{rw}\}$. 
Let $\e\in(0,1)$ be a given parameter. 
We execute \BinSearchPCA$\!(\bV_w,c,\pi,B,r,w;\e)$. Recall that this procedure 
calls \IterPCA on \pcw-instances of the form $(D,c,\tpi(\ld),r)$, for different
$\ld\geq 0$ values.
The output of \BinSearchPCA is a distribution over at most two $T_\ld$ trees, each
containing only nodes from $\bV_w$, which, by Theorem~\ref{pcabsearch}, 
satisfies the conditions of Theorem~\ref{orient-rnd} (when viewed as a weighted collection
of trees) 
and has expected reward at least $(1-\e)\ropt_w$. So combining this with
Theorem~\ref{orient-rnd} yields the following.

\begin{theorem} \label{orient-thm}
Considering all $w\in V$ with $c_{rw}\leq B$, and
applying Theorem~\ref{orient-rnd} to the output of \BinSearchPCA$(\bV_w,c,\pi,B,r,w;\e)$,
and taking the best solution returned, yields a
combinatorial $3/(1-\e)$-approximation algorithm for rooted orienteering.  
\end{theorem}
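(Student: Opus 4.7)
The plan is to guess the correct ``farthest vertex'' by enumeration and show that for this guess the combinatorial pipeline produces a path of near-optimal reward with cost at most $B$. Let $P^*$ be an optimal rooted orienteering path and let $\ropt$ denote its reward. Let $w^*$ be the node of $V(P^*)$ that maximizes $c_{rw^*}$; in particular, every node visited by $P^*$ lies in $\bV_{w^*} = \{u\in V: c_{ru}\leq c_{rw^*}\}$. Thus $P^*$ itself is an $r$-rooted path visiting $w^*$ with $V(P^*)\sse \bV_{w^*}$ and $c(P^*)\le B$, so $\ropt_{w^*}\ge\ropt$. Since $c_{rw^*}\le c(P^*)\le B$, the vertex $w^*$ is one of the choices that the outer loop enumerates.

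Next I would analyze the iteration $w=w^*$. By Theorem~\ref{pcabsearch} applied with $N=\bV_{w^*}$, $\budg=B$ and the pair $(r,w^*)$ (noting $c_{rw^*}\le B$), the call to \BinSearchPCA returns weighted trees $(\gm_1,\T_1),(\gm_2,\T_2)$ satisfying $\sum_i\gm_i=1$, $\{w^*\}\sse V(\T_i)\sse \bV_{w^*}$, $\sum_i\gm_i c(\T_i)\le B$, and $\sum_i\gm_i \pi(V(\T_i))\ge (1-\e)\ropt_{w^*}\ge(1-\e)\ropt$. These weighted trees meet the three hypotheses of Theorem~\ref{orient-rnd} with the chosen $w=w^*$.

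Now I invoke Theorem~\ref{orient-rnd} to extract, for each $i$, a rooted path $P_i$ with $V(P_i)\sse V(\T_i)\sse\bV_{w^*}$ and $\creg(P_i)\le B-c_{rw^*}$. The key cost check is: if $u_i$ is the non-root endpoint of $P_i$, then $u_i\in\bV_{w^*}$ gives $c_{ru_i}\le c_{rw^*}$, so
\[
c(P_i) \;=\; c_{ru_i} + \creg(P_i) \;\le\; c_{rw^*} + (B-c_{rw^*}) \;=\; B,
\]
i.e., $P_i$ is feasible for rooted orienteering. Theorem~\ref{orient-rnd} also guarantees
\[
\max_i \pi(V(P_i)) \;\ge\; \tfrac{1}{3}\sum_i \gm_i \pi(V(\T_i)) \;\ge\; \tfrac{1-\e}{3}\cdot \ropt.
\]
Hence the best solution produced over all $w$ (in particular, the one obtained at $w=w^*$) has reward at least $(1-\e)\ropt/3$, yielding the stated $3/(1-\e)$-approximation.

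The main obstacles are mild: one must (i) correctly identify the guess $w^*$ so that the trees produced by \BinSearchPCA use only nodes in $\bV_{w^*}$, which is essential to bound $c(P_i)\le B$ via the regret guarantee, and (ii) verify polynomial running time -- \IterPCA runs in polynomial time (Theorem~\ref{pcarbthm}), the binary search in \BinSearchPCA has polynomially many iterations by the choice of $\binlb,\binub$ and $\e$, and the outer enumeration over $w$ adds only an $O(n)$ factor, so the overall algorithm is polynomial and entirely combinatorial (no LP is solved).
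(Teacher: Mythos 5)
Your proof is correct and follows essentially the same route as the paper: guess the farthest node $w^*$ on the optimal path so that $\ropt_{w^*}\geq\ropt$, use Theorem~\ref{pcabsearch} to show the output of \BinSearchPCA satisfies the hypotheses of Theorem~\ref{orient-rnd}, and observe that every extracted path has cost at most $B$ because its endpoint lies in $\bV_{w^*}$ and its regret is bounded by $B-c_{rw^*}$. You spell out the feasibility check $c(P_i)=c_{ru_i}+\creg(P_i)\leq B$ slightly more explicitly than the paper, which only states this in passing, but the argument is identical in substance.
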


\vspace*{-2ex}
\paragraph{Upper bound.} 
%
For a given guess $w$, and any $\ld\ge 0$, we have from \eqref{pcbnd} that 
$\dualub(\ld)\leq\iopt(\ld)\leq B+\ld\bigl(\pi(\bV_w)-\ropt_w\bigr)$.
Rearranging gives 
$\ropt_w\leq\UBrooted(w,B;\ld):=\pi(\bV_w)+\frac{B-\dualub(\ld)}{\ld}$.
This holds for all $\ld\geq 0$, so we have
$\ropt_w\leq\min_{\ld\geq 0}\UBrooted(w,B;\ld)$.%
\footnote{As $\ld\to\infty$, $\UBrooted(w,B;\ld)$ approaches the trivial upper bound 
$\pi(\bV_w)$; we expect the term $\frac{B-\dualub(\ld)}{\ld}$ to be negative for large
$\ld$ (unless the min-cost arborescence spanning $\bV_w$ has cost at most $B$), 
and hence to yield 
{savings over this trivial upper bound.}} 
Since we do not know the right choice for $w$, 
we can say that the optimal value for rooted orienteering is at most 
$\UBrooted(B):=\max_{w\in V:c_{rw}\leq B}\min_{\ld\geq 0}\UBrooted(w,B;\ld)$.%
\footnote{Note that $\UBrooted(r,B;\ld)=\pi_r$ for all $\ld\geq 0$, since $\bV_r=\{r\}$ and so
$T_\ld=\es$ and $\dualub(\ld)=0$ for all $\ld$.}

\begin{remark} \label{lpbounds}
We can strengthen our bounds to show that our approximation guarantee and our upper
bound $\UBrooted(B)$, both hold with respect to the optimal value of the LP-relaxation for
rooted orienteering proposed by~\cite{FriggstadS14}. More precisely, let $\rorientlpopt$
be the optimal value of the LP-relaxation for rooted orienteering in~\cite{FriggstadS17}. 
In Appendix~\ref{append-lpbounds}, we show that if $\rewd$ is the reward of the solution
returned by Theorem~\ref{orient-thm}, then we have 
$\rorientlpopt\leq\UBrooted(B)\leq\frac{3}{1-\e}\cdot\rewd$. In particular, similar
to~\cite{FriggstadS17}, we obtain an approximation guarantee of $\frac{3}{1-\e}$ with
respect to the LP-optimum. 
\end{remark}

\subsubsection{P2P orienteering} \label{ptporient}
Recall that here we seek an $r$-$t$ path of cost at most $B$ that achieves
maximum reward.
Friggstad and Swamy~\cite{FriggstadS17} show that one can utilize Theorem~\ref{orient-rnd}
on two suitable weighted collections of trees to obtain a $6$-approximation for P2P
orienteering. Suppose we ``guess'' the node $w$ on the optimal $r$-$t$ path with largest 
$c_{ru}+c_{ut}$ value.  
Suppose that we can obtain two weighted collections of trees
$\bigl(\gm_i^{(r)},T_i^{(r)}\bigr)_{i=1,\ldots,k}$ and 
$\bigl(\gm_i^{(t)},T_i^{(t)}\bigr)_{i=1,\ldots,\ell}$ such that:
\begin{enumerate}[(a), topsep=0.5ex, noitemsep, leftmargin=*]
\item $\bigl(\gm_i^{(r)},T_i^{(r)}\bigr)_{i=1,\ldots,k}$ satisfies the
conditions of Theorem~\ref{orient-rnd} with root $r$ and cost budget $B^{(r)}=B-c_{wt}$;
\item 
$\bigl(\gm_i^{(t)},T_i^{(t)}\bigr)_{i=1,\ldots,\ell}$ satisfies the
conditions of Theorem~\ref{orient-rnd} with root $t$ and cost budget $B^{(t)}=B-c_{rw}$;
\item each tree in both collections contains only nodes $u$ with 
$c_{ru}+c_{ut}\leq c_{rw}+c_{wt}$.
\end{enumerate}
%
We apply Theorem~\ref{orient-rnd} with root $r$ and budget $B^{(r)}$ 
on the $(\gm_i^{(r)},T_i^{(r)})_{i=1,\ldots,k}$ collection, 
and apply Theorem~\ref{orient-rnd} with root $t$ and budget $B^{(t)}$
on the $(\gm_i^{(t)},T_i^{(t)})_{i=1,\ldots,\ell}$ collection.  
Due to (c), the path extracted from any of the trees in the two
collections extends to an $r$-$t$ path of cost at most $B$. So one of the paths
extracted from the two collections attains reward  
at least $\frac{1}{6}$ times the total weighted reward of the two collections.

In~\cite{FriggstadS17}, the two collections are obtained from an optimal solution to their
P2P-orienteering LP. But, as with rooted orienteering, we can utilize \BinSearchPCA
to obtain the two collections. 
More precisely, let $w$ be a guess with $c_{rw}+c_{wt}\leq B$.  
Let $P^*_w$ be the optimal P2P-orienteering solution that visits $w$, and only visits
nodes in $\bV^{\ptp}_w=\{u\in V: c_{ru}+c_{ut}\leq c_{rw}+c_{wt}\}$. 
Let $\roptptp_w=\pi(V(P^*_w))$.
For any two nodes $u,v\in V(P^*_w)$, we use $P^*_{w,uv}$ to denote the $u$-$v$ portion of 
$P^*_w$.
%
\begin{enumerate}[label=$\bullet$, topsep=0.5ex, noitemsep, leftmargin=*]
\item 
\BinSearchPCA$\!(\bV^{\ptp}_w,c,\pi,B-c_{wt},r,w;\e)$ yields a distribution
over at most two $r$-rooted trees, each containing $w$ and only nodes from $\bV^{\ptp}_w$, with
expected cost at most $B-c_{wt}$ and expected reward at least $(1-\e)\pi(V(P^*_{w,rw}))$. 
\item 
\BinSearchPCA$\!(\bV^{\ptp}_w,c,\pi,B-c_{rw},t,w;\e)$ yields a distribution over at most two
$t$-rooted trees, each containing $w$ and only nodes from $\bV^{\ptp}_w$,  
with expected cost at most $B-c_{rw}$ and expected reward at least
$(1-\e)\pi(V(P^*_{w,wt}))$. 
\end{enumerate}
Thus, as discussed above, using Theorem~\ref{orient-rnd} on these two distributions, we
obtain an $r$-$t$ path of reward at least $\frac{1-\e}{6}\cdot\roptptp_w$. Trying
all $w\in V$ with $c_{rw}+c_{wt}\leq B$, and returning the best solution yields a
$6/(1-\e)$-approximation.

As a side-note, we can refine the analysis in~\cite{FriggstadS17} to show that their
P2P-orienteering algorithm is in fact a $4$-approximation algorithm. 
We discuss this in Appendix~\ref{append-ptpimprov}

\vspace*{-1ex}
\paragraph{Upper bound.} 
We can easily extend the upper-bound approach used for rooted orienteering as follows. 
For a given guess $w$, let $\dualub_r(\ld)$ denote 
$\sum_{S\sse N'}y^{(\ld)}_S$ in the run 
\BinSearchPCA$\!(\bV^{\ptp}_w,c,\pi,B-c_{wt},r,w;\e)$, and 
$\dualub_t(\ld)$ denote 
$\sum_{S\sse N'}y^{(\ld)}_S$ in the run 
\BinSearchPCA$\!(\bV^{\ptp}_w,c,\pi,B-c_{rw},t,w;\e)$.

For any $r$-rooted walk $Q$ such that $\{w\}\sse V(Q)\sse\bV^\ptp_w$, and any $\ld\geq 0$,
we have that $\dualub_r(\ld)\leq c(Q)+\ld\cdot\pi(\bV^\ptp_w)-\ld\cdot\pi(V(Q))$, which
implies that $\pi(V(Q))\leq\pi(\bV^\ptp_w)+\frac{c(Q)-\dualub_r(\ld)}{\ld}$. 
\begin{enumerate}[label=$\bullet$, topsep=1ex, itemsep=0.75ex, leftmargin=*]
\item Taking $Q=P^*_{w,rw}$ gives 
$\pi\bigl(V(P^*_{w,rw})\bigr)\leq\UBptprw(w,B;\ld):=\pi(\bV^{\ptp}_w)+\frac{B-c_{wt}-\dualub_r(\ld)}{\ld}$.
\item Taking $Q=P^*_w$ gives
$\roptptp_w=\pi\bigl(V(P^*_w)\bigr)\leq\pi(\bV^\ptp_w)+\frac{B-\dualub_r(\ld)}{\ld}$.
\end{enumerate}
Similarly, for any $t$-rooted walk $Q$ with $\{w\}\sse V(Q)\sse\bV^\ptp_w$, and any
$\ld\geq 0$, we have that \linebreak
$\pi(V(Q))\leq\pi(\bV^\ptp_w)+\frac{c(Q)-\dualub_t(\ld)}{\ld}$.
Therefore
$$
\pi\bigl(V(P^*_{w,wt})\bigr)\leq\UBptpwt(w,B;\ld):=\pi(\bV^{\ptp}_w)+\frac{B-c_{rw}-\dualub_t(\ld)}{\ld},
\qquad
\roptptp_w\leq\pi(\bV^\ptp_w)+\frac{B-\dualub_t(\ld)}{\ld}.
$$
%
%
Combining all these bounds, we obtain that
\begin{multline*}
\UBptp(B):=\max_{w\in V:c_{rw}+c_{wt}\leq B}\ \min\Bigl\{
\pi(\bV^{\ptp}_w)+
\min_{\ld\geq 0}\,\min\bigl\{\tfrac{B-\dualub_r(\ld)}{\ld},\tfrac{B-\dualub_t(\ld)}{\ld}\bigr\}, \\
\Bigl(\min_{\ld\geq 0}\,\UBptprw(w,B;\ld)
+\min_{\ld\geq 0}\,\UBptpwt(w,B;\ld)\Bigr)\Bigr\}
\end{multline*}
is an upper bound on the optimal value 
for P2P orienteering.

\subsubsection{Cycle orienteering} \label{cycorient}
Recall that here we seek a cycle containing $r$ of cost at most $B$ that achieves maximum 
reward. Taking $t=r$ in our approach for P2P-orienteering yields a combinatorial
$6$-approximation algorithm.%
\footnote{If $w$ is a node on the optimal solution with maximum $c_{ru}$, then
running Theorem~\ref{orient-rnd} on the output of 
\BinSearchPCA$\!(\bV_w,c,\pi,B-c_{rw},r,w;\e)$, where $\bV_w=\{u\in V:c_{ru}\leq c_{rw}\}$, 
yields an $r$-rooted path $P$ ending at some node $u\in\bV_w$ with 
$\creg(P)\leq B-2c_{rw}$ and $\pi(V(P))$ at least $\frac{1-\e}{6}$ times the optimum; $P$
then extends to a cycle of cost at most $B$.}
But we can refine this approach and utilize \BinSearchPCA to obtain a $4$-approximation, 
as also refine our upper-bounding strategy, by leveraging the fact that the tree returned
by \IterPCA has prize-collecting cost at most the optimal value of \eqref{eq:primal}.  

Following a familiar theme, for any $w\in V$ with $c_{rw}\leq B/2$, let $\cyc^*_w$ be the
optimal cycle-orienteering solution that visits $w$, and only visits nodes in
$\bV_w=\{u\in V: c_{ru}\leq c_{rw}\}$. Let $\roptcyc_w=\pi(V(\cyc^*_w))$.
Consider the distribution over (at most two)
$r$-rooted trees output by \BinSearchPCA$\!(\bV_w,c,\pi,B/2,r,w;\e)$. We claim that this
has expected reward at least $\roptcyc_w/2-\e\cdot\roptcyc_w$. This is because
sending a 
$\frac{1}{2}$-unit of flow from $r$ to $w$ along the two $r$-$w$ paths in $\cyc^*_w$
yields a solution to the LP-relaxation $\lpname[(D,c,\tpi(\ld),r)]$ of objective value at
most $B/2+\ld\bigl(\pi(\bV_w)-\roptcyc_w/2-\pi_r/2-\pi_w/2\bigr)
\leq B/2+\ld\bigl(\pi(\bV_w)-\roptcyc_w/2\bigr)$. 
The claim then follows by reasoning exactly as in the proof of Theorem~\ref{pcabsearch}.

If \BinSearchPCA returns a single tree (of cost at most $B/2$), then doubling and
shortcutting yields a rooted cycle of cost at most $B$ and reward at least
$\roptcyc_w/2$. Otherwise, suppose \BinSearchPCA returns $(a,T_{\ld_1})$, $(b,T_{\ld_2})$.
We convert each $T_{\ld_i}$ to an 
$r$-$w$ path $Q_i$, and  
we can extract a rooted path $P_i$ from $Q_i$ ending at some node in $\bV_w$ such that 
$\creg(P_i)\leq B-2c_{rw}$, and 
$\pi(V(P_i))\geq\pi(V(Q_i))/\bigl(\frac{\creg(Q_i)}{B-2c_{rw}}+1\bigr)$  
(see Lemma 5.1 in~\cite{BlumCKLMM07}, or Lemma 2.2 in~\cite{FriggstadS14}). 
Each $P_i$ can thus be completed to a rooted cycle of cost at most $B$. 
Also, noting that $a\cdot\creg(Q_1)+b\cdot\creg(Q_2)\leq B-2c_{rw}$, 
it follows that one of the $P_i$s has reward at least 
$$
\frac{a\cdot\pi(V(Q_1))+b\cdot\pi(V(Q_2))}{a\cdot\frac{\creg(Q_1)}{B-2c_{rw}}+b\cdot\frac{\creg(Q_2)}{B-2c_{rw}}+1}
\geq\frac{(\frac{1}{2}-\e)\cdot\roptcyc_w}{2}=\frac{1-2\e}{4}\cdot\roptcyc_w.
$$

\begin{theorem} \label{cycorientthm} 
Executing the above procedure for all $w\in V$ with $c_{rw}\leq B/2$, and returning the 
best solution yields a $4/(1-2\e)$-approximation algorithm for cycle orienteering. 
\end{theorem}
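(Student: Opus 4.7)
The plan is to show that for the correct guess of $w$, namely the node on an optimal cycle $\cyc^*$ that maximizes $c_{rw}$, the procedure produces a cycle of cost at most $B$ with reward at least $\frac{1-2\e}{4}\pi(V(\cyc^*))$; since $\cyc^*$ visits only nodes in $\bV_w$, we have $\roptcyc_w=\pi(V(\cyc^*))$, so trying all candidate nodes $w$ (and returning the best solution found) gives the stated approximation. Given this reduction, the proof reduces to verifying the claim already sketched: that \BinSearchPCA$(\bV_w,c,\pi,B/2,r,w;\e)$ returns a distribution over at most two $r$-rooted trees contained in $\bV_w$, of expected cost at most $B/2$, and of expected reward at least $(\tfrac{1}{2}-\e)\roptcyc_w$.

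For the expected-reward bound, I would follow the reasoning already used to prove Theorem~\ref{pcabsearch}, but relative to $\lpopt$ rather than $\iopt$. Concretely, splitting $\cyc^*_w$ into its two $r$-$w$ paths and putting weight $\tfrac12$ on each arc yields a feasible preflow for $\lpname[(D,c,\tpi(\ld),r)]$ whose coverage variables $p_v$ are at most $\tfrac12$ for every $v\in V(\cyc^*_w)$ (in particular for $r$ and $w$). Its objective is at most $B/2+\ld\bigl(\pi(\bV_w)-\tfrac12\roptcyc_w\bigr)$. Since by Theorem~\ref{ybnd} we have $\pcval(\ld)\leq\dualub(\ld)\leq\lpopt(\ld)$, we can substitute this improved upper bound in place of $\iopt(\ld)$ in equation~\eqref{pcbnd} of the proof of Theorem~\ref{pcabsearch}, and the rest of the convex-combination argument goes through verbatim, giving expected reward at least $(\tfrac12-\e)\roptcyc_w$.

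For the conversion to a cycle, the cost bound of $B/2$ on each returned tree immediately lets us double-and-shortcut in the single-tree case to obtain a rooted cycle of cost at most $B$ and reward at least $\tfrac12\roptcyc_w$. In the two-tree case, I would convert each $T_{\ld_i}$ into an $r$-$w$ walk (and then path) $Q_i$ by doubling and shortcutting, which satisfies $c(Q_i)\leq 2c(T_{\ld_i})$ and visits all nodes of $T_{\ld_i}$; hence $a\cdot\creg(Q_1)+b\cdot\creg(Q_2)\leq a\cdot c(Q_1)+b\cdot c(Q_2)-2c_{rw}\leq B-2c_{rw}$. Then I invoke the excess-trimming lemma (Lemma 5.1 of~\cite{BlumCKLMM07} / Lemma 2.2 of~\cite{FriggstadS14}) to extract from each $Q_i$ an $r$-rooted sub-path $P_i$ ending in $\bV_w$ with $\creg(P_i)\leq B-2c_{rw}$ and $\pi(V(P_i))\geq\pi(V(Q_i))/\bigl(\tfrac{\creg(Q_i)}{B-2c_{rw}}+1\bigr)$. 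Each $P_i$ can be closed into a cycle at $r$ of total cost $\creg(P_i)+2c_{rw}\leq B$.

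Finally, I would finish with the averaging bound already displayed in the text: the maximum of the two rewards $\pi(V(P_i))$ is at least the weighted average with weights $a,b$, and by the standard inequality $\max_i\frac{x_i}{y_i}\geq\frac{\sum a_ix_i}{\sum a_iy_i}$ (for nonnegative $y_i$) applied to the ratios above, this is at least $\frac{a\cdot\pi(V(Q_1))+b\cdot\pi(V(Q_2))}{a\cdot\creg(Q_1)/(B-2c_{rw})+b\cdot\creg(Q_2)/(B-2c_{rw})+1}\geq\frac{(\tfrac12-\e)\roptcyc_w}{2}=\frac{1-2\e}{4}\roptcyc_w$, completing the proof. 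The only subtlety I anticipate is the LP-based strengthening in the second paragraph: namely checking that the half-integral solution I write down really is feasible for \eqref{eq:primal} on $(D,c,\tpi(\ld),r)$, and that the proof of Theorem~\ref{pcabsearch} carries through unchanged when $\iopt$ is replaced by $\lpopt$; everything else is essentially bookkeeping against the already-proved guarantees of \IterPCA and \BinSearchPCA.
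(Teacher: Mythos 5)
Your proposal follows the paper's proof closely: same choice of $w$ (the node on an optimal cycle maximizing $c_{rw}$), same LP-feasibility argument via the half-flow along the two $r$-$w$ paths of $\cyc^*_w$ together with Theorem~\ref{ybnd}, same excess-trimming lemma and mediant bound. But there is a concrete error in the two-tree case. Since $Q_i$ is an $r$-$w$ path, $\creg(Q_i)=c(Q_i)-c_{rw}$, so
\[
a\cdot\creg(Q_1)+b\cdot\creg(Q_2)=a\cdot c(Q_1)+b\cdot c(Q_2)-c_{rw},
\]
not $\leq a\cdot c(Q_1)+b\cdot c(Q_2)-2c_{rw}$ as you wrote. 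Combined with the naive doubling bound $c(Q_i)\leq 2c(T_{\ld_i})$, this only gives $a\cdot\creg(Q_1)+b\cdot\creg(Q_2)\leq B-c_{rw}$. That is not enough: the excess-trimming denominator then becomes $\frac{B-c_{rw}}{B-2c_{rw}}+1$, which blows up as $c_{rw}\to B/2$ rather than being bounded by $2$.

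To get the required $B-2c_{rw}$ bound you need the sharper conversion from tree to $r$-$w$ path: double every edge of $T_{\ld_i}$ \emph{except} those on the unique $r$-$w$ path $R_i$ inside $T_{\ld_i}$, take the Eulerian $r$-$w$ walk of the resulting multigraph and shortcut. Since $c(R_i)\geq c_{rw}$ by the triangle inequality, this gives $c(Q_i)\leq 2c(T_{\ld_i})-c(R_i)\leq 2c(T_{\ld_i})-c_{rw}$, hence $\creg(Q_i)\leq 2c(T_{\ld_i})-2c_{rw}$, and so
\[
a\cdot\creg(Q_1)+b\cdot\creg(Q_2)\leq 2\bigl(a\cdot c(T_{\ld_1})+b\cdot c(T_{\ld_2})\bigr)-2c_{rw}\leq B-2c_{rw}.
\]
With this fix the rest of your mediant argument yields the stated $\frac{1-2\e}{4}\cdot\roptcyc_w$ bound, and the remainder of the proposal matches the paper's proof.
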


\vspace*{-2ex}
\paragraph{Upper bound.}
Fix a given $w$ with $c_{rw}\leq B/2$. 
Let $\dualub(\ld)$ denote $\sum_{S\sse N'}y^{(\ld)}_S$ in the run
\BinSearchPCA$\!(\bV_w,c,\pi,\budg,r,w;\e)$. (For a fixed $\ld\geq 0$, the
instance $(D,c,\tpi(\ld),r)$ constructed in \BinSearchPCA$\!(\bV_w,c,\pi,\budg,r,w;\e)$,
and hence $\dualub(\ld)$, does not depend on $\budg$.) 
Comparing $\dualub(\ld)$ with the prize-collecting cost of two solutions, then yields
the following upper bounds.
\begin{enumerate}[label=$\bullet$, topsep=0.5ex, noitemsep, leftmargin=*]
\item Considering $\cyc^*_w$ (interpreted as two $r$-$w$ walks), we obtain that 
$\dualub(\ld)\leq B+\ld\bigl(\pi(\bV_w)-\roptcyc_w)$,
or $\roptcyc_w\leq\UBrooted(w,B;\ld)$, where recall that 
$\UBrooted(w,B;\ld):=\pi(\bV_w)+\frac{B-\dualub(\ld)}{\ld}$.


\item 
Considering the fractional solution 
where we send $\frac{1}{2}$-unit of flow along the two $r$-$w$ paths in $\cyc^*_w$,
we have that 
$\dualub(\ld)\leq B/2+\ld\bigl(\pi(\bV_w)-\roptcyc_w/2-\pi_r/2-\pi_w/2\bigr)$, which 
leads to
$\roptcyc_w\leq\UBcychalfb(w,B;\ld):=2\cdot\pi(\bV_w)-\pi_r-\pi_w+\frac{B-2\cdot\dualub(\ld)}{\ld}$. 

\end{enumerate}
Combining these bounds, we obtain that the optimal value for cycle orienteering is at most
$$
\UBcyc(B):=\max_{w\in V:c_{rw}\leq B/2}\min_{\ld\geq 0}\,
\min\,\Bigl\{\UBrooted(w,B;\ld),\UBcychalfb(w,B;\ld)\Bigr\}.
$$

\subsection{The \boldmath $k$ minimum-latency problem (\kmlp)} \label{kmlp-apps}
Recall that in \kmlp, 
we have a metric space $(V,c)$ and root
$r\in V$. The goal is to find (at most) $k$-rooted paths that together cover every node,
so as to minimize the sum of visiting times of the nodes. The current-best approximation
ratio for \kmlp is $7.183$~\cite{PostS15}. Post and Swamy~\cite{PostS15} devise two
algorithms for \kmlp, both having approximation ratio (roughly) $7.183$. 
One of their algorithms is ``more combinatorial'' (see Algorithm 3, in Section 6.2
in~\cite{PostS15}) and relies on 
having access to the following procedure: 
\begin{enumerate}[label=, topsep=0.5ex, noitemsep, leftmargin=2ex]
\item Given a node-set $N\sse V$, root $r\in N$, and an integer $1\leq k\leq|N|$, let
  $\budg^*$   
  be the minimum cost of a collection of $r$-rooted walks that together cover at least $k$
  nodes. Find a distribution over $r$-rooted trees, that in expectation cover at least $k$
  nodes, and whose expected cost is at most $\budg^*$. 
\end{enumerate}
In~\cite{PostS15}, the distribution is obtained by applying the arborescence-packing
result of Bang-Jensen et al.~\cite{BangjensenFJ95} to the optimal solution to
\eqref{eq:primal} with node rewards $\ld$ to obtain a rooted tree of no-greater
prize-collecting cost,
and then varying $\ld$ in a binary-search procedure (as we do) to obtain the desired
distribution (see the proof of Corollary 3.3 in~\cite{PostS15}).%
\footnote{
  More directly,
  one can modify \eqref{eq:primal} by changing the prize-collecting objective to
  the cost objective, and adding the coverage constraint to the LP, and
  apply~\cite{BangjensenFJ95} to decompose the optimal solution to this LP into a convex
  combination of $r$-rooted trees.}
We can instead 
utilize \IterPCA within a binary-search procedure to obtain the
desired distribution (over at most two trees).  
Incorporating this in the more-combinatorial algorithm of~\cite{PostS15}
yields a fully (and truly) combinatorial $7.183$-approximation algorithm for \kmlp.  

\begin{function}[ht!]
\caption{BSearch($N,k$): binary search using IterPCA}
\label{ntreeorient}
\KwOut{$(\gm_1,\T_1)$, $(\gm_2,\T_2)$ such that $\gm_1,\gm_2\geq 0$, $\gm_1+\gm_2=1$; if
  $\gm_1=1$, we do not specify $(\gm_2,\T_2)$}
\SetCommentSty{textnormal}
\DontPrintSemicolon

Let $D=(N,A)$ be the digraph obtained by restricting $G$ to the nodes in $N$
and bidirecting its edges, where both $(u,v)$ and $(v,u)$ get cost $c_{uv}$. 
Let $n=|N|$. \linebreak 
Let $c_{\max}=\max_{u,v\in N}c_{uv}$. Let $M$ be an integer (computable in polytime) with
$\log M=\poly(\text{input size})$ such that the $c_e$s are integer multiples of
$\frac{1}{M}$. 

\tcp{For $\ld\geq 0$, let $\bigl(T_\ld,y^{(\ld)}\bigr)$ denote the tuple returned by
\IterPCA$\!(D,c,\{\pi_v=\ld\}_{v\in N},r)$}

Let $\binub\gets nc_{\max}$ and $\binlb\gets 0$ 
\tcp*[r]{we show that $|V(T_{\binlb})|\leq k\leq|V(T_{\binub})|$}

\lIf{$|V(T_\binlb)|=k$}{\Return{$(1,T_\binlb)$}} \label{blower}
\lIf{$|V(T_\binub)|=k$}{\Return{$(1,T_\binub)$}} \label{bupper}

Perform binary search in the interval $[\binlb,\binub]$ to find either: 
(i) a value $\ld\in[\binlb,\binub)$ such that $|V(T_\ld)|=k$; or 
(ii) values $\ld_1,\ld_2\in[\binlb,\binub]$ with $0<\ld_2-\ld_1\leq 1/n^2M$
such that $|V(T_{\ld_1})|<k<|V(T_{\ld_2})|$ \;

\lIf{case (i) occurs}{\Return{$(1,T_\ld)$}} \label{ncasei}

\uIf{case (ii) occurs}{
Let $a,b\geq 0,\ a+b=1$ be such that 
$a\cdot |V(T_{\ld_1})|+b\cdot |V(T_{\ld_2})|=k$.
\Return{$(a,T_{\ld_1})$, $(b,T_{\ld_2})$} \label{ncaseii}
}
\end{function}

\begin{theorem} \label{bsearch}
The output of \BSearch$\!(N,k)$ satisfies:
(a) $\sum_{i=1}^2\gm_i c(\T_i)\leq\budg^*$; and
\nolinebreak\mbox{(b) $\sum_{i=1}^2\gm_i|V(\T_i)|=k$.}
\end{theorem}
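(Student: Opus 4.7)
The plan is to follow the template of Theorem~\ref{pcabsearch}. Abbreviate $\iopt(\ld):=\iopt\bigl(D,c,\{\pi_v=\ld\}_{v\in N},r\bigr)$ and $\pcval(\ld):=\pcval\bigl(T_\ld;D,c,\{\pi_v=\ld\}_{v\in N},r\bigr)$. The $r$-rooted walk-collection witnessing $\budg^*$ covers at least $k$ nodes, hence leaves at most $n-k$ nodes uncovered, so its prize-collecting cost at penalty $\ld$ is at most $\budg^*+\ld(n-k)$, giving $\iopt(\ld)\le\budg^*+\ld(n-k)$. Combined with Theorem~\ref{pcarbthm}, this yields the key inequality
\[ c(T_\ld)+\ld\bigl(n-|V(T_\ld)|\bigr)=\pcval(\ld)\le\iopt(\ld)\le\budg^*+\ld(n-k) \qquad (\ast) \]
valid for every $\ld\ge 0$.

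Next I would verify the bracket $[\binlb,\binub]=[0,nc_{\max}]$. At $\ld=0$ we have $\iopt(0)=0$ (the empty collection has zero prize-collecting cost), so $(\ast)$ forces $c(T_0)=0$; since distances are positive, $V(T_0)=\{r\}$ and $|V(T_\binlb)|=1\le k$. At $\ld=nc_{\max}$, any $r$-rooted spanning out-arborescence of $D$ costs at most $(n-1)c_{\max}<\binub$, so $\iopt(\binub)\le(n-1)c_{\max}$; leaving any node of $N$ uncovered in $T_\binub$ would force $\pcval(T_\binub)\ge\binub>\iopt(\binub)$, contradicting $(\ast)$. Hence $T_\binub$ spans $N$ and $|V(T_\binub)|=n\ge k$. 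The checks in steps~\eqref{blower} and~\eqref{bupper} dispose of exact hits at the endpoints (each trivially satisfying (a) and (b) via $(\ast)$); otherwise standard bisection preserves $|V(T_\binlb)|<k<|V(T_\binub)|$ and, in $O(\log(n^3c_{\max}M))=\poly$ iterations, produces either an exact hit $|V(T_\ld)|=k$ (case~(i)) or an interval of length at most $1/(n^2M)$ bracketing the transition (case~(ii)).

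Finally I would establish (a) and (b) in case~(ii) (case~(i) being immediate from $(\ast)$ with $|V(T_\ld)|=k$). The weights $a,b\ge 0$, $a+b=1$, are defined so that $a|V(T_{\ld_1})|+b|V(T_{\ld_2})|=k$, which is (b) and also gives the identity $a(n-|V(T_{\ld_1})|)+b(n-|V(T_{\ld_2})|)=n-k$. Forming the $(a,b)$-convex combination of $(\ast)$ at $\ld_1$ and $\ld_2$ and substituting this identity, algebraic rearrangement yields
\[ ac(T_{\ld_1})+bc(T_{\ld_2})\le\budg^*+a(\ld_2-\ld_1)\bigl(k-|V(T_{\ld_1})|\bigr), \]
whose excess is at most $n(\ld_2-\ld_1)\le 1/(nM)$. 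The one subtle point is closing this $1/(nM)$ slack to obtain exactly $\le\budg^*$: since $\budg^*$ and every edge cost lie in $\frac{1}{M}\Z$ while the convex combination lies in $\frac{1}{nM}\Z$, this sub-$1/M$ gap is absorbed by the standard integrality/rounding argument for Lagrangian binary-search rounding (and, in any case, is polynomially small and is absorbed in the downstream \kmlp analysis). The remainder of the proof is in direct parallel to Theorem~\ref{pcabsearch}.
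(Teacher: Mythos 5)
Your proposal follows the paper's proof essentially step by step: the same abbreviations, the same key inequality $\pcval(\ld)\le\iopt(\ld)\le\budg^*+\ld(n-\cdot)$ derived from Theorem~\ref{pcarbthm}, the same bracketing of the binary search at $\ld=0$ and $\ld=nc_{\max}$, the same $(a,b)$-convex combination in case~(ii), and the same number-theoretic closing step. The one small deviation is that you write the RHS of the key inequality with $n-k$ where the paper uses $n-n^*$ for $n^*=\bigl|\bigcup_{P\in\C^*}V(P)\bigr|\geq k$; since the paper itself immediately relaxes $n-n^*\leq n-k$ in the case-(ii) algebra, this is cosmetic, and your resulting bound $\budg^*+a(\ld_2-\ld_1)\bigl(k-|V(T_{\ld_1})|\bigr)$ is in fact at least as tight as the paper's $\budg^*+a(\ld_2-\ld_1)\bigl(n-|V(T_{\ld_1})|\bigr)$.

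The one place your write-up is imprecise is the final integrality argument. The paper's point is that $a=\frac{|V(T_{\ld_2})|-k}{|V(T_{\ld_2})|-|V(T_{\ld_1})|}$ (and likewise $b$) is a rational with denominator $n':=|V(T_{\ld_2})|-|V(T_{\ld_1})|\leq n-1<n$, so $a\,c(T_{\ld_1})+b\,c(T_{\ld_2})-\budg^*$ is an integer multiple of $\frac{1}{n'M}$; since it is \emph{strictly} less than $\frac{1}{nM}\leq\frac{1}{n'M}$, it must be nonpositive. You state the conclusion ("the convex combination lies in $\frac{1}{nM}\Z$") without the justification, and the claim is slightly off as stated — the quantity lies in $\frac{1}{n'M}\Z$, which is not a subset of $\frac{1}{nM}\Z$ in general; what matters is that its smallest positive element is $\geq\frac{1}{nM}$. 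Also, the strictness of the inequality matters here (you write $\le 1/(nM)$), and should be noted. Finally, the fallback "in any case, is polynomially small and is absorbed in the downstream \kmlp analysis" is not a valid substitute: part~(a) is an exact inequality, not an approximate one, and the theorem as stated needs the exact bound. With these points tightened, your proof matches the paper's.
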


\begin{proof}
We abbreviate $\iopt\bigl(D,c,\{\pi_v=\ld\}_{v\in N},r\bigr)$
to $\iopt(\ld)$, and $\pcval\bigl(T_\ld;D,c,\{\pi_v=\ld\}_{v\in N},r\bigr)$ to $\pcval(\ld)$.   
Part (b) holds by construction.

We mimic (and simplify) the proof of part (i) of Corollary 3.3 in~\cite{PostS15}.
We have $\iopt(\ld)\leq (n-1)c_{\max}$ for all $\ld\geq 0$,  
so $T_\binub$ must be an arborescence spanning $N$. Also, note that $T_{\binlb}$
is the trivial tree $\{r\}$.
Let $\C^*$ be a min-cost collection of $r$-rooted walks covering at least $k$ nodes. 
So $\budg^*=\sum_{P\in\C^*}c(P)$. Let $n^*=\bigl|\bigcup_{P\in\C^*}V(P)\bigr|\geq k$.
For all $\ld\geq 0$, we have 
$\iopt(\ld)\leq \budg^*+\ld(n-n^*)$. By Theorem~\ref{pcarbthm}, for all $\ld\geq 0$, we
then have 
\begin{equation}
\pcval(\ld)=c(T_\ld)+\ld\bigl(n-|V(T_\ld)|\bigr) 
\leq \budg^*+\ld(n-n^*). 
\label{npcbnd}
\end{equation}
If we return in steps~\eqref{blower}, \eqref{bupper}, or \eqref{ncasei}, then we are
done, since \eqref{npcbnd} implies that $c(T_\ld)\leq\budg^*$.

Suppose we return in step~\eqref{ncaseii}. Multiplying \eqref{npcbnd} for $\ld=\ld_1$ by
$a$, and \eqref{npcbnd} for $\ld=\ld_2$ by $b$, and adding and simplifying, we obtain that  
\begin{equation*}
\begin{split}
a\cdot c(T_{\ld_1})+b\cdot c(T_{\ld_2})
& \leq \budg^*+\ld_2(n-n^*)
-\Bigl[a\ld_1\bigl(n-|V(T_{\ld_1})|\bigr)+b\ld_2\bigl(n-|V(T_{\ld_2})|\bigr)\Bigr] \\
& \leq \budg^*+a(\ld_2-\ld_1)\bigl(n-|V(T_{\ld_1})|\bigr)
<\budg^*+\tfrac{1}{nM}.
\end{split}
\end{equation*}
Noting that $a$ and $b$ are rational numbers with denominator at most $n$, and all $c_e$s
are multiples of $1/M$, the quantity
$a\cdot c(T_{\ld_1})+b\cdot c(T_{\ld_2})-\budg^*$ is an integer multiple of
$\frac{1}{n'M}$ for some $n'\leq n$. 
So if this quantity is less then $\frac{1}{nM}$, then it must be nonpositive; that is, we
have $a\cdot c(T_{\ld_1})+b\cdot c(T_{\ld_2})\leq\budg^*$.
\end{proof}

\section{Computational results for orienteering} \label{compres}

In this section, and its continuation in Appendix~\ref{append-compres}, we present
various computational results on the performance of our orienteering algorithms (from
Section~\ref{orient-apps}) in order to assess the performance of our algorithms in
practice. Our experiments clearly demonstrate the effectiveness of both our algorithms
and our upper bounds. They show that the instance-wise approximation ratios, for both the
solution returned and the computed upper bound, are substantially better than the
theoretical worst-case bounds, and in fact fairly close to $1$.

We implemented our algorithms in Section~\ref{orient-apps} (and Section~\ref{comb}) in
\texttt{C++11}, and our experiments were run on a 2019 MacBook Pro laptop with a 2.3 GHz
Intel Core i9 processor (8 cores) and 16 GB of RAM. 
Our implementation matches almost exactly the description in Section~\ref{orient-apps}
(and Section~\ref{comb}), with the following salient differences. 
\begin{enumerate}[label=$\bullet$, topsep=0.5ex, noitemsep, leftmargin=*]
\item We terminate the binary search (in \BinSearchPCA) when the interval
$[\ld_1,\ld_2]$ has width $\ld_2 - \ld_1 \leq 10^{-6}$, the precision of the
{\tt double} data type in {\tt C++}.

\item For a given guess $w$, our analysis shows that if the binary search for $\ld$
terminates with the interval $[\ld_1,\ld_2]$, then 
the solution extracted from one of $T_{\ld_1}$ and $T_{\ld_2}$ achieves the stated
approximation. 
In our experiments, we extract a solution from $T_\ld$ for every $\ld$ encountered in the
binary search, and return the best of these solutions (which clearly retains the stated
approximation guarantee).

\item In the computation of the upper bounds on the orienteering optimum---i.e.,
$\UBrooted$, $\UBptp$, $\UBcyc$---instead of considering all $\ld\geq 0$, we consider
only the $\ld$s encountered in the relevant binary search procedures.

\item In our cycle orienteering experiments, we actually consider two binary searches. One
where the search is guided by a cost budget of $B/2$ and one where the search is
guided by a cost budget of $B - c_{rw}$. 
This produces a different set of $\lambda$ values to consider, which, in many cases, led
to better solutions (and upper bounds) than those found 
by using just the binary search for distance $B/2$.%
\footnote{Theoretically speaking, if we consider {\em all} $\ld$ values (instead of only
  those encountered in the binary search),  
  then the cost budget becomes inconsequential and does not influence either the final
  solution obtained or the upper bound computed.}   

\item When extracting a rooted path of a given regret bound $\regret$ from a path $P$
(that is obtained from a tree), 
instead of using a greedy procedure (Lemma 5.1 in~\cite{BlumCKLMM07}, or Lemma 2.2
in~\cite{FriggstadS14}), we find the maximum-reward subpath $Q$ of $P$ such that the 
rooted path with node sequence $r, V(Q)$ has regret at most $\regret$.

\item 
  If the current binary search for a guessed node
  $w$ encounters a value $\lambda$ such that the upper bound for orienteering computed
  from the  
  output of \IterPCA is at most the value of the best orienteering solution found thus
  far, then we prematurely terminate the binary search for this guess $w$ and move on to
  the next guess, since we know that we cannot find a better solution for this guess. This
  pruning improves the running time by roughly 40\%.
\end{enumerate}

We proceed to describe our computational results for \{rooted, P2P, cycle\}- orienteering
in detail.  
We discuss cycle orienteering first, as this 
is computationally the most
well-studied version of orienteering, 
and discuss rooted orienteering (Section~\ref{rooted-compres}) last, as this does not seem
to have been computationally studied in the literature.  
We use the following legend throughout: 
{$\val$} denotes the reward of the solution returned by our algorithm,
{$\opt$} is the optimal value, and 
{$\ub$} is the upper bound computed by our algorithm.
For each orienteering version, we first give an overview of the approximation
ratios obtained by our algorithm, and the quality of our upper bounds, and then present
detailed results.  
Our histograms specify the distribution of various quantities (e.g., the ratios
$\opt/\val$, $\ub/\opt$ etc.)
across the instances used in the computational experiments.
Each histogram bar corresponds to a range of values (for a particular quantity)
as indicated on the $x$-axis, 
and its height specifies the number
of instances for which the quantity lies in the range.

\subsection{Cycle orienteering experiments} \label{cycle-compres}
As noted earlier,
for each guess $w$ of furthest node, we run two binary-search procedures, with target
budgets $B/2$ and $B-c_{rw}$, 
and output 
the best cycle-orienteering solution
extracted from the trees $T_{w,\ld}$, over all $w$, and all $\ld$ values encountered in
the binary-search procedures run for $w$.
Also, the upper bound we compute is $\UBcyc(B)$ except that for a given $w$, we only
consider the $\ld$ values encountered in the binary searches for $w$.

\vspace*{-1ex}
\paragraph{Test Data.}  
We use the same TSP instances considered in \cite{Kobeaga21}. All but three of these datasets were also considered in \cite{Fischetti98}.
We note that 5 of these are not metric, however upper bounds still apply to non-metric instances so we kept them in our experiments.
Interestingly, we still obtain good approximations in these instances.
See Appendix~\ref{sec:tables} for details.

For each dataset, \cite{Fischetti98} and
\cite{Kobeaga21} generate node rewards in three ways: 
\begin{itemize}
\item {\bf Gen 1} - {\bf Uniform Rewards}: All nodes apart from the root $r$ have reward 1.
\item {\bf Gen 2} - {\bf Pseudo-Random Rewards}: The reward of the $j$'th node in the dataset is $1 + (7141 \cdot j + 73) \pmod {100}$ apart from the root, which has reward 0.
\item {\bf Gen 3} - {\bf Far Away Rewards}: The reward of a node $v \neq r$ is $1 + \lfloor 99 \cdot c_{rv} / \max_w c_{rw}\rfloor$. This is meant to create more challenging instances where the high-reward nodes are further from the root.
\end{itemize}
In total, this yields 135 different datasets. The distance bound used in each case is 
$\lceil\opttsp/2\rceil$, where $\opttsp$ is the cost of the optimal TSP-tour for that
dataset (which is provided in TSPLIB). 
%
Optimal values are known for all these datasets; most of these were computed
in~\cite{Fischetti98}, and the rest are from~\cite{Kobeaga21}. This allows us to evaluate
not only the  
gap between our solution's value and our computed upper bound, but also the true
approximation guarantee of our algorithm on these datasets. 

\pgfplotstableread{csv/histogram.txt}\histogramtable

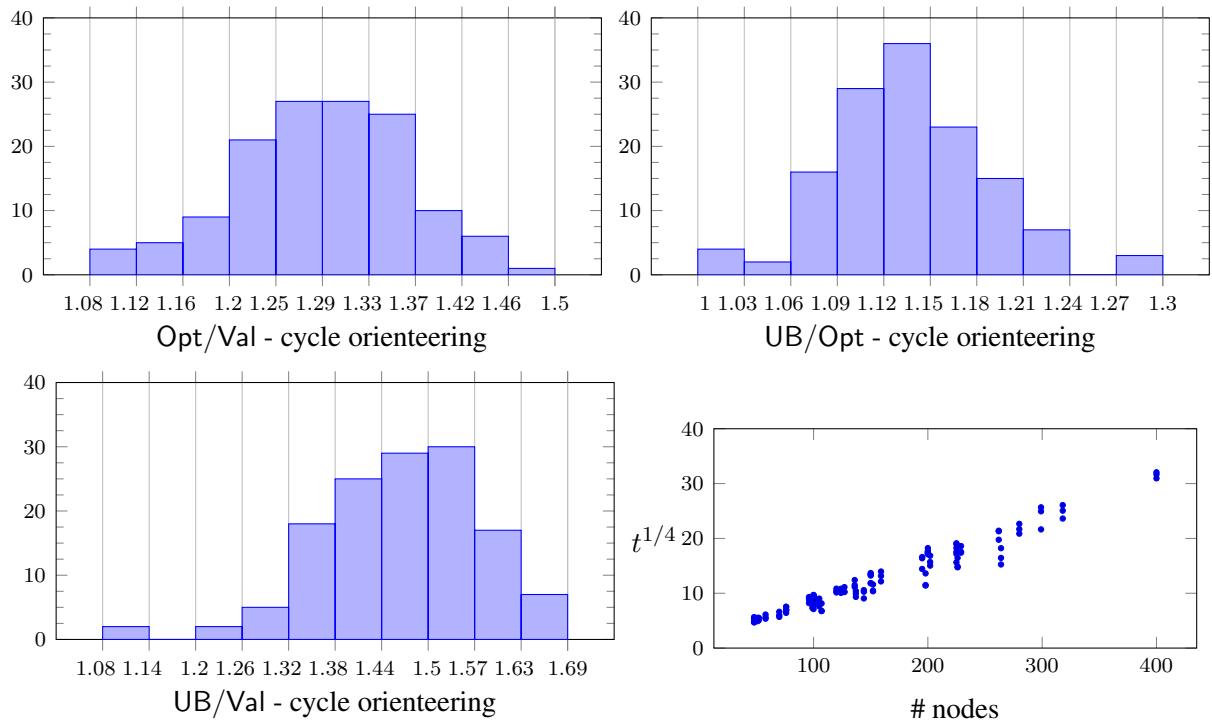
\begin{figure}[ht!]
\begin{center}
\begin{tabular}{|c|c|c|c|c|c| }\hline
 No. of nodes & \# Experiments & Mean $\opt/\val$ & Max $\opt/\val$ & Mean $\ub/\opt$ & Max $\ub/\opt$ \\ \hline
 48--400 & 135 & 1.29 & 1.5 & 1.14 & 1.304 \\ \hline
\end{tabular}

\bigskip
\hspace*{-0.2in}
\begin{tikzpicture}
\begin{axis}[ybar interval, width=9cm, height = 5cm, ymax=40, ymin=0, 
    minor y tick num = 3, tick label style={font=\scriptsize}, 
    extra x ticks={1.5}, 
    extra x tick style={x tick label as interval=false, xticklabel style={anchor=north}},
    xlabel={$\opt/\val$ - cycle orienteering}, xticklabel style={anchor=north east}]
\addplot+ [hist={bins=10}] table [y=cyc-apx] {\histogramtable};
\end{axis}
\end{tikzpicture}
\begin{tikzpicture}
\begin{axis}[ybar interval, width=9cm, height = 5cm, ymax=40, ymin=0, 
    minor y tick num = 3, tick label style={font=\scriptsize}, 
    extra x ticks={1.304}, 
    extra x tick style={x tick label as interval=false, xticklabel style={anchor=north}},
    xlabel={$\ub/\opt$ - cycle orienteering}, xticklabel style={anchor=north east}]
\addplot+ [hist={bins=10}] table [y=cyc-ubopt] {\histogramtable};
\end{axis}
\end{tikzpicture}

\hspace*{-0.2in}
\begin{tikzpicture}
\begin{axis}[ybar interval, width=9cm, height = 5cm, ymax=40, ymin=0, 
    minor y tick num = 3, tick label style={font=\scriptsize}, 
    extra x ticks={1.688}, 
    extra x tick style={x tick label as interval=false, xticklabel style={anchor=north}},
    xlabel={$\ub/\val$ - cycle orienteering}, xticklabel style={anchor=north east}]
\addplot+ [hist={bins=10}] table [y=cyc-ubval] {\histogramtable};
\end{axis}
\end{tikzpicture}
%
%
\begin{tikzpicture}
\begin{axis}[width=8cm, height = 4.5cm, ymax=40, ymin=0, 
    tick label style={font=\scriptsize},
    xlabel={\# nodes}, ylabel={$t^{1/4}$}, ylabel style={rotate=-90, anchor=west}]
\addplot+ [
    only marks,
    mark=*, 
    mark size=1.0pt] table [x=nodes, y=runtime] {\histogramtable};
\end{axis}
\end{tikzpicture}

\end{center}

\vspace*{-4ex}
\caption{Cycle orienteering statistics. In the scatter plot, 
  $t$ denotes the running time in milliseconds on the corresponding instance. 
The slowest running time was 17.6 minutes.
  \label{cyc-stats}} 
\end{figure}

Tables~\ref{cyc-gen1} and~\ref{cyc-gen3} report the results of our
experiments on the Gen 1 and Gen 3 data sets. Table~\ref{cyc-gen2} in
Appendix~\ref{append-compres} gives our results for the Gen 2 data sets. Besides reporting 
the value ($\val$) of our solution, and the computed upper bound ($\ub$), we also report
the instance-wise approximation ratio $\opt/\val$, and the ratio $\ub/\opt$, which measures
the quality of our upper bound.

The five TSPLIB datasets marked by * in these tables, \texttt{att48}, \texttt{gr48},
\texttt{hk48}, \texttt{brazil58} and \texttt{gr120}, are in fact non-metric (but still
symmetric) instances. 
The maximum of $c_{uv}/ (c_{uw} + c_{wv})$ over all distinct triples of nodes in these
datasets is $1.002$, $1.580$, $1.326$, $9.783$, and $5.218$, respectively. 
Rather than moving to the shortest-path metric, we used these distance matrices as is,
to facilitate a comparison with the optimal solution computed in prior work.
Our upper bounds 
remain valid even with non-metric instances. 
While we cannot claim any (worst-case) guarantees on non-metric instances,
as our results show, our algorithm performs fairly well even on these non-metric
instances. 


\medskip
It is pertinent to compare our results with Paul et al.~\cite{Paul19}, which is the only 
other work that performs a computational evaluation of a (polytime) approximation
algorithm for orienteering. 
They develop a $2$-approximation algorithm for cycle orienteering (which they call
budgeted prize-collecting TSP), and run computational experiments on two types of
datasets: 
1) 37 metric TSPLIB instances with at most 400 nodes, each having unit reward; and 
2) 37 instances constructed from different weeks of usage of the Citi Bike network of bike
sharing stations in New York City, where node rewards correspond to an estimate of the
number of broken docks at that station during the week.
In all datasets, they consider cost budgets equal to $0.5$, $1$, $1.5$ times the cost of
an MST (note that twice the MST cost is an upper bound on the optimal TSP-tour cost).

However, their experiments are only run for the unrooted version of cycle orienteering,
wherein the solution does not have to involve any particular root node. 
Running our algorithm by trying all possible root nodes would lead to a factor-$n$ blow up
in our running time. Instead of this, we considered running our algorithms in 
{\em combination} to see if this yields improved solutions for the underlying instances. 
In particular, we run our algorithm as a fast {\em postprocessing} step: we pick an
arbitrary node on the solution output by the algorithm in~\cite{Paul19} as the root node
$r$, and (in the same spirit) pick the node furthest from $r$ on their solution as our guess $w$
of the furthest node.
Our results (see Fig.~\ref{fig:paul}) show that this postprocessing almost always
yields improvements, sometimes by a significant factor,
on both the TSPLIB and the Citi Bike data sets. 
(This is despite the fact that our worst-case analysis only shows that the algorithm that
tries all possible guesses for $w$ is a $4$-approximation for the rooted case,
whereas the algorithm in~\cite{Paul19} is a $2$-approximation.) 
The results are summarized in Figure \ref{fig:paul}.%
\footnote{We have omitted one instance each from the TSPLIB and Citi Bike datasets
in~\cite{Paul19}, where our algorithm yields an improvement by a factor exceeding
$2$. This would seemingly contradict the $2$-approximation guarantee of~\cite{Paul19}, but 
the discrepancy arises because the implementation in~\cite{Paul19} is slightly
different from their $2$-approximation algorithm~\cite{Paul21}.} 

\begin{figure}[ht!]
\begin{center}

\hspace*{-0.2in}
\begin{tikzpicture}
\begin{axis}[ybar interval, width=9cm, height = 5cm, ymax=50, ymin=0, 
    minor y tick num = 3, tick label style={font=\scriptsize}, 
    extra x ticks={1.742}, 
    extra x tick style={x tick label as interval=false, xticklabel style={anchor=north}},
    xlabel={Citi Bike Dataset}, xticklabel style={anchor=north east}]
\addplot+ [hist={bins=10}] table [y=improvement] {csv/bike_unrooted.txt};
\end{axis}
\end{tikzpicture}
%
\begin{tikzpicture}
\begin{axis}[ybar interval, width=9cm, height = 5cm, ymax=50, ymin=0, 
    minor y tick num = 3, tick label style={font=\scriptsize}, 
    extra x ticks={1.4151}, 
    extra x tick style={x tick label as interval=false, xticklabel style={anchor=north}},
    xlabel={TSPLIB Dataset}, xticklabel style={anchor=north east}]
\addplot+ [hist={bins=10}] table [y=improvement] {csv/tsp_unrooted.txt};
\end{axis}
\end{tikzpicture}

\end{center}

\vspace*{-4ex}
\caption{A histogram of the factors by which we improve the solutions from
  \cite{Paul19}. Ratios $\leq 1.0$ indicate that we found no improvement. \label{fig:paul}} 
\end{figure}
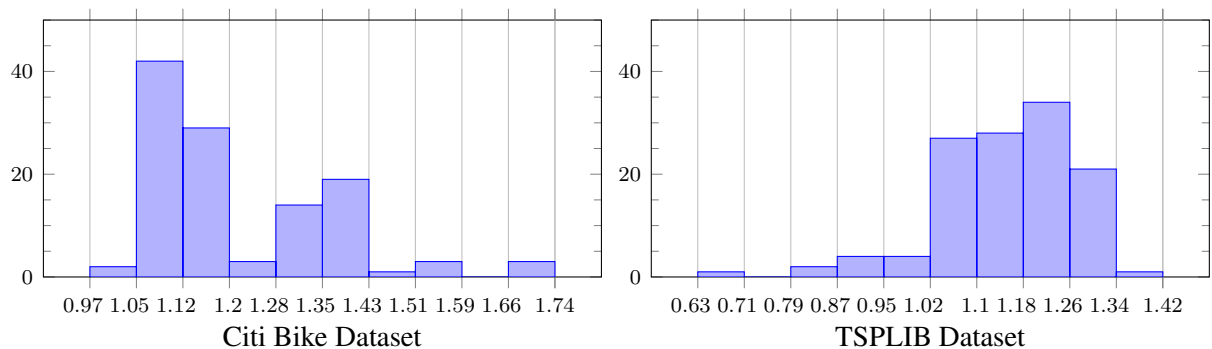

\begin{table}[t!]
\begin{center}
{\small
    \caption{Cycle orienteering results - Gen 1 \label{cyc-gen1}}

    \medskip\medskip
    \begin{tabular}{l|l|l|l|l|l|l|l}%
    \bfseries Dataset & {\boldmath $B$} & {\boldmath $\val$} & {\boldmath $\opt$} &
    {\boldmath $\ub$} & {\boldmath $\opt/\val$} & {\boldmath $\ub/\opt$} & {\boldmath $\ub/\val$} \\\hline
    \hline \vspace{-4.325mm}
    \csvreader[head to column names]{csv/cycles-gen1.csv}{}
    {\\\hline\csvcoli & \csvcoliii & \csvcoliv & \csvcolvii & \csvcolvi & \csvcolviii  & \csvcolx &\csvcolix}
    \end{tabular}
}
\end{center}
\end{table}

\begin{table}[t!]
\begin{center}
{\small
    \caption{Cycle orienteering results - Gen 3 \label{cyc-gen3}}

    \medskip\medskip
    \begin{tabular}{l|l|l|l|l|l|l|l}%
    \bfseries Dataset & {\boldmath $B$} & {\boldmath $\val$} & {\boldmath $\opt$} &
    {\boldmath $\ub$} & {\boldmath $\opt/\val$} & {\boldmath $\ub/\opt$} & {\boldmath $\ub/\val$} \\\hline
    \hline \vspace{-4.325mm}
    \csvreader[head to column names]{csv/cycles-gen3.csv}{}
    {\\\hline\csvcoli & \csvcoliii & \csvcoliv & \csvcolvii & \csvcolvi & \csvcolviii  & \csvcolx &\csvcolix}
    \end{tabular}
}
\end{center}
\end{table}

\clearpage
\subsection{P2P orienteering experiments} \label{ptp-compres}
For each $w$, we run two binary searches: one to search for an $r$-rooted PCA solution
with budget $B-c_{wt}$, and the other to search for a $t$-rooted PCA solution with budget
$B-c_{rw}$. Let $\Lambda^w_r, \Lambda^w_t$ denote the two sets of $\lambda$ values
encountered during these searches. The solution we output is the best P2P-orienteering
solution extracted from all the trees we find during the binary search procedures. 
We compute an upper bound in the same way as $\UBptp(B)$, except that we only consider the
$\ld$ values encountered in the binary search procedures.
Specifically, we return the upper bound 
\begin{multline*}
\bUBptp(B):=\max_{w\in V:c_{rw}+c_{wt}\leq B}\ \min\,\Bigl\{
\pi(\bV^{\ptp}_w)+
\min\,\bigl\{\min_{\ld_r \in \Lambda_r}\tfrac{B-\dualub_r(\ld_r)}{\ld_r},
\min_{\ld_t \in \Lambda_t}\tfrac{B-\dualub_t(\ld_t)}{\ld_t}\bigr\}, \\
\Bigl(\min_{\ld\in\Ld^w_r}\,\UBptprw(w,B;\ld)
+\min_{\ld\in\Ld^w_t}\,\UBptpwt(w,B;\ld)\Bigr)\Bigr\}
\end{multline*}

\vspace*{-2ex}
\paragraph{Test Data.} 
We use two datasets that have previously been used for evaluating P2P-orienteering
algorithms. Both involve Euclidean instances with various distance bounds.
The first dataset is from~\cite{Tsili84} (see Table~\ref{ptp-tsil}), and includes Euclidean
instances with 32, 21, and 33 nodes, with node rewards that are integer multiples of
$5$ in the range $[5, 40]$. 
The optimal values for these instances are reported in \cite{Schilde09}. 

The second dataset is from~\cite{Chao96} (see Table~\ref{ptp-chao}) and includes two
Euclidean instances.  
The first instance has 66 nodes that
are distributed among four concentric squares. The nodes take rewards from
$\{5,15,25,35\}$ depending on the square they lie on (smaller rewards for smaller
squares). The start and end nodes are located near each other in the middle of the
innermost square. 
The second instance has 64 nodes that are distributed in a diamond shape. The start and end
nodes are at the top and bottom of the diamond, respectively. Nodes further away from the
start and end nodes get higher rewards. The rewards are integer multiples of 6 between 6 and
42. 

In total, 89 different experiments were run on the above instances using varying distance
bounds. 
Figure~\ref{ptp-stats} presents a summary of our results; the detailed table of results is
included in Appendix \ref{sec:tables} (see Tables~\ref{ptp-tsil} and~\ref{ptp-chao}). 
The entries 
that refer to $\opt$ (i.e., two histograms and two table columns) 
only consider the 49 experiments using data from \cite{Tsili84}, where we do know the 
optimal values. 
The entries referring to $\ub/\val$ involve all the P2P instances considered.

\begin{figure}[ht!]
\begin{center}
\begin{tabular}{|c|c|c|c|c|c| }\hline
 No. of nodes & \# Experiments & Mean $\opt/\val$ & Max $\opt/\val$ & Mean $\ub/\val$ &
 Max $\ub/\val$ \\ \hline 
 21 -- 66  & 89 & 1.169 & 1.364 & 1.371 & 2.164 \\ \hline
\end{tabular}

\bigskip
\begin{tikzpicture}
\begin{axis}[ybar interval, width=9cm, height = 5cm, ymax=20, ymin=0, 
    minor y tick num = 3, tick label style={font=\scriptsize}, 
    extra x ticks={1.364}, 
    extra x tick style={x tick label as interval=false, xticklabel style={anchor=north}},
    xlabel={$\opt/\val$ - P2P orienteering}, xticklabel style={anchor=north east}]
\addplot+ [hist={bins=10}] table [y=p2p-apx] {\histogramtable};
\end{axis}
\end{tikzpicture}
\begin{tikzpicture}
\begin{axis}[ybar interval, width=9cm, height = 5cm, ymax=20, ymin=0, 
    minor y tick num = 3, tick label style={font=\scriptsize}, 
    extra x ticks={2.1635}, 
    extra x tick style={x tick label as interval=false, xticklabel style={anchor=north}},
    xlabel={$\ub/\opt$ - P2P orienteering}, xticklabel style={anchor=north east}]
\addplot+ [hist={bins=10}] table [y=p2p-ubopt] {\histogramtable};
\end{axis}
\end{tikzpicture}


\bigskip
\begin{tikzpicture}
\begin{axis}[ybar interval, width=9cm, height = 5cm, ymax=30, ymin=0, 
    minor y tick num = 3, tick label style={font=\scriptsize}, 
    extra x ticks={2.164}, 
    extra x tick style={x tick label as interval=false, xticklabel style={anchor=north}},
    xlabel={$\ub/\val$ - P2P orienteering}, xticklabel style={anchor=north east}]
\addplot+ [hist={bins=10}] table [y=p2p-ubval] {\histogramtable};
\end{axis}
\end{tikzpicture}

\end{center}

\vspace*{-2ex}
\caption{P2P orienteering statistics. 
\label{ptp-stats}}
\end{figure}
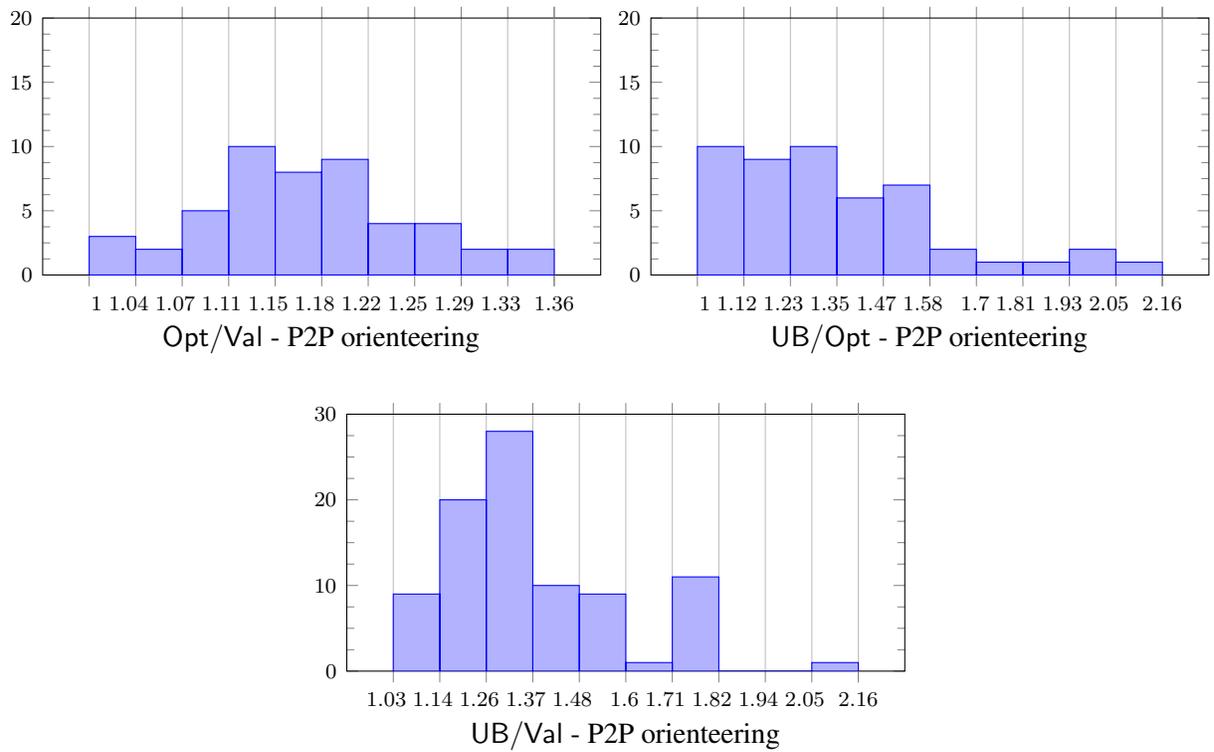


\clearpage

\subsection{Rooted orienteering experiments} \label{rooted-compres}
For each guess $w$ of furthest node, we run binary search with target budget $B$. 
We extract a rooted-orienteering solution from each tree that we
find in the binary search, and return the best of these solutions.
Our computed upper bound is $\UBrooted(B)$, but considering only the $\ld$ values
encountered in the binary search procedures.

\vspace*{-1ex}
\paragraph{Test Data.} 
We are not aware of any benchmark data for rooted orienteering, so we use the benchmark
data as for cycle orienteering. That is, we consider every cycle-orienteering
instance here as well, with the same distance bound. 
While we do not have the optimal value for any instance, 
we can still compute $\ub/\val$, which yields an upper bound on the approximation
ratio for the instance.
We continue to use the non-metric instances from TSPLIB as input to our rooted
orienteering experiments, just like we did with cycle orienteering. 

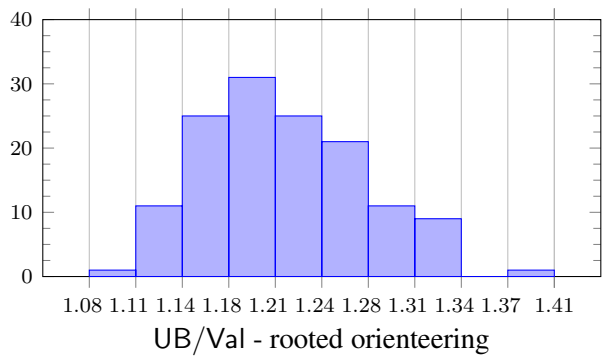
\begin{figure}[ht!]
\begin{center}
\begin{tabular}{|c|c|c|c|c| }\hline
No. of nodes &  \# Experiments & Mean UB / Val & Max UB / Val \\ \hline
48--400  & 135 & 1.216 & 1.407 \\ \hline
\end{tabular}


\bigskip
\begin{tikzpicture}
\begin{axis}[ybar interval, width=9cm, height = 5cm, ymax=40, ymin=0, 
    minor y tick num = 3, tick label style={font=\scriptsize}, 
    extra x ticks={1.407}, 
    extra x tick style={x tick label as interval=false, xticklabel style={anchor=north}},
    xlabel={$\ub/\val$ - rooted orienteering}, xticklabel style={anchor=north east}]
\addplot+ [hist={bins=10}] table [y=rooted-ubval] {\histogramtable};
\end{axis}
\end{tikzpicture}
\end{center}

\vspace*{-1ex}
\caption{Rooted orienteering statistics.} 
\end{figure}

\begin{table}[p!]
\begin{center}
{\small
    \caption{Rooted orienteering results \label{rooted-gen1gen2gen3}}

    \medskip\medskip
    \begin{tabular}{l|l||l|l|l||l|l|l||l|l|l}%
    \multirow{2}{*}{\begin{tabular}{l} {\bf Dataset} \end{tabular}} &
    \multirow{2}{*}{\begin{tabular}{l} \boldmath $B$ \end{tabular}} & 
    \multicolumn{3}{c||}{{\bf Gen 1}} & \multicolumn{3}{c||}{{\bf Gen 2}} & 
    \multicolumn{3}{c}{{\bf Gen 3}} \\ \cline{3-11}
    & & {\boldmath $\val$} & {\boldmath $\ub$} & {\boldmath $\ub/\val$} & 
    {\boldmath $\val$} & {\boldmath $\ub$} & {\boldmath $\ub/\val$} & 
    {\boldmath $\val$} & {\boldmath $\ub$} & {\boldmath $\ub/\val$} \\ 
    \hline \vspace{-4.225mm}
    \csvreader[head to column names]{csv/rooted_gen1gen2gen3.csv}{}
    {\\\hline \csvcoli & \csvcoliii & \csvcoliv & \csvcolvi & \csvcolvii &
    \csvcolx & \csvcolxii & \csvcolxiii & \csvcolxvi & \csvcolxviii & \csvcolxix}
    \end{tabular}
}
\end{center}
\end{table}

\subsection{Observations}
On all datasets, we see that the instance-wise approximation ratio---$\frac{\opt}{\val}$,
if $\opt$ is available, or $\frac{\ub}{\opt}$ otherwise---%
much smaller than the worst-case approximation guarantee of the corresponding
algorithm. 
While this could be due to the fact we are dealing with real-world or Euclidean
data, it is also possible that our worst-case analysis is not tight. We do have
examples showing that the ratio $\frac{\ub}{\opt}$ (and hence $\frac{\ub}{\val}$) for
rooted orienteering can be as large as $3$ (so Theorem~\ref{lpboundsthm} (b) is tight),
but we do not know if a better worst-case bound can be proved for $\frac{\opt}{\val}$ (or
$\rorientlpopt/\val$, where $\rorientlpopt$ is the optimal value of the LP-relaxation for
orienteering proposed in~\cite{FriggstadS17}).
%
Another noteworthy aspect to emerge from our results is that, despite the worst-case
bound on $\frac{\ub}{\opt}$ mentioned above, our
computed upper bounds are often quite close to the optimal value. 

\appendix

\section{Avoiding the \boldmath $(1-\e)$-factor loss in Theorem~\ref{pcabsearch}} 
\label{append-avoidloss} 
Recall that the $c_e$s are integers, and $c_{\max}=\max_{u,v\in N}$. 
Let $M$ be a polytime computable integer
with $\log M=\poly(\text{input size})$ such that all the $\pi_v$s are integer multiples of
$1/M$; therefore, the optimal reward $\ropt$ (among rooted paths $P$ with 
$\{w\}\sse V(P)\sse N$) is also a multiple of $1/M$. Recall that $\binlb=1/\pi(N')$, where
$N'=N\sm\{r\}$. 

We can avoid the $(1-\e)$-factor loss in part (c) of Theorem~\ref{pcabsearch} by
continuing the binary search until $\ld_2-\ld_1\leq\frac{\binlb^2}{Mc_{\max}}$. Then, by
the analysis in Theorem~\ref{pcabsearch}, we have that
\begin{equation}
a\cdot\pi(V(T_{\ld_1}))+b\cdot\pi(V(T_{\ld_2}))-\ropt\geq-\frac{a(\ld_2-\ld_1)\pi(N')}{\ld_2}
\geq-\frac{\binlb^2\cdot\pi(N')}{\ld_2Mc_{\max}}>-\frac{1}{Mc_{\max}}
\label{avoidineq}
\end{equation}
where the final inequality is because $\ld_2>\binlb$ and $\binlb\cdot\pi(N')=1$. Now 
$a$ and $b$ are rational numbers whose denominators are bounded by $c_{\max}$. So the LHS
of \eqref{avoidineq} is an integer multiple of $\frac{1}{c'M}$ for some $c'\leq c_{\max}$. 
So if the LHS is strictly larger than $-\frac{1}{Mc_{max}}$ then it is in fact
nonnegative. 
\hfill \qed

\section{LP-relative bounds for orienteering} \label{append-lpbounds}
We show that the guarantees returned by our combinatorial algorithms for orienteering in
Section~\ref{orient-apps}---both the quality of the solution returned, and the upper
bounds---hold with respect to the optimal value of the LP-relaxation for orienteering
proposed by~\cite{FriggstadS17}. We focus on rooted orienteering, but similar arguments
apply to P2P- and cycle- orienteering as well.

Friggstad and Swamy~\cite{FriggstadS17} proposed the LP-relaxation \eqref{rorlp} for rooted
orienteering. Recall that the input is a metric space $(V,c)$, root $r\in V$, cost bound
$B$, and node rewards $\{\pi_v\}_{v\in V}$. Let $G=(V,E)$ be the complete graph on
$V$, and $D_G=(V,A_G)$ be the bidirected version of $G$, where both $(u,v)$ and $(v,u)$
get cost $c_{uv}$.
Let $V'=V\sm\{r\}$. To avoid unnecessary notational complication, we assume that
there is at least one node $w\in V'$ with $c_{rw}\leq B$.
\begin{alignat*}{3}  
\max & \quad & \pi_r+\sum_{u,w\in V'}&\pi_uz^w_u\tag{RO-P} \label{rorlp} \\
\text{s.t.} && 
x^w(\dt^\into(u)\bigr) & \geq x^w\bigl(\dt^\out(u)\bigr) \qquad && 
\forall u,w\in V' \\ 
&& x^w\bigl(\dt^\into(u)\bigr) & = 0 \qquad && 
\forall u,w\in V': c_{ru}>c_{rw} \\ 
&& x^w\bigl(\delta^{\into}(S)\bigr) & \geq z^w_u \qquad 
&& \forall w\in V', S\subseteq V', u\in S \\ 
&& \sum_{a\in A_G}c_ax^w_a & \leq B z^w_w  && \forall w \in V' \\ 
&& x^w\bigl(\dt^\out(r)\bigr) & =z^w_w \quad \forall w\in V',
&& \sum_{w} z^w_w = 1, \quad x,z \geq 0. \notag
\end{alignat*}
Let $\rorientlp[w]$ denote \eqref{rorlp}, when we fix $z^w_w=1$, and hence $z^v_v=0$
for all other $v\in V'$. As noted in~\cite{FriggstadS17}, $z^v_v=0$ implies that $z^v_u=0$
for all $u\in V'$, and hence, we may assume that $x^v_a=0$ for all $a\in A_G$.
So we obtain the following simpler LP.
\begin{alignat}{3}  
\max & \quad & \pi_r+\sum_{u\in V'}&\pi_uz_u \tag{RO-P${}_w$} \label{rorlpw} \\
\text{s.t.} && 
x(\dt^\into(u)\bigr) & \geq x\bigl(\dt^\out(u)\bigr) \qquad && 
\forall u\in V' \notag \\ 
&& x\bigl(\dt^\into(u)\bigr) & = 0 \qquad && 
\forall u\in V': c_{ru}>c_{rw} \label{preflpw2} \\
&& x\bigl(\delta^{\into}(S)\bigr) & \geq z_u \qquad 
&& \forall S\subseteq V', u\in S \label{conlpw} \\
&& \sum_{a\in A_G}c_ax_a & \leq B \notag \\ 
&& x\bigl(\dt^\out(r)\bigr) & =z_w=1, \quad x,z\geq 0. \notag
\end{alignat}
For notational convenience, define $\rorientlp[r]$ to be the vacuous LP with no variables
and constraints, and objective function fixed to $\pi_r$.

\begin{lemma} \label{rorlpopt}
We have $\rorientlpopt=\max_{w\in V:c_{rw}\leq B}\rorientlpopt[w]$.
\end{lemma}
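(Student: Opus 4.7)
The plan is to prove the two inequalities separately, viewing \eqref{rorlp} as a ``convex combination'' of the individual LPs $\rorientlp[w]$ parametrized by the choice of the ``furthest'' node $w$.

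For the direction $\rorientlpopt\leq\max_{w:c_{rw}\leq B}\rorientlpopt[w]$, I would start with an optimal feasible solution $\{(x^w,z^w)\}_{w\in V'}$ of \eqref{rorlp}. A first observation is that if $z^w_w=0$, then the constraint $x^w\bigl(\dt^\out(r)\bigr)=z^w_w=0$, the preflow condition at every non-root node, and $x^w\geq 0$ together force $x^w\equiv 0$; the coverage constraint \eqref{conlpw} then forces $z^w_u=0$ for all $u\in V'$, so this $w$ contributes nothing to the objective. For each $w$ with $z^w_w>0$, I would scale by $1/z^w_w$ and check that $(x^w/z^w_w,\,z^w/z^w_w)$ satisfies every constraint of $\rorientlp[w]$: the preflow constraint, the in-degree-zero constraint \eqref{preflpw2}, and the coverage constraint \eqref{conlpw} are homogeneous; the budget constraint $\sum_a c_a x^w_a\leq Bz^w_w$ becomes $\leq B$ after normalization; and $x^w(\dt^\out(r))/z^w_w=1=z_w$. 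Furthermore, because $x^w$ routes $z^w_w$ units of flow from $r$ to $w$ at total cost at most $Bz^w_w$, a standard path decomposition produces an $r$-$w$ path in $D_G$ of cost at most $B$, and the triangle inequality yields $c_{rw}\leq B$, so the scaled solution is indeed a feasible candidate for $\rorientlp[w]$. Using $\sum_{w\in V'}z^w_w=1$, the objective of \eqref{rorlp} rewrites as
\[
\pi_r+\sum_{w,u\in V'}\pi_u z^w_u
=\sum_{w:\,z^w_w>0}z^w_w\,\Bigl(\pi_r+\sum_{u\in V'}\pi_u\cdot\tfrac{z^w_u}{z^w_w}\Bigr),
\]
a convex combination of feasible objective values for the relevant $\rorientlp[w]$ LPs, and therefore at most $\max_{w:c_{rw}\leq B}\rorientlpopt[w]$.

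For the reverse inequality, I would fix a maximizer $w^*$ with $c_{rw^*}\leq B$. If $w^*=r$, then $\rorientlpopt[r]=\pi_r$, and this bound is met by noting that, by the standing assumption, some $w'\in V'$ satisfies $c_{rw'}\leq B$, and the single-edge assignment $z^{w'}_{w'}=1$, $x^{w'}_{r,w'}=1$, with all other variables in \eqref{rorlp} zero, is feasible with objective $\pi_r+\pi_{w'}\geq\pi_r$. Otherwise $w^*\in V'$; given an optimal $(x,z)$ for $\rorientlp[w^*]$, I would set $(x^{w^*},z^{w^*})=(x,z)$ and $(x^v,z^v)=(\mathbf{0},\mathbf{0})$ for all $v\in V'\sm\{w^*\}$. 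Each constraint of \eqref{rorlp} indexed by $w^*$ then reduces to the corresponding constraint of $\rorientlp[w^*]$, constraints indexed by $w\neq w^*$ hold trivially for the zero block, and $\sum_{v\in V'}z^v_v=z^{w^*}_{w^*}=1$. The two objective values coincide, yielding $\rorientlpopt\geq\rorientlpopt[w^*]$.

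The argument is mostly bookkeeping; the only subtle point lies in handling the case $z^w_w=0$ in the first direction. The structural fact that makes the scaling/convex-combination identity go through cleanly is that the preflow constraint, $x^w(\dt^\out(r))=z^w_w$, and nonnegativity together force $x^w\equiv 0$ whenever $z^w_w=0$, which both avoids division by zero in the normalization step and justifies omitting those indices from the decomposition.
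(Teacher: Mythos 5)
Your proof is correct and follows essentially the same route as the paper's: the $\geq$ direction by fixing a block of variables, and the $\leq$ direction by scaling each block with $z^w_w>0$ by $1/z^w_w$, using constraint~\eqref{conlpw} plus max-flow/path-decomposition to certify $c_{rw}\leq B$, and expressing the objective as a $z^w_w$-weighted convex combination; you simply spell out the $z^w_w=0$ and $w^*=r$ cases that the paper dispatches in a line. One small inaccuracy worth flagging: the preflow constraints, $x^w(\dt^\out(r))=0$, and nonnegativity alone do \emph{not} force $x^w\equiv 0$ (a circulation supported on a directed cycle inside $V'$ satisfies all three); what does force it is the budget constraint $\sum_a c_ax^w_a\leq Bz^w_w=0$ together with the standing assumption of strictly positive arc costs. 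Fortunately this claim is not load-bearing: the conclusion you actually use, namely $z^w_u=0$ for all $u\in V'$, follows directly from constraint~\eqref{conlpw} with $S=V'$ and $x^w(\dt^\into(V'))=x^w(\dt^\out(r))=0$, so the decomposition and the rest of the argument stand.
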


\begin{proof}
First, note that we can exclude $r$ from the maximization on the RHS since
there is some $w\in V'$ with $c_{rw}\leq B$, and 
$\rorientlpopt[r]=\pi_r\leq\rorientlpopt[w]$.
It is clear that $\rorientlpopt\geq\max_{w:c_{rw}\leq B}\rorientlpopt[w]$ since 
\eqref{rorlpw} is obtained by fixing certain variables of \eqref{rorlp}.

For the other direction, let $(\bx,\bz)$ be an optimal solution to \eqref{rorlp}. We show
that the objective value of $(\bx,\bz)$ can be written as the $\bz^w_w$-weighted convex
combination of the objective values of feasible solutions to \eqref{rorlpw}, for nodes
$w\in V'$ such that $c_{rw}\leq B$. This proves the lemma. 

Consider any $w\in V'$ with $\bz^w_w>0$. It is
immediate that $\bigl(\{\bx^w_a/\bz^w_w\}_{a\in A_G},\{\bz^w_u/\bz^w_w\}_{u\in V'}\bigr)$
yields a feasible solution to \eqref{rorlpw}. Moreover, since \eqref{rorlpw} is feasible,
it must be that $c_{rw}\leq B$: constraints\eqref{conlpw} together with $z_w=1$ imply that
there is an $(r,w)$ flow $f\leq x$ of value $1$. But then the cost of $f$ is at least
$c_{rw}$ (due to a path decomposition of $f$) and at most $\sum_{a\in A_G}c_ax_a\leq B$.
\end{proof}

Consider any $w\in V$ with $c_{rw}\leq B$. 
Define $\UBrooted(w,B):=\min_{\ld\geq 0}\UBrooted(w,B;\ld)$. Recall that 
$\UBrooted(w,B;\ld)=\pi(\bV_w)+\frac{B-\dualub(\ld)}{\ld}$, where 
$\bV_w:=\{u\in V:c_{ru}\leq c_{rw}\}$, and $\dualub(\ld)$ denotes 
$\sum_{S\sse N'}y^{(\ld)}_S$ in the run \BinSearchPCA$\!(\bV_w,c,\pi,B,r,w;\e)$.

\begin{theorem} \label{lpboundsthm}
(a) $\UBrooted(w,B)\geq\rorientlpopt[w]$; and 
(b) the value of the solution returned by our algorithm for the guess $w$ is at least
$\frac{1-\e}{3}\cdot\UBrooted(w,B)$. 
\end{theorem}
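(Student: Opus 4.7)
For part~(a), my plan is to show that every feasible solution $(x,z)$ of \eqref{rorlpw} induces a feasible solution $(x,p)$ to the PCW LP-relaxation $\lpname[(D,c,\tpi(\ld),r)]$---where $D$ is the bidirected graph on $\bV_w$---whose objective value is at most $B + \ld[\pi(\bV_w) - \rorientlpopt[w]]$. Concretely, I will set $p_v := 1 - z_v$ for every $v \in \bV_w \sm \{r\}$ (so $p_w = 0$, killing the large term $\tpi_w p_w$): the preflow constraint on $x$ is inherited from \eqref{rorlpw}, constraint \eqref{preflpw2} ensures $x$ is supported inside $\bV_w$, and $x(\delta^{\into}(S)) + p_u \geq z_u + (1 - z_u) = 1$ gives the coverage constraint. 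A short expansion of $\sum_v \tpi_v(\ld) p_v$ using $\sum_{u \in V'} \pi_u z_u = \rorientlpopt[w] - \pi_r$ for an optimal $(x,z)$ yields the claimed bound. Combining with $\dualub(\ld) \leq \lpopt(D,c,\tpi(\ld),r)$ from Theorem~\ref{ybnd} and rearranging gives $\rorientlpopt[w] \leq \UBrooted(w,B;\ld)$ for every $\ld \geq 0$; minimizing over $\ld$ proves~(a).

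For part~(b), my plan is to redo the proof of Theorem~\ref{pcabsearch} with $\UBrooted(w,B;\ld)$ in place of $\ropt_w$. The definition $\UBrooted(w,B;\ld) = \pi(\bV_w) + (B - \dualub(\ld))/\ld$ rearranges to $\dualub(\ld) = \ld\pi(\bV_w) + B - \ld\,\UBrooted(w,B;\ld)$, which, combined with $\pcval(T_\ld) = c(T_\ld) + \ld\pi(\bV_w \sm V(T_\ld)) \leq \dualub(\ld)$, gives the master inequality
\[
\ld\bigl[\UBrooted(w,B;\ld) - \pi(V(T_\ld))\bigr] \leq B - c(T_\ld).
\]
If \BinSearchPCA terminates returning a single tree $T_\ld$ in step binupper or case~(i), then $c(T_\ld) \leq B$ and the master inequality immediately gives $\pi(V(T_\ld)) \geq \UBrooted(w,B;\ld) \geq \UBrooted(w,B)$; Theorem~\ref{orient-rnd} then extracts a path of reward at least $\UBrooted(w,B)/3 \geq \frac{1-\e}{3}\UBrooted(w,B)$.

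The main obstacle is the two-tree case, where \BinSearchPCA returns $(a,T_{\ld_1}),(b,T_{\ld_2})$ with $ac(T_{\ld_1}) + bc(T_{\ld_2}) = B$ and $\ld_2 - \ld_1 \leq \e\binlb^2\lb$. Multiplying the master inequality at $\ld_i$ by $\gm_i\ld_i$ and summing kills the $B$ terms but leaves the combination $a\ld_1\pi(V(T_{\ld_1})) + b\ld_2\pi(V(T_{\ld_2}))$, whose weights are not of the form required by Theorem~\ref{orient-rnd}. To rebalance them, I would write $\ld_1 = \ld_2 - (\ld_2 - \ld_1)$, divide by $\ld_2$, and use $\UBrooted(w,B;\ld_i) \geq \UBrooted(w,B)$ together with $a\ld_1 + b\ld_2 = \ld_2 - a(\ld_2 - \ld_1)$ to obtain
\[
a\pi(V(T_{\ld_1})) + b\pi(V(T_{\ld_2})) \geq \UBrooted(w,B) - \frac{a(\ld_2 - \ld_1)\bigl[\UBrooted(w,B) - \pi(V(T_{\ld_1}))\bigr]}{\ld_2}.
\]
The remaining estimate uses the sharp bound $\UBrooted(w,B) - \pi(V(T_{\ld_1})) \leq \pi(\bV_w) - \pi_r = \pi(\bV_w \sm \{r\})$, which follows from $\UBrooted(w,B) \leq \pi(\bV_w)$ (taking $\ld \to \infty$ in $\UBrooted(w,B;\ld)$) and $\pi_r \leq \pi(V(T_{\ld_1}))$ (since $r \in V(T_{\ld_1})$). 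Plugging in $\ld_2 \geq \binlb = 1/\pi(\bV_w \sm \{r\})$, $a \leq 1$, and $\ld_2 - \ld_1 \leq \e\binlb^2\lb$ bounds the error by $\e\lb$, and combining part~(a) with $\lb \leq \ropt_w \leq \rorientlpopt[w]$ gives $\e\lb \leq \e\,\UBrooted(w,B)$. Thus $a\pi(V(T_{\ld_1})) + b\pi(V(T_{\ld_2})) \geq (1-\e)\UBrooted(w,B)$, and one final application of Theorem~\ref{orient-rnd} finishes~(b).
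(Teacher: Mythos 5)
Your part (a) proceeds exactly as the paper's proof: exhibit $(x^*|_D, \{1-z^*_v\})$ as a feasible PCW-LP solution, invoke Theorem~\ref{ybnd}, and rearrange. The only point worth spelling out is the one you flag, that $p_w = 1-z_w^* = 0$ so the artificially large $\tpi_w = n c_{\max}$ contributes nothing.

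For part (b), your two-tree argument is a genuine (and cleaner) reorganization of the paper's. You repackage \eqref{pcbnd} as the master inequality $\ld\bigl[\UBrooted(w,B;\ld)-\pi(V(T_\ld))\bigr]\leq B-c(T_\ld)$, take the $\gm_i$-combination so that the budget terms cancel via $a\,c(T_{\ld_1})+b\,c(T_{\ld_2})=B$, and then rebalance $a\ld_1$ toward $a\ld_2$. This sidesteps the paper's case split on the sign of $B-a\dualub(\ld_1)-b\dualub(\ld_2)$ (which the paper needs in order to justify replacing $\ld_2$ by $a\ld_1+b\ld_2$ in a denominator). Your final estimates ($\UBrooted(w,B)-\pi(V(T_{\ld_1}))\leq\pi(\bV_w\sm\{r\})$, $\binlb\cdot\pi(\bV_w\sm\{r\})=1$, and $\lb\leq\UBrooted(w,B)$ via part (a)) close the argument correctly. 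So in the two-tree case you obtain the same bound via a more streamlined chain.

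There is, however, a small but genuine misstep in the single-tree case. You assert that whenever \BinSearchPCA returns a single tree in step~\eqref{binupper} or case~(i), ``$c(T_\ld)\leq B$ and the master inequality immediately gives $\pi(V(T_\ld))\geq\UBrooted(w,B;\ld)$.'' This is fine for case~(i) where $c(T_\ld)=B$, so the right-hand side of the master inequality vanishes. But in step~\eqref{binupper} you may have $c(T_\binub)<B$ strictly, in which case the master inequality only yields $\UBrooted(w,B;\binub)-\pi(V(T_\binub))\leq (B-c(T_\binub))/\binub$, a nonnegative upper bound that does \emph{not} force $\pi(V(T_\binub))\geq\UBrooted(w,B;\binub)$. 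The correct argument here (the one the paper gives) is to use that $T_\binub$ is a spanning arborescence of $\bV_w$, so $\pi(V(T_\binub))=\pi(\bV_w)\geq\UBrooted(w,B)$---the latter because, as you yourself note in the two-tree case, $\UBrooted(w,B)\leq\lim_{\ld\to\infty}\UBrooted(w,B;\ld)=\pi(\bV_w)$. You already have the needed ingredient; you simply misattributed its role to the master inequality.
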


Theorem~\ref{lpboundsthm} and Lemma~\ref{rorlpopt} together imply that the solution returned
by Theorem~\ref{orient-thm} has value at 
least $\frac{1-\e}{3}\cdot\UBrooted(B)$, and $\UBrooted(B)\geq\rorientlpopt$.

\begin{proof}
If $w=r$, then (a) and (b) trivially hold since
$\UBrooted(r,B;\ld)=\pi_r=\rorientlpopt[r]$ for all $\ld\geq 0$. So suppose $w\in V'$. 
Recall that \BinSearchPCA$\!(\bV_w,c,\pi,B,r,w;\e)$ calls \IterPCA on instances
$(D,c,\tpi(\ld),r)$ for varying $\ld$, where $D=(\bV_w,A)$ is the digraph obtained by
restricting $G$ to $\bV_w$ and bidirecting the edges so that $(u,v)$, $(v,u)$ both get
cost $c_{uv}$.

Let $(x^*,z^*)$ be an optimal solution to \eqref{rorlpw}. 
Note that $z^*_v\leq 1$ for all $v\in V'$, due to constraint \eqref{conlpw} for the set
$S=V'$. 
Observe that $\bigl(x^*|_D=\{x^*_a\}_{a\in A},\{p_v=1-z^*_v\}_{v\in\bV_w}\bigr)$ is a
feasible solution to $\lpname[(D,c,\tpi)]$. So by Theorem~\ref{ybnd}, 
we have
$\dualub(\ld)\leq\sum_{a\in A}c_ax^*_a+\sum_{v\in\bV_w}\ld\pi_V(1-z^*_v)
\leq B+\ld\cdot\pi(\bV_w)-\ld\cdot\rorientlpopt[w]$
where the last inequality is because constraints \eqref{preflpw2}, \eqref{conlpw} imply
that $z^*_v=0$ for all $v\notin\bV_w$. Rearranging the inequality yields that
$\rorientlpopt[w]\leq\UBrooted(w,B;\ld)$; since this holds for all $\ld\geq 0$, we also
have $\rorientlpopt[w]\leq\UBrooted(w,B)$. This proves (a).

For (b), suppose that \BinSearchPCA$\!(\bV_w,c,\pi,B,r,w;\e)$ returns the distribution
$(a,T_{\ld_1})$, $(b,T_{\ld_2})$ (where $a$ or $b$ could be $0$).   
Then, by Theorem~\ref{orient-rnd}, the reward of the solution returned is at least
$\bigl(a\cdot\pi(V(T_{\ld_1}))+b\cdot\pi(V(T_{\ld_2}))\bigr)/3$.
We now argue that $a\cdot\pi(V(T_{\ld_1}))+b\cdot\pi(V(T_{\ld_2}))\geq(1-\e)\UBrooted(w,B)$.
If \BinSearchPCA returns in step~\eqref{binupper}---so say $a=0$ and $\ld_2=\binub$---we 
have $\pi(V(T_{\ld_2}))=\pi(\bV_W)\geq\UBrooted(w,B)$. 
Otherwise, we have $a\cdot c(T_{\ld_1})+b\cdot c(T_{\ld_2})=B$ (this
holds even when \BinSearchPCA returns in step~\eqref{casei}), and
$\binlb\leq\ld_1\leq\ld_2\leq\ld_1+\e\cdot\binlb\cdot\min\{1,\binlb\}\cdot\lb$.
From \eqref{pcbnd}, we have that 
%
\begin{equation*}
c(T_{\ld_1})+\ld_1\cdot\pi(\bV_w\sm V(T_{\ld_1})) \leq
\dualub(\ld_1), \qquad  
c(T_{\ld_2})+\ld_2\cdot\pi(\bV_w\sm V(T_{\ld_2})) \leq 
\dualub(\ld_2). 
\end{equation*}
%
Multiplying the first inequality by $a$, and the second one by $b$, and adding and
simplifying, gives 
\begin{alignat}{1}
B+\ld_1\Bigl[a\cdot\pi(\bV_w\sm V(T_{\ld_1}))+b\cdot\pi(\bV_w\sm V(T_{\ld_2}))\Bigr]
& \leq a\dualub(\ld_1)+b\dualub(\ld_2) \label{ineq1} \\
\implies \quad B+\ld_1\cdot\pi(\bV_w)
-\ld_1\Bigl[a\cdot\pi(V(T_{\ld_1}))+b\cdot\pi(V(T_{\ld_2}))\Bigr] 
& \leq a\dualub(\ld_1)+b\dualub(\ld_2) \label{ineq2}
\end{alignat}
If $B\geq a\dualub(\ld_1)+b\dualub(\ld_2)$, then \eqref{ineq2} implies that 
$a\cdot\pi(V(T_{\ld_1}))+b\cdot\pi(V(T_{\ld_2}))\geq\pi(\bV_w)\geq\UBrooted(w,B)$.
Otherwise, from \eqref{ineq1}, we obtain that 
\begin{equation*}
\begin{split}
B+\ld_2\Bigl[a\cdot\pi(\bV_w\sm V(T_{\ld_1}))+b\cdot\pi(\bV_w\sm V(T_{\ld_2}))\Bigr]
& \leq a\dualub(\ld_1)+b\dualub(\ld_2)+(\ld_2-\ld_1)\pi(\bV_w\sm\{r\}) \\
& \leq a\dualub(\ld_1)+b\dualub(\ld_2)+\e\cdot\binlb^2\cdot\lb\cdot\pi(\bV_w\sm\{r\}) \\
& \leq a\dualub(\ld_1)+b\dualub(\ld_2)+\e\ld_2\cdot\UBrooted(w,B).
\end{split}
\end{equation*}
The last inequality above follows since $\binlb\leq\ld_2$,
$\binlb\cdot\pi(\bV_w\sm\{r\})=1$, and 
$\lb\leq\ropt_w\leq\UBrooted(w,B)$. 
Dividing the above by $\ld_2$ and rearranging yields,
\begin{equation*}
\begin{split}
a\cdot\pi(V(T_{\ld_1}))&+b\cdot\pi(V(T_{\ld_2}))
\geq \pi(\bV_w)+\frac{B-a\dualub(\ld_1)-b\dualub(\ld_2)}{\ld_2}-\e\cdot\UBrooted(w,B) \\
& \geq \pi(\bV_w)+\frac{B-a\dualub(\ld_1)-b\dualub(\ld_2)}{a\ld_1+b\ld_2}-\e\cdot\UBrooted(w,B) \\
& \geq \pi(\bV_w)
+\min\biggl\{\frac{B-\dualub(\ld_1)}{\ld_1},\frac{B-\dualub(\ld_2)}{\ld_2}\biggr\}
-\e\cdot\UBrooted(w,B) \geq (1-\e)\UBrooted(w,B).
\end{split}
\end{equation*}
The second inequality is because we are in the case where 
$B-a\dualub(\ld_1)-b\dualub(\ld_2)<0$, and the final inequality follows because
$\UBrooted(w,B)$ is the minimum of $\UBrooted(w,B;\ld)$ over all $\ld\geq 0$.
\end{proof}

\section{Improved analysis of \ptp-orienteering algorithm
  in~\cite{FriggstadS17}} \label{append-ptpimprov}. 
Friggstad and Swamy~\cite{FriggstadS17} modify their LP-relaxation for rooted orienteering
to obtain an LP-relaxation for P2P orienteering, using which they obtain an approximation
factor of $6$ for P2P orienteering. We show that their analysis can be improved to show an
approximation factor of $4$.

While the P2P-orienteering LP in~\cite{FriggstadS17} does not require any
``guesswork'' (i.e., enumeration steps), to get to the heart of their approach,
and our improved analysis, we assume that we know (or ``guess'') the node $w$ on the
optimal $r$-$t$ path with largest $c_{ru}+c_{ut}$ value. 
Recall that $\bV^\ptp_w=\{u\in V: c_{ru}+c_{ut}\leq c_{rw}+c_{wt}\}$, and
$\roptptp_w$ is the maximum reward of an $r$-$t$ path visits $w$, and only visits
nodes in $\bV^{\ptp}_w$.

In~\cite{FriggstadS17}, two weighted collections of trees, 
$\bigl(\gm_i^{(r)},T_i^{(r)}\bigr)_{i=1,\ldots,k}$ and 
$\bigl(\gm_i^{(t)},T_i^{(t)}\bigr)_{i=1,\ldots,\ell}$, are extracted from the LP solution,
such that: 
\begin{enumerate}[(a), topsep=1ex, itemsep=0.25ex, leftmargin=*]
\item all $\gm_i^{(r)}$ and $\gm_i^{(t)}$ values are nonnegative, and
$\sum_{i=1}^k\gm_i^{(r)}=1=\sum_{i=1}^\ell\gm_i^{(t)}$;
\item each $T_i^{(r)}$ and $T_i^{(t)}$ tree contains $w$, and contains only nodes from
  $\bV^\ptp_w$. 
\item each $T_i^{(r)}$ tree is rooted at $r$, and each $T_i^{(t)}$ tree is rooted at $t$;
\item $\sum_{i=1}^k\gm_i^{(r)}c(T_i^{(r)})+\sum_{i=1}^\ell\gm_i^{(t)}c(T_i^{(t)})\leq B$;
and
\item $\sum_{i=1}^k\gm_i^{(r)}\pi\bigl(V(T_i^{(r)})\bigr)
+\sum_{i=1}^\ell\gm_i^{(t)}\pi\bigl(V(T_i^{(t)})\bigr)\geq\roptptp_w$. 
\end{enumerate}

We convert each $T_i^{(r)}$ tree to an $r$-$w$ path $Q_i^{(r)}$, and
extract an $r$-rooted path $P_i^{(r)}$ from $Q_i^{(r)}$ ending at some node in $\bV^\ptp_w$ such
that $\creg(P_i^{(r)})\leq B-c_{rw}-c_{wt}$, and 
$\pi\bigl(V(P_i^{(r)})\bigr)\geq
\pi\bigl(V(Q_i^{(r)})\bigr)/\bigl(\frac{\creg(Q_i^{(r)})}{B-c_{rw}-c_{wt}}+1\bigr)$  
(see Lemma 5.1 in~\cite{BlumCKLMM07}, or Lemma 2.2 in~\cite{FriggstadS14});
note that $P_i^{(r)}$ can be completed to an $r$-$t$ path of cost at most $B$.
Similarly, we convert each $T_i^{(t)}$ tree to $t$-$w$ path $Q_i^{(t)}$, and
extract a $t$-rooted path $P_i^{(t)}$ from $Q_i^{(t)}$ ending at some node in $\bV^\ptp_w$
such that $\creg(P_i^{(t)})\leq B-c_{rw}-c_{wt}$, and 
$\pi\bigl(V(P_i^{(t)})\bigr)\geq
\pi\bigl(V(Q_i^{(t)})\bigr)/\bigl(\frac{\creg(Q_i^{(t)})}{B-c_{rw}-c_{wt}}+1\bigr)$;  
note that $P_i^{(t)}$ can be completed to an $r$-$t$ path of cost at most $B$.

We claim that the best of the $P_i^{(r)}$ and $P_i^{(t)}$ paths earns reward at least
$\roptptp_w/4$. The maximum reward earned by one of these paths is at least 
\begin{alignat*}{1}
\max\ \biggl\{&\max\,\biggl\{\frac{\pi\bigl(V(Q_i^{(r)})\bigr)}{\frac{\creg(Q_i^{(r)})}{B-c_{rw}-c_{wt}}+1}:
i=1,\ldots,k\biggr\},\ 
\max\,\biggl\{\frac{\pi\bigl(V(Q_i^{(t)})\bigr)}{\frac{\creg(Q_i^{(t)})}{B-c_{rw}-c_{wt}}+1}:
i=1,\ldots,\ell\biggr\}\biggr\} \\[0.5ex]
&
\geq\frac{\sum_{i=1}^k\gm_i^{(r)}\pi\bigl(V(Q_i^{(r)})\bigr)+\sum_{i=1}^\ell\gm_i^{(t)}\pi\bigl(V(Q_i^{(t)})\bigr)}
{\sum_{i=1}^k\gm_i^{(r)}\cdot\bigl(\frac{\creg(Q_i^{(r)})}{B-c_{rw}-c_{wt}}+1\bigr)+
\sum_{i=1}^\ell\gm_i^{(t)}\cdot\bigl(\frac{\creg(Q_i^{(t)})}{B-c_{rw}-c_{wt}}+1\bigr)} 
\tag{using (a)} \\[0.5ex]
& \geq
\frac{\roptptp_w}{\sum_{i=1}^k\gm_i^{(r)}\cdot\frac{2c(T_i^{(r)})-2c_{rw}}{B-c_{rw}-c_{wt}}
+\sum_{i=1}^\ell\gm_i^{(t)}\cdot\frac{2c(T_i^{(t)})-2c_{wt}}{B-c_{rw}-c_{wt}}+2}
\tag{using (a) and (e)} \\[0.5ex]
& \geq
\frac{\roptptp_w}{\frac{2B-2c_{rw}-2c_{wt}}{B-c_{rw}-c_{wt}}+2}=\frac{\roptptp_w}{4}.
\tag{using (a) and (d)}
\end{alignat*}

However, it is not immediate as to how one can leverage the above improved analysis via
our combinatorial PCA algorithm. The roadblock is property (d), which is a combined
guarantee on the $\gm$-weighted average cost of a tree in the two collections, and it is
not clear how one can obtain this by running \BinSearchPCA with roots $r$ and $t$
separately.  
One way of circumventing this obstacle is as follows. Let $P^*_w$ be the P2P-orienteering
solution yielding reward $\roptptp_w$. We guess (within some factor) the regrets of the
$r$-$w$ and $w$-$t$ portions of $P^*_w$, which add up to $B-c_{rw}-c_{wt}$, and run
\BinSearchPCA$\!(\bV^\ptp_w,c,\pi,\cdot,r,w;\e)$ and
\BinSearchPCA$\!(\bV^\ptp_w,c,\pi,\cdot,t,w;\e)$ with the corresponding length bounds. An
analysis similar to above can then be used to show that this yields a
$4/(1-\e)$-approximation.

\section{Computational results continued: omitted tables and further discussion} 
\label{sec:tables} \label{append-compres}
We complete our discussion of computational results by listing here the 
tables omitted from Section~\ref{compres}, 
which report the detailed results of our computational experiments for the following
datasets.
\begin{enumerate}[label=$\bullet$, topsep=1ex, itemsep=0.5ex, leftmargin=*]
\item For cycle orienteering, Table~\ref{cyc-gen2} contains the results for the Gen 2 data
  sets.
\item For P2P orienteering, Tables~\ref{ptp-tsil} and~\ref{ptp-chao} list the results
  obtained for the instances from Tsiligirides~\cite{Tsili84} (where we do have optimal
  values) and Chao~\cite{Chao96} (where we do not have optimal values) respectively.
\end{enumerate}
 
For cycle orienteering, we use optimal values as reported in~\cite{Kobeaga21}. Most of 
these were first computed in \cite{Fischetti98}. The distance bound used in
\cite{Fischetti98} for the dataset \texttt{gr229} 
seems to be incorrectly reported (it is far too small to support the optimum value they
claim), so we use the bound from \cite{Kobeaga21}. Further, \cite{Kobeaga21} include three
TSPLIB data sets with at most $400$ nodes that we not considered in \cite{Fischetti98}:
\texttt{berlin52}, \texttt{tsp225}, and \texttt{a280}. We include these datasets in our
experiments for cycle orienteering.

\begin{table}[t!]
\begin{center}
{\small
    \caption{Cycle orienteering results - Gen 2 \label{cyc-gen2}}

    \medskip\medskip
    \begin{tabular}{l|l|l|l|l|l|l|l}%
    \bfseries Dataset & {\boldmath $B$} & {\boldmath $\val$} & {\boldmath $\opt$} &
    {\boldmath $\ub$} & {\boldmath $\opt/\val$} & {\boldmath $\ub/\opt$} & {\boldmath $\ub/\val$} 
    \\\hline \hline \vspace{-4.325mm}
    \csvreader[head to column names]{csv/cycles-gen2.csv}{}
    {\\\hline\csvcoli & \csvcoliii & \csvcoliv & \csvcolvii & \csvcolvi & \csvcolviii  & \csvcolx &\csvcolix}
    \end{tabular}
}
\end{center}
\end{table}

\begin{table}[t!]
\begin{center}
{\small
    \caption{P2P orienteering results for the instances in~\cite{Tsili84} with varying
      distance bounds. \label{ptp-tsil}} 

    \medskip\medskip
    \begin{tabular}{l|l|l|l|l|l|l|l}%
    \bfseries Dataset & {\boldmath $B$} & {\boldmath $\val$} & {\boldmath $\opt$} &
    {\boldmath $\ub$} & {\boldmath $\opt/\val$} & {\boldmath $\ub/\opt$} & {\boldmath $\ub/\val$} 
    \\\hline \hline \vspace{-4.325mm}
    \csvreader[head to column names]{csv/p2p_21.csv}{}
    { 
      \csviffirstrow{\\ \hline \multirow{11}{*}{\begin{tabular}{c} 21 nodes\end{tabular}}
      & \csvcoli & \csvcolvii & \csvcolii & \csvcolix & \csvcolx & \csvcolxii &
      \csvcolxi}
      {\\ \cline{2-8} & \csvcoli & \csvcolvii & \csvcolii & \csvcolix & \csvcolx & \csvcolxii &
      \csvcolxi}
    } 
    \csvreader[head to column names]{csv/p2p_32.csv}{}
    { 
      \csviffirstrow{\\ \hline \multirow{18}{*}{\begin{tabular}{c} 32 nodes\end{tabular}}
      & \csvcoli & \csvcolvii & \csvcolii & \csvcolix & \csvcolx & \csvcolxii &
      \csvcolxi}
      {\\ \cline{2-8} & \csvcoli & \csvcolvii & \csvcolii & \csvcolix & \csvcolx & \csvcolxii &
      \csvcolxi}
    }
    \csvreader[head to column names]{csv/p2p_33.csv}{}
    { 
      \csviffirstrow{\\ \hline \multirow{20}{*}{\begin{tabular}{c} 33 nodes\end{tabular}}
      & \csvcoli & \csvcolvii & \csvcolii & \csvcolix & \csvcolx & \csvcolxii &
      \csvcolxi}
      {\\ \cline{2-8} & \csvcoli & \csvcolvii & \csvcolii & \csvcolix & \csvcolx & \csvcolxii &
      \csvcolxi}
    }
    \end{tabular}
}
\end{center}
\end{table}

\begin{table}[t!]
\begin{center}
{\small
    \caption{P2P orienteering results for the instances in~\cite{Chao96} with varying
      distance bounds. \label{ptp-chao}}  

    \medskip\medskip
    \begin{tabular}{l|l|l|l|l|l|l}%
    \bfseries Dataset & {\boldmath $B$} & {\boldmath $\val$} & {\boldmath $\mathsf{Best}\text{-}\val$} &
    {\boldmath $\ub$} & {\boldmath $\ub/\val$} & {\boldmath $\ub/\mathsf{Best}\text{-}\val$} 
    \\\hline \hline \vspace{-4.325mm}
    \csvreader[head to column names]{csv/p2p_64.csv}{}
    { 
      \csviffirstrow{\\ \hline \multirow{14}{*}{\begin{tabular}{c} 64 nodes\end{tabular}}
      & \csvcoli & \csvcolviii & \csvcolii & \csvcolx & \csvcolxi & \csvcolxii}
      {\\ \cline{2-7} & \csvcoli & \csvcolviii & \csvcolii & \csvcolx & \csvcolxi & \csvcolxii}
    } 
    \csvreader[head to column names]{csv/p2p_66.csv}{}
    { 
      \csviffirstrow{\\ \hline \multirow{26}{*}{\begin{tabular}{c} 66 nodes\end{tabular}}
      & \csvcoli & \csvcoliv & \csvcolii & \csvcolvi & \csvcolvii & \csvcolviii}
      {\\ \cline{2-7} & \csvcoli & \csvcoliv & \csvcolii & \csvcolvi & \csvcolvii & \csvcolviii}
    } 
    \end{tabular}
}
\end{center}
\end{table}

\end{document}